\newtheorem{lemma}{Lemma}
\newtheorem{theorem}{Theorem}
\newtheorem{definition}{Definition}
\newcommand{\fa} {\mathcal{A}}
\newcommand{\fb} {\mathcal{B}}
\newcommand{\fc} {\mathcal{C}}
\newcommand{\fd} {\mathcal{D}}
\newcommand{\fe} {\mathcal{E}}
\newcommand{\ff} {\mathcal{F}}
\newcommand{\fl} {\mathcal{L}}
\newcommand{\fp} {\mathcal{P}}
\newcommand{\fs} {\mathcal{S}}
\newcommand{\security}{\lambda\xspace}
\newcommand{\sig}{k_s\xspace}
\newcommand{\hashsize}{k_h\xspace}
\newcommand{\hash}{\texttt{Hash}\xspace}
\newcommand{\msg}{\texttt{HAPPY}\xspace}
\newcommand{\encode}{\texttt{Encode}\xspace}
\newcommand{\distribute}{\texttt{Distribute}\xspace}
\newcommand{\reconstruct}{\texttt{Reconstruct}\xspace}
\newlist{numlist}{enumerate}{10}
\setlist[numlist]{label*=\arabic*.}
\newcommand{\confset}{\fp_{\texttt{conflict}}\xspace}
\newcommand{\hmsmset}{\fp_{\texttt{hmsm}}\xspace}
\newcommand{\pair}[1]{\ensuremath{\langle #1, s_{#1} \rangle}}
\newcommand{\gen}{\texttt{Gen}\xspace}
\newcommand{\eval}{\texttt{Eval}\xspace}
\newcommand{\createwit}{\texttt{CreateWit}\xspace}
\newcommand{\verify}{\texttt{Verify}\xspace}
\newcommand{\stitle}[1]{\vspace{0.5ex} \noindent\textsf{\textbf{#1}}}
\newtcolorbox{mybox}[1][]{
  enhanced,
  colback=white,
  boxsep=0pt,
  #1} 
\newcommand{\rl}[1]{{\color{red}[Ling: #1]}}
\begin{document}

\title{Improved Extension Protocols for Byzantine Broadcast and Agreement~\thanks{
Elaine Shi is supported in part by NSF award 1561209, and part of the work was done when the author was a long-terms visitor in Simons Institute for the ``Proofs, Consensus, and Decentralizing Societ'' program. 
Nitin H. Vaidya is supported in part by NSF award 1849599.
Kartik Nayak and Ling Ren are supported in part by a VMware early career faculty grant.}}

\author[1]{Kartik Nayak}
\author[2]{Ling Ren}
\author[3]{Elaine Shi}
\author[4]{Nitin H. Vaidya}
\author[5]{Zhuolun Xiang~\thanks{Lead author.}}
\affil[1]{Duke University\\ {kartik@cs.duke.edu}}
\affil[2, 5]{University of Illinois at Urbana-Champaign\\ {\{renling, xiangzl\}@illinois.edu}}
\affil[3]{Cornell University\\ {runting@gmail.com}}
\affil[4]{Georgetown University\\ {nitin.vaidya@georgetown.edu}}

\maketitle

\begin{abstract}
    Byzantine broadcast (BB) and Byzantine agreement (BA) are two most fundamental problems and essential building blocks in distributed computing, and improving their efficiency is of interest to both theoreticians and practitioners. 
    In this paper, we study extension protocols of BB and BA, i.e., protocols that solve BB/BA with long inputs of $l$ bits using  lower costs than $l$ single-bit instances.
    We present new protocols with improved communication complexity in almost all settings: authenticated BA/BB with $t<n/2$, authenticated BB with $t<(1-\epsilon)n$, unauthenticated BA/BB with $t<n/3$, and asynchronous reliable broadcast and BA with $t<n/3$.
    The new protocols are advantageous and significant in several aspects. First, they achieve the best-possible communication complexity of $\Theta(nl)$ for wider ranges of input sizes compared to prior results. 
    Second, the authenticated extension protocols achieve optimal communication complexity given the current best available BB/BA protocols for short messages.
    Third, to the best of our knowledge, our asynchronous and authenticated protocols in the setting are the first extension protocols in that setting. 
\end{abstract}

\renewcommand{\paragraph}[1]{\smallskip\stitle{#1}}

\section{Introduction}
\label{sec:intro}
This paper investigates extension protocols \cite{ganesh2016broadcast} for Byzantine broadcast (BB) and Byzantine agreement (BA).
The goal of BB is for some designated party (sender) to send its message to all parties and let them output the same message, despite some malicious parties that may behave in a Byzantine fashion. 
The goal of BA is to let all parties each with an input message output the same message. 
We are interested in designing efficient BB/BA protocols with \emph{long messages} since such protocols are widely used as building blocks for other distributed systems such as multi-party computation \cite{yao1982protocols} and permissioned blockchain \cite{miller2016honey}.
For example, practical blockchain systems typically achieve agreement on large blocks (e.g., 1MB).

A straightforward solution for BB/BA with $l$-bit long messages is to invoke the single-bit BB/BA oracle $l$ times. 
This approach will incur at least $\Omega(n^2 l)$ communication complexity where $n$ is the number of parties, because any deterministic single-bit BB/BA has cost $\Omega(n^2)$ due to a lower bound in \cite{dolev1982bounds}.
Another tempting solution is to run BB/BA on the hash digest and let parties disseminate the actual message to each other.
However, if a linear fraction of parties can be Byzantine (which is the typical assumption),
they can each ask all honest parties for the long message, again forcing the communication complexity to be $\Omega(n^2 l)$.

It turns out that non-trivial techniques are needed to get better than $\Omega(n^2 l)$ or to achieve the optimal communication complexity of $O(nl)$.
These are known in the literature as \emph{extension protocols}, which construct BB/BA with long input messages using a small number of BB/BA primitives for short messages.
In this paper, we focus on the authenticated setting where cryptographic techniques are used. 
Table \ref{table:summary:crypto} summarizes the most closely related works and our new results on authenticated extension protocols. 
(In Appendix \ref{sec:errorfree}, we will present some improvements to unauthenticated extension protocols.)
In Table~\ref{table:summary:crypto}, $n$ is the number of parties, $t$ is the maximum number of Byzantine parties, $l$ is the length of the input, $\fa(l)$ is the communication cost of $l$-bit BA oracle, and $\fb(l)$ is the communication cost of $l$-bit BB or reliable broadcast oracle. 
Here we describe the related works in the table.
Let $\hashsize$ denote output size of the collision-resistant hash function.
For both Byzantine broadcast and agreement in the synchronous setting under $t<n/2$,
recent work proposes cryptographically secure extension protocols with communication cost $O(nl+n\fb(\hashsize)+n^3\hashsize)$ \cite{ganesh2016broadcast, ganesh2017optimal}. 
For the case of $t<n$, the state-of-the-art cryptographically secure BB extension protocols have communication complexity $O(nl+\fb(n\hashsize)+n^2\fb(n\log n))$ \cite{ganesh2016broadcast,ganesh2017optimal}. 
There exist information-theoretic authenticated protocols~\cite{fitzi2006optimally, chongchitmate2018information} but they have worse communication complexity than cryptographic ones.
To the best of our knowledge, there exist no extension protocols in the authenticated and asynchronous setting when the paper is written~\footnote{A concurrent work \cite{cryptoeprint:2020:842} independently developed an extension protocol for validated Byzantine agreement in the authenticated and asynchronous setting.}.

\begin{table}[t!] \centering
\resizebox{\columnwidth}{!}{
\begin{tabular}{|l|l|l|l|l|l|}
\hline
\textbf{Threshold}                
& \textbf{Model}   
& \textbf{Problem}                      
& \begin{tabular}[c]{@{}l@{}}\textbf{Communication} \\ \textbf{Complexity}\end{tabular}            
& \begin{tabular}[c]{@{}l@{}}\textbf{Input range} $l$ \\ \textbf{to reach optimality}\end{tabular}     
& \textbf{Reference} \\ 
\hline
 $t<n/2$  & sync. & 
\begin{tabular}[c]{@{}l@{}}
agreement/broadcast
\end{tabular}
& 
\begin{tabular}[c]{@{}l@{}} 
$O(nl+n\fb(k)+k n^3)$\\ 
$O(nl+\fa(k)+k n^2)$ \quad\footnotemark
\end{tabular} & 
\begin{tabular}[c]{@{}l@{}} 
$\Omega(n^3+kn^2)$\\ 
$\Omega(n^2+kn)$
\end{tabular} & 
\begin{tabular}[c]{@{}l@{}} 
\cite{ganesh2016broadcast, ganesh2017optimal}\\ 
\textbf{This paper} 
\end{tabular}
 \\
\hline 
$t< n$  & sync. & broadcast
& 
$O(nl+\fb(nk)+n^2\fb(n\log n))$
& 
$\Omega(n^5\log n+kn^4\log n)$
& 
\cite{ganesh2016broadcast, ganesh2017optimal} \\
\hline
$t< (1-\varepsilon)n$  & sync. & broadcast
& 
$O(nl+\fb(k)+kn^2+n^3)$
& 
$\Omega(n^2+kn)$
& 
\textbf{This paper} \\
\hline
$t<n/3$  & async. & 
\begin{tabular}[c]{@{}l@{}} 
agreement \\
reliable broadcast
\end{tabular}   
& 
\begin{tabular}[c]{@{}l@{}} 
$O(nl+\fa(k)+k n^2)$\\
$O(nl+\fb(k)+k n^2)$
\end{tabular}   
&
\begin{tabular}[c]{@{}l@{}} 
$\Omega(kn)$\\
$\Omega(kn)$
\end{tabular}   &
\begin{tabular}[c]{@{}l@{}} 
\textbf{This paper}\\ 
\textbf{This paper}
\end{tabular}  
\\ 
\hline
\end{tabular}
}
\caption{Cryptographically Secure Extension Protocols for Byzantine Agreement and Broadcast}
\label{table:summary:crypto}
\vspace{-8pt}
\end{table}
\footnotetext{Our cryptographic BB extension protocol can achieve $O(nl+\fb(k)+\fa(1)+k n^2)$ in Appendix~\ref{sec:cryptobb:half}.}

\paragraph{Contributions.}
Table~\ref{table:summary:crypto} also presents our improved protocols in the respective settings. 
Several cryptographic primitives have been employed in our work and prior works. 
To make the communication costs comparable, we assume that the output length of the involved cryptographic building blocks are on the same order, and are all represented by $k$.
We will justify this decision in Section~\ref{sec:pre}.

All our protocols achieve the optimal communication complexity $O(nl)$ for wider ranges of input sizes (see Table \ref{table:summary:crypto} above for authenticated protocols and Table \ref{table:summary:errorfree} in Appendix \ref{sec:errorfree} for  unauthenticated protocols).
In particular, our synchronous and authenticated protocols achieve $O(nl)$ communication complexity when the input size is at least $l=\Omega(n^2+kn)$.
For comparison, state-of-art protocol in the literature require a factor of $n$ larger input size for the $t<n/2$ case, and a factor of $n^3\log n$ larger input size for $n/2 \leq t < (1-\varepsilon)n$ where $\varepsilon$ is a constant. But a limitation of our protocol is that it cannot achieve $O(nl)$ communication if $\varepsilon=o(1)$. 
As for the round complexity, all our extension protocols only adds $O(1)$ communication rounds, except the one for $t<n/2$ which adds $O(t)$ rounds.
All our protocols only invoke the BB/BA oracle $O(1)$ times.

In addition to reaching optimality under smaller input size, our authenticated extension protocols have the following advantages.

\begin{itemize}[noitemsep,topsep=0pt]
\item The communication complexity of our BA extension protocols is very close to the lower bound $\Omega(nl+\fa(k)+n^2)$.
In addition, under the current best BA primitives for short messages, they achieve best-possible communication complexity.
In order to improve upon our extension protocols, one must invent BA primitives for short messages with cost $o(kn^2)$, which seems challenging as we discuss in Section~\ref{sec:cryptoba:lb}.

\item Our protocols can be easily adapted to the asynchronous setting. 
To the best of our knowledge, these are the first asynchronous authenticated extension protocols.
\footnote{Asynchronous unauthenticated protocol exist and they can be used in the authenticated setting, but the cost would be much higher than our new protocols (refer to Table~\ref{table:summary:errorfree} in Appendix~\ref{sec:errorfree}.)}

\item Their simplicity makes our protocols less error-prone and more appealing for practical adoption.
On this note, in deriving our results, we discover a flaw in the prior best protocol \cite{ganesh2016broadcast,ganesh2017optimal} and we provide a simple fix (Appendix \ref{sec:previous_bug}).
\end{itemize}

\section{Related Work}
\label{sec:related}

\paragraph{Timing and setup assumptions.}
With different security assumptions on the adversary and timing assumptions, Byzantine broadcast and agreement can be solved for different thresholds of the Byzantine parties. 
For the timing assumptions, protocols under both synchrony and asynchrony have been studied.
If a trusted setup like public-key infrastructure (PKI) exists, it is called the authenticated setting;
otherwise, it is the unauthenticated setting.
In the synchronous setting, BB/BA can be solved under $t<n/3$ without authentication \cite{LSP82};
with authentication, BA can be solved under $t<n/2$ and BB can be solved under $t<n$ \cite{LSP82,dolev1983authenticated,pfitzmann1996information}. In the asynchronous setting, BB is impossible; BA (randomized) and reliable broadcast can be solved under $t<n/3$ with or without authentication \cite{bracha1987asynchronous,cachin2001secure}.

\paragraph{Previous extension protocols.} 
Table \ref{table:summary:crypto} summarizes the two most closely related works on authenticated extension protocols.
Here, we mention several other ones.
Cachin and Tessaro \cite{cachin2005asynchronous} adapt Bracha's broadcast \cite{bracha1987asynchronous} to handle $l$-bit long messages with communication cost $O(nl+\hashsize n^2\log n)$.
Their method partially inspired our work;
but their method does not seem to apply to general protocols and hence does not yield an extension protocol. 
Related unauthenticated extension protocols are summarized in Table~\ref{table:summary:errorfree} in Appendix~\ref{sec:errorfree}.
Liang and Vaidya~\cite{liang2011error} propose the first optimal error-free BB and BA with communication complexity $O(nl+(n^2\sqrt{l}+n^4)\fb(1))$ for the synchronous case.
Patra \cite{patra2011error} improves the communication complexity to $O(nl+n^2\fb(1))$ under synchrony and also extended the protocols to asynchrony with increased communication complexity (see Table~\ref{table:summary:errorfree}).


\paragraph{State-of-the-art oracle schemes.} To better interpret the improvements we obtained for extension protocols, we provide a summary of the state-of-the-art broadcast and agreement protocols that can be used as the oracle in our extension protocol.
Since our extension protocols are all deterministic, we focus on {\em deterministic} solutions for the most part of the paper, 
except for asynchronous BA where randomization is necessary.
The best deterministic solution to authenticated BB for $t<n$ is the classic Dolev-Strong~\cite{dolev1983authenticated} protocol.
After applying multi-signatures, the communication complexity to broadcast $k$ bits is $\fb(k)=\Theta((k+\sig)n^2 + n^3)$ where $\sig$ is the signature size. 
The Dolev-Strong protocol can also be modified to solve authenticated BA for the $t<n/2$ case (BA is impossible if $t \geq n/2$).
Using an initial all-to-all round with multi-signature to simulate the sender, the communication complexity remains as $\fa(k)=\Theta((k+\sig)n^2 + n^3)$.
In the unauthenticated setting, only $t<n/3$ Byzantine parties can be tolerated and Berman et al.~\cite{berman1992bit} achieves $\fb(1)=\fa(1)=\Theta(n^2)$ (when $t=\Theta(n)$), matching the lower bound on communication complexity.

In the asynchronous setting, Bracha's reliable broadcast~\cite{bracha1987asynchronous} is deterministic and has communication complexity $\fb(1) = O(n^2)$.
Randomization is necessary for asynchronous BA given the FLP impossibility \cite{fischer1985impossibility}.
State-of-art protocols rely on ``common coins'' to provide shared randomness but are deterministic otherwise.
The most efficient unauthenticated asynchronous BA \cite{mostefaoui2015signature} achieves expected communication complexity $\fa(1) = O(n^2)$ assuming a common coin oracle.
The most efficient authenticated asynchronous BA \cite{abraham2018validated} achieves expected communication complexity $\fa(k)=O((k+\sig)n^2)$ and provides a construction for the common coin oracle. 


\paragraph{Coding schemes in consensus systems.}
Several works have taken advantage of coding schemes in practical fault-tolerant consensus systems. 
HoneyBadgerBFT \cite{miller2016honey} and BEAT \cite{duan2018beat} use the reliable broadcast proposed by Cachin and Tessaro \cite{cachin2005asynchronous} as a component for broadcasting blocks efficiently. 
Recent works also apply erasure coding to crash-tolerant systems like Paxos \cite{mu2014paxos} and Raft \cite{wang2020craft}.

\section{Preliminaries}\label{sec:pre}
We consider $n$ parties $P_1,...,P_n$ connected by a reliable, authenticated all-to-all network, where up to $t$ parties may be corrupted by an adversary $A$ and behave in a Byzantine fashion.
We consider both the synchronous model, where there exists a known upper bound on the communication and computation delay, and the asynchronous model, where such an upper bound does not exist.
We consider a {\em static} adversary which decides the set of corrupted parties at the beginning of the execution. 
We denote parties that are not corrupted by the adversary as honest parties.
Two types of the adversary are considered: a computationally bounded adversary is considered in cryptographically secure protocols and a computationally unbounded adversary is considered in the error-free protocols.
Our cryptographically secure protocols additionally assume a trusted setup for a public key infrastructure (PKI) and cryptographic accumulators (see Section \ref{sec:tools}).
The {\em communication complexity} \cite{yao1979some} of the protocol is measured by the worst-case or expected number of bits transmitted by the honest parties according to the protocol specification over all possible executions under any adversary strategy.
Here, we provide the formal definition of Byzantine broadcast (BB) and Byzantine agreement (BA). 


\begin{definition}[Byzantine Broadcast]
    A protocol for a set of parties $\fp = \{P_1, ..., P_n\}$, where a distinguished party called the sender $P_s\in \fp$ holds an initial $l$-bit input $m$, is a Byzantine broadcast protocol tolerating an adversary $A$, if the following properties hold
    \begin{itemize}[noitemsep,topsep=0pt]
        \item Termination. Every honest party  outputs a message.
        \item Agreement. All the honest parties output the same message.
        \item Validity. If the sender is honest, all honest parties output the message $m$.
    \end{itemize}
    \label{defn:BB}
\end{definition}

\begin{definition}[Byzantine Agreement]
    A protocol for a set of parties $\fp = \{P_1, ..., P_n\}$, where each party $P_i\in \fp$ holds an initial $l$-bit input $m_i$, is a Byzantine agreement protocol tolerating an adversary $A$, if the following properties hold
    \begin{itemize}[noitemsep,topsep=0pt]
        \item Termination. Every honest party  outputs a message.
        \item Agreement. All the honest parties output the same message.
        \item Validity. If every honest party $P_i$ holds the same input message $m$, then all honest parties output the message $m$.
    \end{itemize}
\end{definition}

For cryptographically secure protocols and randomized protocols, the above properties hold except for a negligible probability in the security parameter. For brevity, our theorem statements will not mention this explicitly.

\subsection{Primitives}

\label{sec:tools}

In this section, we define several primitives that will be used in our extension protocols.
Our extension protocols use standard coding and cryptographic schemes from the literature, such as linear error correcting codes, muti-signature schemes and cryptographic accumulators.

\paragraph{Linear error correcting code \cite{reed1960polynomial}.}
\label{sec:rscodes}
We will use standard Reed-Solomon (RS) codes \cite{reed1960polynomial} in our protocols, which is a $(n,b)$ RS code in Galois Field $\mathbb{F}=GF(2^a)$ with $n\leq 2^a-1$.
This code encodes $b$ data symbols from $GF(2^a)$ into codewords of $n$ symbols from $GF(2^a)$, and can decode the codewords to recover the original data.
\begin{itemize}[noitemsep,topsep=0pt]
    \item \texttt{ENC}. Given inputs $m_1,...,m_b$, an encoding function $\texttt{ENC}$ computes $(s_1,...,s_n)=\texttt{ENC}(m_1,...,m_b)$, where $(s_1,...,s_n)$ are codewords of length $n$. By the property of the RS code, knowledge of any $b$ elements of the codeword uniquely determines the input message and the remaining of the codeword. 
    \item \texttt{DEC}. The function $\texttt{DEC}$ computes $(m_1,...,m_b)=\texttt{DEC}(s_1,...,s_n)$, and is capable of tolerating up to $c$ errors and $d$ erasures in codewords $(s_1,...,s_n)$, if and only if $n-b\geq 2c+d$. 
In our protocol, We will invoke $\texttt{DEC}$ with specific values of $c,d$ satisfying the above relation, and $\texttt{DEC}$ will return correct output.
\end{itemize}
In our extension protocols, we will use the above RS codes with $n$ equal the number of all parties, and $b$ equal the number of honest parties, i.e., $b=n-t$.


\paragraph{Multi-signatures \cite{boneh2003aggregate}.}
\label{sec:multisig}
Multi-signature scheme can aggregate $n$ signatures into one signature, therefore reduce the size of signatures.
Given $n$ signatures $\sigma_i=\texttt{Sign}(sk_i, m)$ on the same message $m$ with corresponding public keys $pk_i$ for $1\leq i\leq n$,
a multi-signature scheme can combine the $n$ signatures above into one signature $\Sigma$ where $|\Sigma|=|\sigma_i|$.
The combined signature can be verified by anyone using a verification function $\texttt{Ver}(PK, \Sigma, m, \fl)$, where $\fl$ is the list of signers and $PK$ is the union of $n$ public keys $pk_i$.

\paragraph{Cryptographic accumulators \cite{baric1997collision, derler2015revisiting}.}
\label{sec:accumulator}
We present the definition of {\em cryptographic accumulators}  proposed by Bari\'c and Pfitzmann \cite{baric1997collision}.
Intuitively, the cryptographic accumulator constructs an accumulation value for a set of values and can produce a witness for each value in the set.
Given the accumulation value and a witness, any party can verify if a value is indeed in the set. 
Formally, given a parameter $k$, and a set $\fd$ of $n$ values $d_1,...,d_n$, an accumulator has the following components:
\begin{itemize}[noitemsep,topsep=0pt]
    \item $\gen(1^k, n)$:
    This algorithm takes a parameter $k$ represented in unary form $1^k$ and an accumulation threshold $n$ (an upper bound on the number of values that can be accumulated securely), returns an accumulator key $ak$. This step is run by a trusted dealer, so the accumulator key $ak$ is known to all parties.
    \item $\eval(ak, \fd)$:
    This algorithm takes an accumulator key $ak$ and a set $\fd$ of values to be accumulated, returns an accumulation value $z$ for the value set $\fd$.
    \item $\createwit(ak, z, d_i)$:
    This algorithm takes an accumulator key $ak$, an accumulation value $z$ for $\fd$ and a value $d_i$, returns $\bot$ if $d_i\notin \fd$, and a witness $w_i$ if $d_i\in \fd$.
    \item $\verify(ak, z, w_i, d_i)$: 
    This algorithm takes an accumulator key $ak$, an accumulation value $z$ for $\fd$, a witness $w_i$ and a value $d_i$, returns {\em true} if $w_i$ is the witness for $d_i\in \fd$, and {\em false} otherwise.
\end{itemize}

For simplicity, our definition of the cryptographic accumulator above omits the auxiliary information $aux$ that appears in the standard definition \cite{baric1997collision} because the bilinear accumulator we will use does not use $aux$.
We also assume that the function \eval is {\em deterministic}, which is the case with the bilinear accumulator.
We give the detailed description of the {\em bilinear accumulator}~\cite{nguyen2005accumulators, kate2010constant} in Appendix \ref{app:bilinear}. 
The bilinear accumulator satisfies the following property.

\begin{lemma}[Collision-free accumulator \cite{nguyen2005accumulators}]
\label{lem:accumulator}
    The bilinear accumulator is collision-free. That is, for any set size $n$ and any probabilistic polynomial-time adversary $\mathcal{A}$, 
    there exists a negligible
    function ${\sf negl}(\cdot)$, such that
    \[
    \Pr\left[
    \begin{array}{l}
    ak \leftarrow \gen(1^k, n),
    (\{d_1,...,d_n\}, d', w') \leftarrow \mathcal{A}(1^k, n, ak), 
    z \leftarrow \eval(ak, \{d_1,...,d_n\}):\\
    (d' \notin \{d_1,...,d_n\})
    \wedge (\verify(ak, z, w', d')=true)
    \end{array}
    \right] \leq {\sf negl}(k)
    \]
\end{lemma}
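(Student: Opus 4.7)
The plan is to prove this by a reduction to the $q$-Strong Diffie-Hellman ($q$-SDH) assumption in bilinear groups, which is the standard hardness assumption underlying the bilinear accumulator of Nguyen. Recall from Appendix~\ref{app:bilinear} that $\gen(1^k,n)$ produces an accumulator key of the form $ak=(g, g^s, g^{s^2}, \ldots, g^{s^n})$ for a secret trapdoor $s$ drawn uniformly from $\mathbb{Z}_p$, that $\eval(ak,\{d_1,\ldots,d_n\}) = g^{\prod_{i=1}^{n}(s+d_i)}$, and that $\verify(ak,z,w,d)$ checks the pairing equation $e(w, g\cdot g^{s\cdot}\text{-combination encoding }s+d) = e(z,g)$.

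Concretely, I would first set up the reduction: suppose an adversary $\mathcal{A}$ outputs $(\{d_1,\ldots,d_n\},d',w')$ with $d' \notin \{d_1,\ldots,d_n\}$ that passes verification against $z=\eval(ak,\{d_1,\ldots,d_n\})$. The verification equation forces $w' = g^{\prod_{i}(s+d_i)/(s+d')}$ in the exponent. The core algebraic step is to apply polynomial division over $\mathbb{Z}_p[x]$ to the polynomial $P(x) = \prod_{i=1}^{n}(x+d_i)$ with divisor $(x+d')$: since $d' \notin \{d_1,\ldots,d_n\}$, the remainder $r = P(-d') = \prod_{i}(d_i-d')$ is nonzero in $\mathbb{Z}_p$, so $P(x) = Q(x)(x+d') + r$ for some explicit polynomial $Q$ of degree $n-1$ whose coefficients the reduction can compute.

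From this, $w' = g^{Q(s)} \cdot g^{r/(s+d')}$. Given a $q$-SDH instance $(g, g^s, \ldots, g^{s^q})$ for $q\geq n$, the reduction can simulate $ak$ perfectly, compute $g^{Q(s)}$ from the given powers, and therefore extract $g^{1/(s+d')} = (w'/g^{Q(s)})^{1/r}$, which together with the value $d'$ yields a valid $q$-SDH solution $(d', g^{1/(s+d')})$. Non-negligible success probability of $\mathcal{A}$ would then translate directly into a non-negligible success probability against $q$-SDH, contradicting the assumption and yielding the negligible bound ${\sf negl}(k)$.

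The proof is largely a rehearsal of Nguyen's original argument, so I do not expect any substantial obstacle, but the one subtlety worth handling carefully is ensuring that the remainder $r$ is nonzero and invertible in $\mathbb{Z}_p$; this follows from $d' \notin \{d_1,\ldots,d_n\}$ together with the fact that all $d_i,d'$ are reduced modulo the prime $p$, which is part of the accumulator's domain specification. A secondary subtlety is making sure that the values $d_i$ output by $\mathcal{A}$ are pairwise distinct (or that collisions among them do not help the adversary); this can be handled by either restricting accumulated sets to distinct values in the construction or by observing that duplicates only shrink the effective set and do not create a valid non-member witness.
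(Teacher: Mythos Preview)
Your reduction to the $q$-SDH assumption is correct and is exactly Nguyen's original argument. Note, however, that the paper does not give its own proof of this lemma at all: it is stated as a cited result from \cite{nguyen2005accumulators} and used as a black box. So there is no ``paper's proof'' to compare against; what you have written is essentially the proof from the cited reference, and it is sound.

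One minor inaccuracy relative to the paper's conventions: in Appendix~\ref{app:bilinear}, $\gen(1^k,n)$ returns only the pairing parameters $ak=(p,\mathbb{G},\mathbb{G}_M,e,g)$, while the tuple $(g^s,\ldots,g^{s^n})$ is distributed separately by the trusted dealer. This does not affect your reduction, since the $q$-SDH challenger supplies exactly that tuple and the reduction can embed it as the public parameters handed to $\mathcal{A}$; you may simply want to align the notation.
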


To better understand the definition of the cryptographic accumulator, it is helpful to note that the Merkle tree \cite{merkle1987digital} is a cryptographic accumulator, where the accumulator key $ak$ is the hash function, the accumulation value $z$ is the Merkle tree root, and the witness $w$ is the Merkle tree proof. We will use the {\em bilinear accumulator}~\cite{nguyen2005accumulators, kate2010constant} instead of Merkle tree in our protocols, since the witness size of the Merkle tree is logarithmic in the number of values whereas the witness size of the bilinear accumulator is a constant.
On the other hand, the bilinear accumulator requires a trusted dealer, which is a stronger trust assumption than public key infrastructure (PKI).
The trusted dealer needs to know an upper bound on $|\fd|$, i.e., the number of items accumulated
(see the construction in Appendix \ref{app:bilinear}).
In our protocols, $|\fd|$ always equals the number of parties $n$.
Hence, the trusted setup (both the PKI and the accumulator) can be reused across invocations if the parties participating in the extension protocol do not change.
If a trusted dealer for accumulators cannot be assumed, our protocol can use Merkle tree as the accumulator; in that case, the $O(kn^2)$ term in the communication complexity becomes $O(kn^2\log n)$ and our protocol still has an advantage (albeit smaller) over prior art. 



\paragraph{Normalizing the length of cryptographic building blocks.}
Let $\security$ denote the security parameter, $k_h=k_h(\security)$ denote the hash size, $k_s=k_s(\security)$ denote the (multi-)signature size, $k_a=k_a(\security)$ denote the size of the accumulation value and witness of the accumulator.
Further let $k=\max(k_h, k_s, k_a)$; we assume $k=\Theta(k_h)=\Theta(k_s)=\Theta(k_a)=\Theta(\security)$ 
This assumption is reasonable since the signature scheme and accumulator scheme with the shortest output length are both based on pairing-friendly curves, which are believed to require $\Theta(\security)$ bits for $\security$-bit security given the state-of-the-art attack \cite{kim2016extended}. 
As for hash functions, it is common to model them as random oracles, in which case $\lambda$-bit security requires $\Theta(\security)$-bit hash size. 
Therefore, throughout the paper, we can use the same variable $k$ to denote the hash size, signature size and accumulator size for convenience.

\section{Cryptographically Secure Extension Protocols under $t<n/2$ Faults}
\label{sec:crypto}
In this section, we design cryptographically secure extension protocols with improved communication complexity for the synchronous and authenticated setting with $t<n/2$ faults. 
We start by presenting some building blocks that will be frequently used in all our authenticated protocols.
Then, we give an extension protocol for synchronous BA with communication complexity $O(nl+\fa(k)+k n^2)$.
Under synchrony, this also implies a BB protocol with $t<n/2$ and the same communication complexity, by first having the sender send the message to all parties and then performing a Byzantine agreement~\cite{lynch1996distributed}. 
In Appendix \ref{sec:cryptobb:half}, we show another extension protocol for $t<n/2$ BB with communication complexity $O(nl+\fb(k)+\fa(1)+k n^2)$.
The protocols are adapted to the asynchronous case in Section~\ref{sec:async}.
At the end of this section, we discuss the small gap between our BA protocol and a simple lower bound on BA with long messages.

\subsection{Building Blocks: \encode, \distribute and \reconstruct}
\label{sec:building_blocks}

We first define three subprotocols \encode, \distribute and \reconstruct that will be used as building blocks for our cryptographically secure extension protocols, listed in Figure \ref{fig:building_blocks}. 

\begin{itemize}[noitemsep,topsep=0pt]
    \item \encode first divides a message $m$ into $b$ blocks, then compute $n$ coded values $(s_1,...,s_n)$ using RS codes (defined in Section \ref{sec:pre}), and attaches an index $j$ for each value $s_j$. 
    The purpose of \encode is to introduce resilience by encoding the message into fault-tolerant coded values -- after applying \encode to a message $m$, even if $n-b$ coded values in $(s_1,...,s_n)$ are erased, one can recover the message from the remaining coded values.
    
    \item \distribute computes a witness $w_j$ for each indexed value $\pair{j}$ in the input set, and sends the $j$-th value with its witness to party $j$.
    The purpose of \distribute is to distribute the values in a robust yet efficient manner -- if at least one honest party that has the correct message $m$ (the accumulation value $z$ of $m$ is correct) invokes \distribute, then it is guaranteed that any honest party $j$ receives and accepts the $j$-th value $s_j$ of $m$, thanks to the witness $w_j$ sent together with the value.
    
    \item \reconstruct first removes any invalid value $s_j$ that cannot be verified by witness $w_j$ and the accumulation value $z$, and then decode the message $m$ using RS code (defined in Section \ref{sec:pre}) from the remaining values with at most $d_0$ values being removed.
    The purpose of \reconstruct is to recover the message, despite the presence of at most $d_0$ corruptions in the value, which will be detected by the accumulator scheme and thus erased.
\end{itemize}

\begin{figure}[tb]
\begin{mybox}
\begin{itemize}
    \item \textbf{$\encode(m, b)$}
    
    Input: a message $m$, a number $b$
    
    Output: $n$ coded values $s_1,...,s_n$
    
    Divide $m$ into $b$ blocks evenly, $m_1,...,m_{b}$, each has $l/b$ bits where $l$ is the length of $m$. Compute $(s_1,...,s_n)=\texttt{ENC}(m_1,...,m_{b})$ using RS codes, where $\texttt{ENC}$ is defined in Section \ref{sec:tools}.
    Add an index to every value in $(s_1,...,s_n)$, i.e., $\fd=(\pair{1},...,\pair{n})$, and return $\fd$.
    
    \item \textbf{$\distribute(\fd, ak, z)$}
    
    Input: a set of indexed values $\fd=(\pair{1},...,\pair{n})$, an accumulator key $ak$, an accumulation value $z$
    
    Compute $w_j=\createwit(ak, z, \pair{j})$ for every $\pair{j}\in \fd$.
    Send $(s_j, w_j)$ to party $P_j$ for every $j\in [n]$.
    
    \item \textbf{$\reconstruct(\fs, ak, z, d_0)$}
    
    Input: $\fs=((\pair{1}, w_1), ..., (\pair{n}, w_n))$ where each $(\pair{i},w_i)$ is a pair of indexed value and witness, an accumulator key $ak$, an accumulation value $z$, a number $d_0$
    
    Output: a message $m$
    
    For every $j\in [n]$, if $\texttt{Verify}(ak, z', w_j, \pair{j})=\text{false}$, let $s_j=\bot$. 
    Apply $\texttt{DEC}$ on the codewords $(s_1,...,s_n)$ with $c=0$ and $d=d_0$, where $\texttt{DEC}$ is defined in Section \ref{sec:tools}.
    Return $m=m_1|...|m_b$ where $m_1,...,m_b$ are the data returned by \texttt{DEC}.
\end{itemize}
\end{mybox}

\caption{Building Blocks}
\vspace{-8pt}
\label{fig:building_blocks}
\end{figure}

Our extension protocols in Sections \ref{sec:crypto} and \ref{sec:crypto:1-eps} will use \encode at the beginning of the protocol to encode the input message to coded values, use \distribute in the middle to let every party distribute their coded values with the witnesses, and use \reconstruct to reconstruct the original input message after receiving the coded values.

\begin{lemma}\label{lem:crypto:accu_value}
    For any message $m$, 
    let $z=\eval(ak, \encode(m, b))$.
    The adversary cannot find $m' \neq m$ such that $z=\eval(ak, \encode(m', b))$ except for negligible probability in $k$.
\end{lemma}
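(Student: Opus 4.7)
The plan is to reduce the statement to the collision-freeness of the bilinear accumulator given in Lemma~\ref{lem:accumulator}. Concretely, from any PPT adversary $\mathcal{A}$ that, given $ak$ and $m$, outputs some $m' \neq m$ with $\eval(ak, \encode(m',b)) = z$, I would build an adversary $\mathcal{A}'$ against Lemma~\ref{lem:accumulator} with essentially the same success probability, whence the claim follows.

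First I would pin down that $\encode$ is injective in its message input: it partitions $m$ into $b$ blocks, applies the Reed--Solomon encoder $\texttt{ENC}$ (an injective linear map by definition of the $(n,b)$ RS code), and then prepends the fixed indices $1, \ldots, n$. Consequently $m \neq m'$ forces the two codeword tuples to differ in at least one coordinate, and since the indices are identical on both sides the sets of indexed values $\fd := \encode(m,b)$ and $\fd' := \encode(m',b)$ satisfy $\fd' \setminus \fd \neq \emptyset$ (in fact they differ in at least $n-b+1$ positions by the RS minimum distance, though only one is needed).

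The reduction $\mathcal{A}'$ then works as follows: on input $ak$, it internally runs $\mathcal{A}$ to obtain $m'$, computes $\fd$ and $\fd'$, picks any $d^{*} \in \fd' \setminus \fd$, and generates a witness $w^{*} \leftarrow \createwit(ak,\, \eval(ak, \fd'),\, d^{*})$ using its knowledge of the full set $\fd'$. It then outputs the triple $(\fd, d^{*}, w^{*})$ as its forgery. By the success hypothesis on $\mathcal{A}$, $\eval(ak, \fd) = z = \eval(ak, \fd')$, so $w^{*}$ verifies for $d^{*}$ under $z$; yet $d^{*} \notin \fd$ by construction. This is exactly the event forbidden by Lemma~\ref{lem:accumulator}, so its probability is negligible in $k$, and hence so is $\mathcal{A}$'s.

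I do not expect a real obstacle here: the only non-trivial ingredient is the injectivity of $\encode$, which is immediate from standard RS properties. Two small points worth flagging in the writeup are that (i) Lemma~\ref{lem:accumulator} is stated for sets of size exactly $n$ (the threshold fixed at setup), which matches our setting since $\encode$ always outputs $n$ indexed values, and (ii) $\mathcal{A}'$ can indeed execute $\createwit$ on $d^{*}$ because it holds the entire set $\fd'$ and thus is not in the $\bot$ case.
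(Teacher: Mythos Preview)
Your proposal is correct and follows essentially the same approach as the paper: both argue that injectivity of RS encoding forces $\encode(m,b)\neq\encode(m',b)$, pick an element in one encoded set but not the other, and use the equality of accumulation values to produce a witness violating Lemma~\ref{lem:accumulator}. Your write-up is more explicit about the reduction adversary, but the underlying argument is the same.
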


\begin{proof}
    Let $\fd=\encode(m, b)$ and $\fd'=\encode(m', b)$. 
    By the RS code, the same codewords correspond to the same message. Thus, if $m \neq m'$, we  have $\fd \neq \fd'$, i.e., there exists $d'=\pair{i}$ such that $d \in \fd$ and $d' \not\in \fd'$.
    However, under the accumulation value $z=\eval(ak, \fd'))$, a witness for $d=\pair{i} \not\in \fd'$ exists.
    Due to Lemma~\ref{lem:accumulator}, this happens with negligible in $k$ probability.  
\end{proof}


\subsection{Byzantine Agreement under $<\frac{n}{2}$ faults}
\label{sec:cryptoba:half}

The protocol Synchronous Crypto. $\frac{n}{2}$-BA is presented in Figure \ref{fig:cryptoba:half}.
In the protocol, let $t$ denote the maximum number of Byzantine parties, and let $b=n-t$.
We briefly describe each step of the protocol.
First, each party encodes its message using RS codes and computes the accumulation value for the set of coded values. 
With a deterministic \eval, any honest party with the same accumulator key and set will produce the same accumulation value.
The RS codes can recover the message with up to $t$ coded values being erased, and the accumulation value uniquely corresponds to the set of coded values and equivalently the original message (Lemma \ref{lem:crypto:accu_value}).
Then every party runs an instance of $k$-bit Byzantine agreement with the accumulation value as the input. 
After the above agreement terminates, each party checks whether the agreement output matches its accumulation value, and inputs the result to an $1$-bit Byzantine agreement instance.
If the above agreement outputs $0$, all parties output $\bot$ and abort.
If the above agreement outputs $1$, then at least one honest party has the accumulation value $z_i$ matching with the agreement output $z$, and every honest party will agree on the message corresponding to $z$. Then in \distribute, all parties send the $j$-th coded value to party $P_j$. After that, each honest party $P_j$ will send a valid $j$-th coded value to all other parties, from which the correct message can be obtained in \reconstruct.
One nice property of our protocol is that, if at least one honest party with message $m$ invokes \distribute, then all honest parties can obtain $m$ from \reconstruct (see the proof of Lemma \ref{lem:cryptoba:half:agreement}).
We prove the validity and agreement  properties and analyze the communication complexity below.

\begin{figure}[tb]
\begin{mybox}
    Input of every party $P_i$: An $l$-bit message $m_i$
    
    Primitives: 
    Byzantine agreement oracle, cryptographic accumulator with  \eval, \createwit, \verify
    
    Protocol for party $P_i$: 
    \begin{numlist}
    \item 
    Compute $\fd_i=(\pair{1},...,\pair{n})=\encode(m_i, b)$.
    Compute the accumulation value $z_i=\texttt{Eval}(ak, \fd_i)$.
    Input $z_i$ to an instance of $k$-bit BA oracle.
    \item When the above BA outputs $z$, if $z=z_i$, set $happy_i=1$, otherwise set $happy_i=0$. Input $happy_i$ to an instance of $1$-bit Byzantine agreement oracle.
    \item 
    \begin{itemize}
        \item If the above BA outputs $0$, output $o_i=\bot$ and abort.
        \item If the above BA outputs $1$ and $happy_i=1$, 
        invoke $\distribute(\fd_i, ak, z)$.
    \end{itemize}
    
    \item 
    For the set of pairs $\{(s_i,w_i)\}$ received from the previous step,
    if there exists a pair $(s_i, w_i)$  such that $\texttt{Verify}(ak, z, w_i, \pair{i})=true$, then send $(s_i, w_i)$ to all other parties.
    
    \item 
    If $happy_i=1$, set $o_i=m_i$. Otherwise, let $(s_j, w_j)$ be the message received from party $P_j$ from the previous step and $\fs_i=((\pair{1},w_1),...,(\pair{n},w_n))$,
    and set $o_i=\reconstruct(\fs_i, ak, z, t)$.
    
    \item Output $o_i$.
    
    \end{numlist}
\end{mybox}

\caption{Protocol Synchronous Crypto. $\frac{n}{2}$-BA}
\vspace{-1em}

\label{fig:cryptoba:half}
\end{figure}

\begin{lemma}\label{lem:cryptoba:half:validity}
    If every honest party has the same input message $m_i=m$, all honest parties output the same message $m$.
\end{lemma}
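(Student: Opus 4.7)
The plan is to chain together the determinism of the encoding/accumulator and the validity guarantees of the two BA oracles that the protocol invokes, so that every honest party ends up in the branch where $o_i = m_i = m$. Nothing here is subtle; the main thing is to trace the execution step by step under the assumption of common honest input.

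First I would observe that \encode is a deterministic function of the message and the parameter $b$, so every honest party $P_i$ computes the same set $\fd_i = \encode(m, b) =: \fd^\star$. Since \eval is also deterministic (as noted in Section~\ref{sec:tools}), every honest party then produces the same accumulation value $z_i = \eval(ak, \fd^\star) =: z^\star$ and feeds $z^\star$ into the first $k$-bit BA instance.

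Next I would invoke the validity property of the BA oracle: because all honest parties hold the common input $z^\star$, the BA output $z$ satisfies $z = z^\star$. Consequently every honest party evaluates the check $z = z_i$ to true and sets $happy_i = 1$. All honest parties therefore supply input $1$ to the subsequent $1$-bit BA instance, and validity of that instance again ensures the output is $1$.

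Finally, I would read off the output rule in Step~5 of the protocol in Figure~\ref{fig:cryptoba:half}: since $happy_i = 1$, each honest party sets $o_i = m_i = m$, bypassing the \reconstruct branch entirely. Thus every honest party outputs $m$, proving the lemma. The only potential sticking point is confirming that the assumed determinism of \eval (which is explicit for the bilinear accumulator used here) is really needed to match accumulation values across honest parties; this is handled by the assumption stated at the end of Section~\ref{sec:accumulator}.
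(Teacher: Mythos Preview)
Your proposal is correct and follows essentially the same approach as the paper's own proof: use determinism of \encode/\eval to get a common accumulation value, apply validity of the first BA to get $z=z^\star$, hence $happy_i=1$ for all honest parties, apply validity of the second BA to get output $1$, and conclude via the $happy_i=1$ branch of Step~5 that $o_i=m$. You are a bit more explicit about the determinism assumptions than the paper, but the argument is identical.
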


\begin{proof}
    If all honest parties have the same input message $m_i=m$, they compute and input the same accumulation value $z$ to the instance of Byzantine agreement in step $1$.
    Then in step $2$, the BA outputs $z$ by the validity condition, and any honest party sets $happy_i=1$. 
    Therefore, every honest party $P_i$ inputs $1$ to the $1$-bit Byzantine agreement oracle in step $2$. By the validity of the Byzantine agreement oracle, the agreement will output $1$. 
    Then any honest party $P_i$ sets $o_i=m$ in step $5$ since $happy_i=1$. 
    Hence, all honest parties output $m$ when the protocol terminates.
\end{proof}

\begin{lemma}\label{lem:cryptoba:half:agreement}
    All honest parties output the same message.
\end{lemma}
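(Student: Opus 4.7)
The plan is to argue agreement by case analysis on the output of the 1-bit Byzantine agreement in Step 2. If that agreement outputs $0$, every honest party sets $o_i = \bot$ in Step 3 and terminates, so agreement holds trivially; I would focus the rest of the proof on the case where the 1-bit BA outputs $1$. In that case, validity of the 1-bit BA guarantees that at least one honest party $P_h$ had input $happy_h = 1$, which means $z_h = z$, where $z$ is the output of the $k$-bit BA from Step 1. By Lemma \ref{lem:crypto:accu_value}, the value $z$ uniquely determines (except with negligible probability) a message $m$ with $z = \eval(ak, \encode(m, b))$, and $P_h$'s input equals this $m$, so $\fd_h = \encode(m, b)$ is a set of coded values consistent with $z$.

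The next step is to use $P_h$'s invocation of \distribute to propagate $m$. Since $P_h$ is honest, every honest party $P_j$ receives from $P_h$ the valid pair $(s_j, w_j)$ corresponding to the $j$-th coded value of $m$ and therefore forwards a valid pair to all parties in Step 4. Hence each honest party $P_k$ collects, from each of the at least $n - t = b$ honest senders, a pair that passes \verify against $z$. The crucial consistency claim is that \emph{every} pair passing \verify against $z$ must be consistent with $\fd_h$: this follows from the collision-freeness of the accumulator (Lemma \ref{lem:accumulator}), which prevents a Byzantine party from forging a valid witness for any index-value pair outside $\fd_h$. I would then conclude that for any honest $P_k$ with $happy_k = 1$, Lemma \ref{lem:crypto:accu_value} forces $m_k = m$, so $o_k = m$; and for any honest $P_k$ with $happy_k = 0$, the input $\fs_k$ to \reconstruct has at most $t$ erased positions, since only Byzantine senders can fail to deliver a valid pair. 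As the RS decoder with $c = 0$ and $d = t$ tolerates up to $n - b = t$ erasures, \reconstruct returns $m$.

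I expect the main obstacle to be ruling out the mixed scenario in which different honest parties might accept valid-looking pairs encoding different messages. Closing this gap is precisely the role of the accumulator's collision-freeness: it forces every verifiable pair to lie in the single set $\fd_h$ determined by $z$, after which agreement reduces to the erasure-count bound above. A secondary subtlety worth flagging during write-up is that Byzantine parties may simply stay silent in Step 4, but this only contributes to erasures and not to decoding errors, and the $t$-bound on Byzantine behavior matches the decoder's tolerance, so the reconstruction still yields $m$.
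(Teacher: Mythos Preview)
Your proposal is correct and follows essentially the same approach as the paper: case-split on the 1-bit BA output, use validity of BA to obtain an honest happy party whose message $m$ corresponds to $z$, invoke accumulator collision-freeness (Lemma~\ref{lem:accumulator}) to rule out inconsistent verified pairs, and then use the RS erasure bound to conclude that \reconstruct returns $m$ for every unhappy honest party. Your write-up is in fact slightly more explicit than the paper's in isolating the ``silent Byzantine'' erasure subtlety, but the underlying argument is the same.
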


\begin{proof}
    If the Byzantine agreement in step $3$ outputs $0$, then all honest parties output the same message $\bot$.
    If the agreement agreement in step $3$ outputs $1$, then by the validity of the Byzantine agreement, some honest party $P_i$ must input $1$ and thus has $z_i=z$. 
    By Lemma \ref{lem:crypto:accu_value}, any honest party $P_i$ with $happy_i=1$ has the identical message $m$ corresponding to $z$, and sets the output to be $m$ at step $5$.
    In step $3$, any honest party $P_i$ with $happy_i=1$ invokes $\distribute$ to compute witness $w_j$ for each index value $\pair{j}$, and sends the valid $(s_j, w_j)$ pair computed from message $m$ to party $P_j$ for every $P_j$.
    By Lemma \ref{lem:accumulator}, the Byzantine parties cannot generate a different pair $(s_j', w_j')$ that can be verified.
    Therefore, in step $4$, every honest party $P_j$ receives at least one valid $(s_j,w_j)$ pair, and forwards it to all other parties. Since there are at least $b=n-t$ honest parties, in step $5$, each honest party will receive at least $b$ valid coded values. 
    In \reconstruct, using the accumulation value associated with the coded value, any party $P_i$ can detect the corrupted values and remove them. 
    By the property of RS codes, any honest party $P_i$ with $happy_i=0$ is able to recover the message $m$, 
    and any honest party $P_i$ with $happy_i=1$ already has the message $m$.
    Therefore all honest parties outputs $m$.
\end{proof}


\begin{theorem}\label{thm:cryptoba:half}
    Protocol Synchronous Crypto. $\frac{n}{2}$-BA satisfies Termination, Agreement and Validity, and has communication complexity $O(nl+\fa(k)+k n^2)$.
\end{theorem}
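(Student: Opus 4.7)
The plan is to combine the two preceding lemmas with a termination argument and a careful accounting of the bits transmitted in each step. Termination is the easiest part: steps~1 and~2 each invoke a BA oracle, which terminates by the termination property of the oracle; steps~3, 4, and~5 involve only a constant number of point-to-point rounds plus local computation (\reconstruct and \verify). So every honest party produces an output $o_i$ after finitely many rounds. Validity is exactly Lemma~\ref{lem:cryptoba:half:validity} and Agreement is exactly Lemma~\ref{lem:cryptoba:half:agreement}, so the correctness part of the theorem reduces to citing these.

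For communication complexity, I would go through the protocol step by step. In step~1, each party feeds a single accumulation value of size $k$ into the BA oracle, contributing $\fa(k)$ bits. In step~2, each party feeds one bit into another BA oracle; this can be absorbed into $\fa(k)$ (or equivalently, charged separately as $\fa(1) \le \fa(k)$). For steps~3 and~4, the key observation is that $\encode$ divides $m_i$ into $b = n-t = \Theta(n)$ blocks, so each coded symbol $s_j$ has size $O(l/n)$, and each witness $w_j$ has size $O(k)$ by our normalization. In \distribute an honest party sends one pair $(s_j, w_j)$ of size $O(l/n + k)$ to each of the $n$ parties, so a single invocation of \distribute costs $O(l + kn)$, and at most $n$ parties invoke it, giving $O(nl + kn^2)$. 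Step~4 has each party forward one valid pair to all $n$ parties, for another $O(n \cdot n \cdot (l/n + k)) = O(nl + kn^2)$. Step~5 is purely local. Summing yields $O(nl + \fa(k) + kn^2)$, as claimed.

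The only mild subtlety I would call out is making sure the charge for step~4 really is $O(nl + kn^2)$: one must check that an honest party indeed forwards at most one pair (the protocol says ``if there exists a pair,'' so one is enough), and that any well-formed forwarded message from a corrupt party is either dropped at the receiver (because \verify fails) or correctly accounted for in the $O(kn^2)$ budget since there are $n^2$ pairwise links each carrying $O(l/n + k)$ bits. Given how clean the protocol structure is, I do not expect any real obstacle here — the whole theorem is essentially a bookkeeping exercise built on top of the two lemmas.
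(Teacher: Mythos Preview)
Your proposal is correct and follows essentially the same approach as the paper: cite the two lemmas for validity and agreement, note termination is immediate, and then account for each step's communication as $\fa(k)$, $\fa(1)\le\fa(k)$, and $O(nl+kn^2)$ for the \distribute and forwarding steps using $b=n-t=\Theta(n)$. The only unnecessary detour is your worry about corrupt parties' forwarded messages in step~4; by the paper's definition communication complexity counts only bits sent by honest parties, so that concern does not arise.
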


\begin{proof}
    
    Termination is clearly satisfied.
    By Lemma \ref{lem:cryptoba:half:agreement}, agreement is satisfied.
    By Lemma \ref{lem:cryptoba:half:validity}, validity is satisfied.
    
    Step $1$ has cost $\fa(k)$, where $k$ is the size of the cryptographic accumulator.
    Step $2$ has cost $\fa(1) \leq \fa(k)$. 
    Step $3$ has cost $O(nl+kn^2)$, since each honest party invokes an instance of \distribute, which leads to an all-to-all communication with each message of size $O(l/b+k)=O(l/n+k)$.
    For step $4$, it also has cost $O(nl+kn^2)$ similarly as step $3$.
    Hence the total cost is $O(nl+\fa(k)+k n^2)$.
\end{proof}

\subsection{Lower Bound on BA for Long Messages}
\label{sec:cryptoba:lb}


Let $\fa(l)$ denote the communication complexity in bits of the best possible deterministic protocol for Byzantine agreement with $l$-bit inputs, $n$ parties, and up to $t=\Theta(n)$ faulty parties.
We show a straightforward lower bound that $\fa(l) = \Omega(nl+\fa(k)+n^2)$ for $l \geq k$ by combining several known lower bounds in the literature.

\begin{theorem}
$\fa(l) = \Omega(nl + \fa(k) + n^2)$ for $l \geq k$.
\end{theorem}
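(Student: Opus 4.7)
The plan is to derive three independent asymptotic lower bounds on $\fa(l)$ --- one for each summand $nl$, $\fa(k)$, and $n^2$ --- and then combine them via the elementary fact that $\max\{a_1,a_2,a_3\}\geq\tfrac{1}{3}(a_1+a_2+a_3)$, so that their sum is also a valid asymptotic lower bound.

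The $\Omega(\fa(k))$ summand follows from a trivial padding reduction: since $l\geq k$, any $k$-bit BA instance can be solved by padding every input with $l-k$ zeros, invoking the claimed $l$-bit protocol, and stripping the padding from each honest output. Termination, agreement, and validity are all preserved, so $\fa(l)\geq\fa(k)$. The $\Omega(n^2)$ summand is inherited from the classical Dolev--Reischuk lower bound~\cite{dolev1982bounds}, which shows $\fa(1)=\Omega(n^2)$ whenever $t=\Theta(n)$; the same padding argument then gives $\fa(l)\geq\fa(1)=\Omega(n^2)$.

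The main obstacle is the $\Omega(nl)$ summand, because BA validity alone does not obviously force any single honest party to receive $\Omega(l)$ bits from the others --- in the all-equal-inputs case, the output is already determined by the party's own input. My plan is to invoke the standard folklore $\Omega(nl)$ lower bound for multi-valued BA used in the long-message BA/BB literature (e.g., implicit in the matching-upper-bound analyses of \cite{liang2011error,patra2011error}). The argument is an indistinguishability/partition one: split the honest parties into two groups $A$ and $B$ of size $\Theta(n)$ each, parameterize executions by pairs $(m_A,m_B)\in\{0,1\}^l\times\{0,1\}^l$ in which every party in $A$ holds $m_A$ and every party in $B$ holds $m_B$, let the adversary silently corrupt a minority of each side, and argue that to preserve agreement over all such pairs while matching validity on the diagonal $m_A=m_B$, the honest parties must on average exchange $\Omega(l)$ bits of cross-group information per party, yielding $\Omega(nl)$ bits in total.

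Combining the three independent bounds gives $\fa(l)=\Omega(\max\{nl,\fa(k),n^2\})=\Omega(nl+\fa(k)+n^2)$, as required. The two padding-based bounds are essentially one-liners; the real work is concentrated in the $\Omega(nl)$ partition argument, which is also where I would expect a referee to scrutinize the writeup most closely.
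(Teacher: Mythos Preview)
Your overall decomposition into three independent lower bounds, combined via $\max\geq\tfrac13\sum$, is exactly what the paper does. The padding argument for $\fa(l)\geq\fa(k)$ and the invocation of Dolev--Reischuk for $\fa(l)\geq\fa(1)=\Omega(n^2)$ match the paper verbatim.

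The only point of divergence is your $\Omega(nl)$ sketch. The paper cites~\cite{fitzi2006optimally} and uses a much simpler two-scenario argument: take $A$ to be a set of $n-t$ parties with input $m$ and $B$ to be the remaining $t$ parties with input $m'\neq m$. In scenario~1, all of $B$ is corrupt but behaves honestly; validity then forces every party in $A$ to output $m$. Scenario~2 is identical except that $B$ is genuinely honest; it is indistinguishable to $A$, so $A$ still outputs $m$, and agreement forces each party in $B$ to output $m$ as well. Hence every one of the $t=\Theta(n)$ parties in $B$ must learn $m$, giving $\Omega(tl)=\Omega(nl)$ bits. Your version---two honest groups of size $\Theta(n)$ with the adversary corrupting ``a minority of each side'' and an averaging argument over the diagonal---does not obviously trigger validity (when $m_A\neq m_B$ the honest parties disagree, so validity is silent), and the ``on average $\Omega(l)$ cross-group bits per party'' claim would need a real information-theoretic argument you have not supplied. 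It can be made to work, but the single-group-fully-corrupt argument is a one-paragraph proof with no averaging needed; I would replace your sketch with it.
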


\begin{proof}
The proof combines several simple lower bounds known in the literature. 

First of all, $\fa(l) = \Omega(nl)$ according to \cite{fitzi2006optimally}. 
We briefly mention the proof idea from \cite{fitzi2006optimally} for completeness.
Let a set $A$ of $n-t$ parties have input $m$ and a set $B$ of the rest $t$ parties have input $m'\neq m$.
In scenario $1$, let parties in $B$ be Byzantine but behave as if they are honest. Then by the validity condition, all parties in $A$ will output $m$.
In scenario $2$, let parties in $B$ be honest. To parties in $A$, the scenario $2$ is indistinguishable from scenario $1$, and thus they will output $m$. By the agreement condition, parties in $B$ also need to output $m$.
Therefore each party in $B$ needs to learn the message $m$, which leads to a lower bound on the communication cost of $\Omega(tl)=\Omega(nl)$.

Secondly, since $\fa(l)$ denotes the communication complexity of a BA oracle with $l$-bit inputs, it is clear that $\fa(l) \geq \fa(k)$ for $l \geq k$.

Finally, according to \cite{dolev1982bounds}, $\Omega(n^2)$ is a lower bound on the communication complexity for any deterministic Byzantine agreement protocol tolerating $t=\Theta(n)$ faults (even for single-bit inputs).
Thus, $\fa(l) \geq \fa(1)= \Omega(n^2)$.

The above lower bounds together imply a lower bound $\fa(l) = \Omega(nl+\fa(k)+n^2)$ for deterministic protocol that solves $l$-bit BA.
\end{proof}

By Theorem \ref{thm:cryptoba:half}, our Protocol Synchronous Crypto. $\frac{n}{2}$-BA has cost $O(nl+\fa(k)+k n^2)$, which is very close to the lower bound.
Although it does not meet the lower bound, we remark that further improvements seem challenging.
Notice that if $\fa(k) = \Omega(kn^2)$, then a lower bound of $\Omega(nl+\fa(k)+kn^2)$ follows, matching our upper bound.
Thus, improving upon our upper bound requires a $k$-bit BA oracle whose communication complexity is $o(kn^2)$.

However, if we were to design an $o(kn^2)$ BA protocol, we have to follow a very particular paradigm.
The $\Omega(n^2)$ lower bound from \cite{dolev1982bounds} is a lower bound on the number of messages.
If every message is signed, then $\Omega(kn^2)$ communication must be incurred. 
Yet, we know authentication is necessary for tolerating minority faults.
Thus, such a protocol must use $\Omega(n^2)$ messages in total but only sign a small subset of them. 
We are not aware of any work exploring this direction, and closing this gap is an interesting open problem.


\section{Cryptographically Secure Extension Protocol under $t<(1-\varepsilon)n$}\label{sec:crypto:1-eps}

In this section, we propose an extension protocol for synchronous and authenticated BB with communication complexity $O(nl+\fb(k)+kn^2+n^3)$ under $t<(1-\varepsilon)n$ where $\varepsilon>0$ is some constant.
The protocol still solves Byzantine broadcast under any $t<n$ faults by setting $b=n-t$, but the communication complexity increases by a factor of $1/\varepsilon$ if $\varepsilon$ is not a constant (see Theorem \ref{thm:majority}).
Thus, compared to state-of-art solutions~\cite{ganesh2016broadcast,ganesh2017optimal}, our protocol is more efficient when $\varepsilon$ is a constant but less efficient otherwise.

\begin{figure}[!t]
\begin{mybox}

    Input of the sender $P_s$: An $l$-bit message $m_s$
    
    Primitives: Byzantine broadcast oracle, cryptographic accumulator with \eval, \createwit, \verify
    
    Protocol for party $P_i$: 
    \begin{numlist}
    \item 
    The sender $P_s$ initializes $o_s=m_s, happy_s=1$, and other parties $P_i$ initialize $o_i=\bot, happy_i=0$.
    The sender
    computes $\fd_s=\encode(m_s, b)$, the accumulator value $z_s=\texttt{Eval}(ak, \fd_s)$, and broadcasts $z_s$ by invoking an instance of $k$-bit Byzantine broadcast oracle.
    Let $z_i$ denote the output of the Byzantine broadcast at party $P_i$.
    
    \item For iterations $r=1,..., t+1$: 
    
    \textbf{Distribution step:}
    
    If $happy_i=1$, then sign the \msg message using the multi-signature scheme, send the multi-signature signed by $r$ distinct parties to all other parties, invoke $\distribute(\fd_i, ak, z_i)$, and skip the Distribution step in all future iterations. 
    
    \textbf{Sharing step:}
    
    If a valid $(s_i, w_i)$ pair is received from the Distribution step such that $\verify(ak, z_i, w_i, \pair{i})=\text{true}$, then send $(s_i, w_i)$ to all other parties and skip the Sharing step in all future iterations.
        
    \textbf{Reconstruction step} (no communication involved):
    
    Let $(s_j, w_j)$ be the first message received from party $P_j$ from the Sharing step (possibly from previous iterations). 
    Let $\fs_i=((\pair{1},w_1),...,(\pair{n},w_n))$. Compute $M_i=\reconstruct(\fs_i, ak, z_i, t)$ and $\fd_i=\encode(M_i, b)$.
    If $\texttt{Eval}(ak, \fd_i)=z_i$ and a \msg message signed by $r$ distinct parties excluding $P_i$ was received in the Distribution step of this iteration,
    then set $happy_i=1$, set $o_i=M_i$, and skip the Reconstruction step in all future iterations.
    
    \item Output $o_i$. 
    \end{numlist}

\end{mybox}
\caption{Protocol Synchronous Crypto. $(1-\varepsilon)$-BB}
\vspace{-1em}

\label{fig:cryptoba:1-eps}
\end{figure}

\stitle{Protocol Synchronous Crypto. $(1-\varepsilon)n$-BB}.
The protocol is presented in Figure \ref{fig:cryptoba:1-eps}, and we briefly explain each step of the protocol.
Again let $t$ denote the maximum number of Byzantine parties and let $b=n-t$.
First the sender encodes its message and computes the accumulation value using the coded values. Then the sender broadcasts the accumulation value via an instance of $k$-bit Byzantine broadcast oracle. By the agreement condition, all honest replicas output the same value for BB.
The remaining of the protocol runs in iterations $r=1,2,...,t+1$. 
Each iteration consists $3$ steps.
The Distribution step, Sharing step and Reconstruction step are analogous to steps $3-5$ in Protocol Synchronous Crypto. $\frac{n}{2}$-BA in Figure \ref{fig:cryptoba:half}, but here each step is examined in every iteration for execution, and is executed only once.
The Distribution step aims to distribute the indexed coded values to other parties.
The Sharing step forwards the correct coded value to other parties.
The Reconstruction step aims to reconstruct the original message from the coded values received from other parties and set the output.
Similar to Protocol Synchronous Crypto. $\frac{n}{2}$-BA, the above steps provide a nice guarantee that if at least one honest party with message $m$ invokes \distribute in the Distribution step, then all honest parties can obtain $m$ in the Reconstruction step (see the proof of Lemma \ref{lem:majority:distribute}).

Now we give a more detailed description.
A party becomes happy (i.e., sets $happy_i=1$) if it is ready to output a message that is not $\bot$.
In the first iteration, only the sender is happy;
it invokes \distribute and also signs and sends a message \msg of a constant size.
The role of the message \msg is to be signed by the rest of the parties using multi-signatures to form a signature chain, similar to the Dolve-Strong Byzantine broadcast algorithm \cite{dolev1983authenticated}.
An honest party becomes happy at the end of iteration $r$, if it reconstructs the correct message (matching the agreed upon accumulation value) in the Reconstruction step of iteration $r$ and has received a \msg message signed by $r$ parties in the Distribution step of iteration $r$.
When an honest party becomes happy, it will set its output to be the reconstructed message $M_i$;
then, in the Distribution step of the next iteration (if there is one), it will also send its own signature of \msg to all other parties, and invoke \distribute.
This way, if an honest party becomes happy in the last iteration $r=t+1$, it can be assured that some honest party has invoked \distribute, so that all honest parties will be ready to output the correct message.
We reiterate that each step is executed at most once in the entire protocol.
Finally, after $t+1$ iterations, every party outputs the message.




\begin{lemma}\label{lem:majority:distribute}
    If any honest party $P_i$ invokes \distribute with message $m$, then every honest party $P_j$ outputs $o_j=m$.
\end{lemma}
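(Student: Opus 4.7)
The plan is to show that for every honest party $P_j$ the output satisfies $o_j = m$ at termination, by combining the collision-freeness of the accumulator with a careful analysis of iteration $r_0$, the iteration in which $P_i$ invokes \distribute. By the agreement property of the $k$-bit BB oracle in step~1, all honest parties agree on a common value $z$; since $P_i$ invokes \distribute only when $happy_i = 1$, we have $\eval(ak, \encode(m, b)) = z$. By Lemma~\ref{lem:crypto:accu_value}, up to negligible probability $m$ is the unique message with $\eval(ak, \encode(m, b)) = z$, so any honest party that ever assigns $o_j \neq \bot$ must assign $o_j = m$.

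It remains to show that every honest $P_j$ does become happy by the end of iteration $r_0$. If $P_j$ is already happy at the start of $r_0$'s Distribution step, we are done by the previous paragraph. Otherwise, in $r_0$'s Distribution step $P_i$ sends $P_j$ a valid pair $(s_j, w_j)$ together with a multi-signature on \msg signed by $r_0$ distinct parties. Any honest $P_k$ that has not yet shared now receives a valid pair from $P_i$ and shares it in $r_0$'s Sharing step; moreover, by Lemma~\ref{lem:accumulator} any pair that passes \verify against $z$ must be the correct coded value of the unique $m$. Hence by the end of $r_0$'s Sharing step $P_j$ has received a correct $(s_k, w_k)$ from each of the at least $b = n - t$ honest parties.

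In $r_0$'s Reconstruction step $P_j$ discards invalid pairs and runs \reconstruct, which by the RS decoding property recovers $M_j = m$; the equality check $\eval(ak, \encode(M_j, b)) = z_j$ then holds since $z_j = z$. For the second happiness condition, $P_j$ needs to have received a multi-signature on \msg by $r_0$ distinct parties excluding itself in $r_0$'s Distribution step. Because the protocol has a party sign \msg only after becoming happy and $P_j$ is assumed not yet happy, $P_j$ has never signed \msg and therefore cannot appear among the $r_0$ signers in $P_i$'s multi-signature. Both conditions are satisfied, so $P_j$ becomes happy and sets $o_j = M_j = m$.

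The main obstacle I expect is this second happiness condition, namely showing that $P_i$'s $r_0$-party multi-signature on \msg excludes $P_j$. This is resolved by the structural invariant that a party signs \msg only after becoming happy, so the set of valid signers on \msg is exactly the set of already-happy parties, and cannot contain a $P_j$ that we are trying to push across the happiness threshold in iteration $r_0$. A secondary subtlety is that Byzantine parties could have invoked \distribute in earlier iterations with maliciously crafted pairs, but the collision-freeness of the accumulator (Lemma~\ref{lem:accumulator}) guarantees that any pair accepted by honest parties is a correct coded value of $m$, so no spurious values can contaminate the RS decoding.
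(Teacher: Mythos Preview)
Your proof is correct and follows the same overall approach as the paper: use agreement on $z$ together with Lemma~\ref{lem:crypto:accu_value} to pin down $m$, then argue that once $P_i$ invokes \distribute in iteration $r_0$ every honest party collects enough verified codeword pieces to reconstruct $m$ and receives a long enough signature chain on \msg to become happy. One minor presentational difference: the paper justifies that $P_i$ can actually produce an $r_0$-signer multi-signature via an explicit case split on $r_0=1$ (so $P_i=P_s$) versus $r_0>1$ (so $P_i$ became happy in round $r_0-1$ by receiving $r_0-1$ signatures excluding itself), which you take for granted from the protocol description; conversely, your structural-invariant argument that an unhappy $P_j$ has never signed \msg and hence cannot appear among the $r_0$ signers is more explicit than the paper's treatment of the ``excluding $P_j$'' clause.
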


\begin{proof}
    By the agreement condition of the Byzantine broadcast, the output $z_i$ of the BB at every honest party $P_i$ is identical.
    If an honest party $P_i$ invokes $\distribute$ with message $m$, $m$ satisfies $z_i=\eval(ak, \encode(m,b))$. 
    If any other honest party $P_j$ sets $o_j=m'$ after initialization, it must satisfy $\eval(ak, \encode(m',b))=z_j=z_i$.
    By Lemma \ref{lem:crypto:accu_value}, $m=m'$.
    Thus, we only need to show that every other honest party $P_j$ sets $o_j$.

    Suppose that $P_i$ invokes $\distribute$ in some iteration $r$. According to the subprotocol \distribute, $P_i$ computes a witness $w_j$ for each indexed value $\pair{j}$ and sends the pair $(s_j, w_j)$ to each party $P_j$. 
    According to Lemma \ref{lem:accumulator}, the adversary cannot generate $d'\notin \fd_i$ and a witness $w'$ such that $\verify(ak, z_i, w', d')=true$.
    Then, in Sharing step of iteration $r$, every honest party $P_j$ can identify and forward the valid pair $(s_j,w_j)$ to all other parties, unless it has already done that in previous iterations. 
    Since there are at least $n-t=b$ honest parties, in the Reconstruction step of iteration $r$, every honest party $P_j$ receives at least $n-t=b$ correct coded values. 
    In \reconstruct, using the witness associated with the indexed coded value, every party $P_j$ can identify the corrupted values and remove them. The number of erased values is at most $t$.
    By the property of RS codes, $P_j$ with $happy_j=0$ is able to recover the message $m$.
    
    Furthermore, we will show that each party  receives a \msg message signed by $r$ distinct parties in the Reconstruction step of iteration $r$.
    If $r=1$, then $P_i=P_s$ and every $P_j$ will receive a signature for \msg.
    If $r>1$, then $P_i$ has received a multi-signature of \msg signed by $r-1$ distinct parties excluding $P_i$ in the Reconstruction step of iteration $r-1$;
    $P_i$ adds its own signature of \msg in iteration $r$, so each honest $P_j$ will receive a multi-signature of \msg signed by $r$ distinct parties in the Reconstruction step of iteration $r$.
    
    Therefore, if $happy_j=0$ up till now, then an honest $P_j$ will set $happy_j=1$ and $o_j=m$ in the Reconstruction step of iteration $r$.
    If $happy_j=1$, then $P_j$ has already set $o_j=m$. Note that an honest sender does not set its output again in the Reconstruction step, since the \msg message always contains its signature.
    Once $P_j$ sets $o_j$, it will skip the Reconstruction step in all future iterations, and $o_j$ will not be changed.
    Therefore, all honest parties output $m$ when they terminate.
\end{proof}

\begin{lemma}\label{lem:majority:validity}
    If the sender is honest and has input $m_s$, every honest party outputs $m_s$.
\end{lemma}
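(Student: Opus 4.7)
The plan is to reduce this lemma almost entirely to Lemma~\ref{lem:majority:distribute}. First I would observe that since the sender $P_s$ is honest, by the validity of the $k$-bit Byzantine broadcast oracle invoked in step~1, every honest party $P_i$ outputs $z_i = z_s = \texttt{Eval}(ak, \encode(m_s, b))$. Thus all honest parties enter the iterative phase with the same, correct accumulation value.

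Next I would track the sender's behavior in iteration $r=1$. Because the sender initializes $happy_s = 1$, the Distribution step of iteration~$1$ is triggered at $P_s$: the sender signs \msg (producing a multi-signature signed by $1$ distinct party, namely itself), sends it to all parties, and invokes $\distribute(\fd_s, ak, z_s)$ with $\fd_s = \encode(m_s, b)$. This is exactly the hypothesis required by Lemma~\ref{lem:majority:distribute}, so that lemma delivers $o_j = m_s$ for every honest party $P_j \neq P_s$.

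Finally I would handle the sender itself separately, to confirm it does not overwrite its own output. By step~1 the sender sets $o_s = m_s$ and $happy_s = 1$ at initialization. In each subsequent iteration, the Reconstruction step only overwrites $o_i$ if a \msg message signed by $r$ distinct parties \emph{excluding} $P_i$ is received, which can never be satisfied for $P_s$ (any chain of multi-signatures on \msg must have originated from the sender's own signature, so every such chain includes $P_s$ as a signer). Hence $o_s$ remains $m_s$ throughout, completing the proof.

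I do not anticipate a real obstacle here: the lemma is essentially a corollary of Lemma~\ref{lem:majority:distribute}, and the only subtlety is the small bookkeeping argument that an honest sender's output is not accidentally overwritten, which is handled by the ``excluding $P_i$'' clause in the Reconstruction step.
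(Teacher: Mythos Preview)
Your proposal is correct and follows essentially the same route as the paper: observe that the honest sender invokes \distribute in iteration $r=1$ and then apply Lemma~\ref{lem:majority:distribute}. The paper's proof is just these two sentences. Your additional bookkeeping (the BB validity step ensuring all honest $z_i$ agree, and the argument that $o_s$ is never overwritten) is already absorbed into the proof of Lemma~\ref{lem:majority:distribute} itself, so you need not treat the sender separately---Lemma~\ref{lem:majority:distribute} as stated already covers \emph{every} honest party, including $P_s$.
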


\begin{proof}
    In iteration $r=1$, 
    the sender sends a signed \msg to all other parties and invokes \distribute. 
    By Lemma \ref{lem:majority:distribute}, every honest parties output $m_s$.
\end{proof}

\begin{lemma}\label{lem:majority:agreement}
    Every honest party outputs the same message.
\end{lemma}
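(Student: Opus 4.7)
The plan is to split the proof into cases according to whether any honest party becomes happy and, if so, whether this happens early enough that some honest party still has an opportunity to invoke \distribute before the protocol terminates. The key leverage point is Lemma~\ref{lem:majority:distribute}, which already asserts that a single invocation of \distribute by an honest party is enough to force every honest party to output the same message. So the entire task reduces to showing that whenever an honest party changes its output away from $\bot$, some honest party has also invoked (or will invoke) \distribute in an iteration $\le t+1$.

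First I would dispatch the two easy cases. If no honest party ever becomes happy, then every honest party's output stays at its initial value $\bot$, which trivially gives agreement. If the sender is honest, Lemma~\ref{lem:majority:validity} already shows that every honest party outputs $m_s$, so agreement follows. That leaves the main case: the sender is Byzantine but some honest party eventually becomes happy.

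In this main case, let $P_i$ be the honest party that becomes happy in the smallest iteration $r^\star$. Because $P_i$ is the first honest party to become happy, no honest party has signed \msg at any point during or before iteration $r^\star$ (honest parties sign \msg only in the Distribution step of the iteration immediately after they become happy, and the sender is Byzantine in this case so it contributes no honest signature either). Consequently, the multi-signature of $r^\star$ distinct parties (excluding $P_i$) that $P_i$ receives in the Distribution step of iteration $r^\star$ must consist entirely of Byzantine signatures, which yields $r^\star \le t$.

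Since $r^\star \le t$, iteration $r^\star+1 \le t+1$ still exists. Because $P_i$ became happy at the end of iteration $r^\star$ and has not yet executed the Distribution step (it was skipped in previous iterations only if already happy, which it was not), in iteration $r^\star+1$ the party $P_i$ will sign \msg and invoke $\distribute(\fd_i, ak, z_i)$ with the reconstructed message $M_i$. By Lemma~\ref{lem:majority:distribute}, every honest party $P_j$ outputs $M_i$, giving agreement. The main obstacle I anticipate is the chain-length bookkeeping in the third step: one must verify carefully that honest parties truly have not contributed any signature before iteration $r^\star$ (taking into account that multi-signatures aggregate, so an adversary who had obtained any honest signature earlier could have folded it into the chain), which is why the minimality of $r^\star$ and the Byzantine sender assumption are both essential.
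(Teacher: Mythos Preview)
Your proposal is correct and rests on the same two ingredients as the paper's proof: Lemma~\ref{lem:majority:distribute} (one honest \distribute forces global agreement) and the Dolev--Strong style bound coming from the length of the \msg signature chain. The difference is purely in the case split. The paper fixes an \emph{arbitrary} honest party $P_i$ that outputs a non-$\bot$ value in some iteration $r$ and then branches on $r\le t$ versus $r=t+1$; in the latter branch it uses pigeonhole on the $t+1$ signers to locate an earlier honest \distribute. You instead fix the \emph{first} honest party to become happy and argue directly that its iteration $r^\star$ satisfies $r^\star\le t$ (since, with a Byzantine sender, no honest signature can appear in the chain that early), so it still has time to invoke \distribute in iteration $r^\star+1$. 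Your route collapses the paper's two subcases into one and is arguably a little cleaner; the paper's route avoids the need to reason about minimality and works from any witness. Both are equally valid and equally short.
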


\begin{proof}
    If all honest parties output $\bot$, then the lemma is true.
    Otherwise, suppose some honest party $P_i$ outputs $o_i=m$ where $m\neq \bot$. 
    If $P_i$ is the sender, then by Lemma \ref{lem:majority:validity}, all honest parties output $m$.
    Now consider the case where $P_i$ is not the sender.
    %
    %
    According to the protocol, if $P_i \neq P_s$ sets $o_i=m \neq \bot$ in the Reconstruction step of iteration $1\leq r\leq t$, $P_i$ will invoke \distribute with $m$ in iteration $r+1$.
    By Lemma \ref{lem:majority:distribute}, all honest parties output $m$.
    If the honest party $P_i$ sets $o_i=m$ in iteration $r=t+1$, according to the protocol, $P_i$ receives a \msg signed by $t+1$ distinct parties. Since there are at most $t$ Byzantine parties, there exists at least one honest party $P_j\neq P_i$ that has signed \msg and invoked \distribute with $o_j=m'$ in a previous iteration $1\leq r'\leq t$. 
    Then, by Lemma \ref{lem:majority:distribute}, all honest parties including $P_i$ output $m'$. Therefore, $m'=m$, and all honest parties output $m$.
\end{proof}

\begin{theorem}\label{thm:majority}
    Protocol Synchronous Crypto. $(1-\varepsilon)n$-BB satisfies Termination, Agreement and Validity.
    The protocol has communication complexity $O(nl/\varepsilon+\fb(k)+kn^2+n^3)$. 
\end{theorem}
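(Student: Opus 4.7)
The plan is to dispatch Termination, Validity, Agreement, and the communication bound in turn. Termination is immediate: every party exits step~1 when the $k$-bit BB oracle returns, and the main loop is a fixed $t+1$ iterations whose three sub-steps each amount to a bounded number of local sends. Validity and Agreement are already established by Lemma~\ref{lem:majority:validity} and Lemma~\ref{lem:majority:agreement}, so the proof body only needs to cite them.

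The bulk of the work is the communication bound, and the key observation is built into the protocol itself: the Distribution, Sharing, and Reconstruction instructions each include a ``skip this step in all future iterations'' clause, so each party performs each of these sub-steps at most once over the whole execution. Since $t<(1-\varepsilon)n$ gives $b = n-t \ge \varepsilon n$, every coded value $s_j$ produced by \encode has size $O(l/(\varepsilon n))$, and every accumulator witness has size $O(k)$.

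With that in hand I would total the contributions term by term. Step~1 contributes $\fb(k)$ from the single $k$-bit BB instance. A one-shot execution of the Distribution step by a party sends, to each of the $n$ recipients, (i) an aggregated multi-signature of \msg of size $O(k)$ together with an $O(n)$-bit bitmap identifying its $r\le t+1$ signers, and (ii) a pair $(s_j,w_j)$ of size $O(l/(\varepsilon n)+k)$ via \distribute; aggregated over all $n$ parties this contributes $O(n^2(k+n) + n^2(l/(\varepsilon n)+k)) = O(nl/\varepsilon + kn^2 + n^3)$. A one-shot execution of the Sharing step by each party forwards a single $(s_i,w_i)$ pair of size $O(l/(\varepsilon n)+k)$ to each of $n$ parties, contributing $O(nl/\varepsilon + kn^2)$ in total. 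The Reconstruction step is purely local. Summing yields the claimed $O(nl/\varepsilon+\fb(k)+kn^2+n^3)$.

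The main obstacle is conceptual rather than arithmetic: a naive analysis would multiply the Distribution and Sharing costs by the $t+1$ iterations of the loop, inflating the bound by a factor of $n$, so the ``execute once per protocol'' clauses must be explicitly invoked in the accounting. A secondary subtlety is the $n^3$ term, which arises because each aggregated multi-signature must be shipped together with the list of its signers; representing that list as an $O(n)$-bit bitmap is what keeps the overhead at $O(n^3)$ rather than something larger. Finally, the $1/\varepsilon$ factor in the $nl$ term enters only through the RS block size $b \ge \varepsilon n$, which is exactly why the paragraph preceding the theorem notes that the protocol degrades when $\varepsilon = o(1)$.
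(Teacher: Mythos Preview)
Your proposal is correct and follows essentially the same route as the paper: Termination is immediate, Validity and Agreement are dispatched by citing Lemmas~\ref{lem:majority:validity} and~\ref{lem:majority:agreement}, and the communication bound is obtained term by term using the ``execute at most once'' clauses together with $b\ge \varepsilon n$, the $O(k+n)$ size of a multi-signature plus signer list, and the $O(l/(\varepsilon n)+k)$ size of each $(s_j,w_j)$ pair. Your identification of the $n^3$ term as arising from the signer list and of the $1/\varepsilon$ factor as arising from the RS block size also matches the paper's accounting exactly.
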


\begin{proof}
    Termination is clearly satisfied.
    By Lemma \ref{lem:majority:agreement}, agreement is satisfied.
    By Lemma \ref{lem:majority:validity}, validity is satisfied.
    
    Step $1$ has cost $\fb(k)$ for the $k$-bit BB oracle. 
    The Distribution step has total communication cost $O(nl/\varepsilon+kn^2+n^3)$, since each honest party
    executes the Distribution step at most once, where invoking \distribute has cost $O(n\cdot (l/b+k))=O(\frac{n}{n-t}l+kn)=O(l/\varepsilon+kn)$, and sending the signed \msg message has cost $O((k+n)n)$ where the $(k+n)$ term is due to the signature size and the list of signers in the multi-signature scheme.
    The Sharing step is also performed at most once for every honest party, and has total cost $O(nl/\varepsilon+kn^2)$ since each honest party in the Sharing step sends a message of size $O(l/(n\varepsilon)+k)$ to all other parties.
    The Reconstruction step has no communication cost.
    Hence, the total communication complexity is
    $O(nl/\varepsilon+\fb(k)+kn^2 +n^3)$.
\end{proof}

\stitle{Optimality with the current best BB oracle.}
From Section \ref{sec:related}, the classic Dolev-Strong~\cite{dolev1983authenticated} protocol remains the best deterministic solution for $t>n/2$ BB, with cost $\fb(k) = \Theta((k+\sig)n^2 + n^3)$ for $k$-bit inputs where $\sig$ is the signature size.
Our protocol invokes Dolev-Strong with $k=k_a$ (the size of the accumulation value). Since $\Theta(\sig)=\Theta(k_a)$, our protocol achieves $\fb(l) = O(nl+kn^2+n^3)$.

As before, $\Omega(nl)$ is a trivial lower bound for $l$-bit BB~\cite{fitzi2006optimally} (intuitively, all parties need to receive the sender's message);
in addition $\fb(l) \geq \fb(k)$ if $l \geq k$.
Thus, the $\fb(l) = O(nl+kn^2+n^3)$ communication complexity cannot be further improved unless a deterministic BB protocol better than Dolev-Strong is found.

\section{Cryptographically Secure Extension Protocols Under Asynchrony}
\label{sec:async}

\begin{figure}[tb]
    \begin{mybox}

    Input of every party $P_i$: An $l$-bit message $m_i$
    
    Primitives: 
    asynchronous Byzantine agreement oracle, cryptographic accumulator with \eval, \createwit, \verify
    
    Protocol for party $P_i$: 
    \begin{numlist}
    \item 
    Compute $\fd_i=(\pair{1},...,\pair{n})=\encode(m_i, b)$.
    Compute the accumulation value $z_i=\texttt{Eval}(ak, \fd_i)$.
    Input $z_i$ to an instance of $k$-bit asynchronous Byzantine agreement oracle.
    
    \item When the above ABA outputs $z$, if $z=z_i$, set $happy_i=1$, otherwise set $happy_i=0$. Input $happy_i$ to an instance of $1$-bit asynchronous Byzantine agreement oracle.
    \item 
    \begin{itemize}
        \item If the above ABA outputs $0$, output $o_i=\bot$ and abort.
        \item If the above ABA outputs $1$ and $happy_i=1$, invoke $\distribute(\fd_i, ak, z)$.
    \end{itemize}
    
    \item 
    Wait for a valid $(s_i, w_i)$ pair such that $\verify(ak, z, w_i, \pair{i})=\text{true}$, then send $(s_i, w_i)$ to all other parties.
    
    \item 
    If $happy_i=1$, set $o_i=m_i$.
    Otherwise, perform the following.
    Wait for at least $n-t$ valid pairs $\{(s_j,w_j)\}$ from the previous step that satisfies $\verify(ak, z, w_j, \pair{j})=\text{true}$. Let $\fs_i=((\pair{1},w_1),...,(\pair{n},w_n))$, where $(s_j, w_j)$ is the pair received from party $P_j$.
    Compute $o_i=\reconstruct(\fs_i, ak, z, t)$.
    
    \item Output $o_i$.
    
    \end{numlist}
    \end{mybox}
    \caption{Protocol Asynchronous Crypto. $\frac{n}{3}$-BA}
    \label{fig:asyncba:n/3}
\end{figure}

As mentioned, our cryptographically secure extension protocols can be extended to the asynchronous setting to solve BA and reliable broadcast (RB) under $<n/3$ faults.
No extension protocol has been proposed for this case to the best of our knowledge.
As before, let $t$ denote the maximum number of Byzantine parties, and let $b=n-t$.

\subsection{Asynchronous Byzantine Agreement}
The protocol is presented in Figure \ref{fig:asyncba:n/3}, which consists steps analogous to the synchronous protocol. 
The main difference is that in the asynchronous extension protocol, Steps 4 and 5 are executed once enough messages are received.
As a result, the proofs are also similar to the synchronous version and we omit them. 

\begin{theorem}\label{thm:asyncba}
    Protocol Asynchronous Crypto. $\frac{n}{3}$-BA satisfies Termination, Agreement and Validity, and has communication complexity $O(nl+\fa(k)+k n^2)$.
\end{theorem}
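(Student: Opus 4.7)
The plan is to follow the same structure as the synchronous proof (Theorem \ref{thm:cryptoba:half} and its supporting lemmas), since every step of this asynchronous protocol has a direct analog in Figure \ref{fig:cryptoba:half}. The only essential deviations are that the two ABA invocations replace BA oracles, and that steps 4--5 cannot wait for a synchronous round and must instead wait for ``enough'' valid messages. Accordingly, I would prove Validity, Agreement, Termination, and the communication bound in that order.

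For \emph{Validity}, I would copy the argument of Lemma \ref{lem:cryptoba:half:validity}: if every honest party holds the same input $m$, they compute the identical accumulation value $z$, feed it into the first ABA which (by its own validity) outputs $z$, so every honest party sets $happy_i=1$. Then the second ABA outputs $1$ by validity, and every honest party sets $o_i = m_i = m$ in step 5. For \emph{Agreement}, I would mirror Lemma \ref{lem:cryptoba:half:agreement}: if the second ABA outputs $0$, every honest party outputs $\bot$; otherwise, ABA validity ensures some honest party $P_i$ had $happy_i = 1$ and hence $z_i = z$, so by Lemma \ref{lem:crypto:accu_value} there is a unique $m$ consistent with $z$, and $P_i$ holds it. That party invokes \distribute, so by Lemma \ref{lem:accumulator} every honest $P_j$ eventually receives at least one valid $(s_j, w_j)$ in step 4 and forwards it, and the argument closes once we handle termination.

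The main obstacle is \emph{Termination} under asynchrony, specifically justifying the waits in steps 4--5. I would argue as follows. The two ABA oracles terminate (with probability $1$) by assumption. If their composite output is $0$, all honest parties abort in step 3. If the output is $1$, ABA validity guarantees at least one honest party invoked \distribute with the correct $m$. By Lemma \ref{lem:accumulator}, Byzantine parties cannot forge $(s_j, w_j)$ pairs verifying against $z$, yet every honest party $P_j$ eventually receives a genuine valid pair from that invoker in step 4 and forwards its pair to everyone. Hence every honest party eventually receives at least $n - t$ valid pairs in step 5, so the wait is well-defined. Moreover, every valid pair received in step 5 is, by Lemma \ref{lem:accumulator} together with Lemma \ref{lem:crypto:accu_value}, the correct $j$-th codeword of the unique $m$; thus there are $c = 0$ errors and at most $d \le t$ erasures. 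Since $b = n - t$ and $n - b = t \ge 2 \cdot 0 + t$, the RS decoder inside \reconstruct recovers $m$, giving Agreement as well.

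Finally, for the communication bound I would account step by step: step 1 costs $\fa(k)$, step 2 costs $\fa(1) \le \fa(k)$, step 3 has each of at most $n$ honest invokers sending $O(l/b + k) = O(l/n + k)$ to each party, totaling $O(nl + kn^2)$, step 4 is dominated similarly by $O(nl + kn^2)$, and step 5 involves no communication. Summing yields $O(nl + \fa(k) + kn^2)$, as claimed.
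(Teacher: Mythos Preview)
Your proposal is correct and takes essentially the same approach as the paper: the paper explicitly omits the proof of Theorem~\ref{thm:asyncba}, stating that it is ``similar to the synchronous version,'' and your write-up faithfully adapts Lemmas~\ref{lem:cryptoba:half:validity} and~\ref{lem:cryptoba:half:agreement} together with the cost accounting of Theorem~\ref{thm:cryptoba:half} to the asynchronous setting. Your added care in justifying that the waits in steps~4--5 eventually succeed (via at least one honest invoker of \distribute when the second ABA outputs~$1$) is exactly the extra argument the asynchronous setting requires and is handled correctly.
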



\begin{figure}[tb]

\begin{mybox}

    Input of the sender $P_s$: An $l$-bit message $m_s$
    
    Primitive: 
    asynchronous Byzantine agreement oracle, asynchronous reliable broadcast oracle, cryptographic accumulator with \eval, \createwit, \verify
    
    Protocol for party $P_i$: 
    \begin{numlist}
    \item 
    If $i=s$, perform the following.
    Compute $\fd_s=(\pair{1},...,\pair{n})=\encode(m_s, b)$.
    Compute the accumulation value $z_s=\texttt{Eval}(ak, \fd_s)$.
    Send $m_s$ to every party, and broadcast $z_s$ by invoking a $k$-bit asynchronous reliable broadcast oracle.
    
    \item When receiving the message $m$ from the sender, and the reliable broadcast above outputs $z$, perform the following.
    Compute $\fd_i=(\pair{1},...,\pair{n})=\encode(m, b)$.
    Compute the accumulation value $z_i=\texttt{Eval}(ak, \fd_i)$.
    If $z_i=z$, set $happy_i=1$, otherwise set $happy_i=0$.
    \item If $happy_i=1$, invoke $\distribute(\fd_i, ak, z)$.
    
    \item Step $4$ to $6$ are identical to that of Protocol Asynchronous Crypto. $\frac{n}{3}$-BA in the Figure \ref{fig:asyncba:n/3}, except that the replica computes $\fd_i'=\encode(o_i, b)$, and invokes $\distribute(\fd_i', ak, z)$ at the end of Step $5$.
    
    
    \end{numlist}
        
\end{mybox}
    \caption{Protocol Asynchronous Crypto. $\frac{n}{3}$-RB}
    \label{fig:asyncrb:n/3}
\end{figure}

\subsection{Asynchronous Reliable Broadcast}
\label{app:crypto:async}

Reliable broadcast relaxes the termination property of the broadcast definition (Definition~\ref{defn:BB}):
only when the sender is honest, all honest parties are required to output; otherwise, it is allowed that \emph{either} all honest parties output \emph{or} no honest party outputs.
The agreement property is slightly modified accordingly.

\begin{definition}[Reliable Broadcast]
    A protocol for a set of parties $\fp = \{P_1, ..., P_n\}$, where a distinguished party called the sender $P_s\in \fp$ holds an initial $l$-bit input $m$, is a reliable broadcast protocol tolerating an adversary $A$, if the following properties hold
    \begin{itemize}[noitemsep,topsep=0pt]
        \item Termination. If the sender is honest, then every honest party eventually outputs a message.
        Otherwise, if some honest party outputs a message, then every honest party eventually outputs a message. 
        \item Agreement. If some honest party outputs a message $m'$, then every honest party eventually outputs $m'$. 
        \item Validity. If the sender is honest, all honest parties eventually output the message $m$.
    \end{itemize}
\end{definition}

The extension protocol for asynchronous reliable broadcast is presented in Figure \ref{fig:asyncrb:n/3}.

\begin{lemma}\label{lem:asyncrb}
    If an honest party $P_i$ invokes $\distribute$ with $\fd_i=\encode(m,b)$, then any honest party $P_j$ eventually output $o_j=m$.
\end{lemma}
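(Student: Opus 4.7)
The plan is to adapt the proof of Lemma~\ref{lem:majority:distribute} (and the analogous argument underlying Theorem~\ref{thm:asyncba}) to the asynchronous reliable broadcast setting. First, I would use the agreement property of the asynchronous reliable broadcast oracle invoked in Step~1 to conclude that every honest party $P_j$ eventually outputs a common accumulation value $z$. Since the honest party $P_i$ invoked $\distribute$ with $\fd_i=\encode(m,b)$, it must be the case that $\eval(ak,\encode(m,b))=z$. Lemma~\ref{lem:crypto:accu_value} then implies that $m$ is the unique message consistent with $z$, so if any honest $P_j$ sets $o_j\neq\bot$, we are guaranteed $o_j=m$. It therefore remains only to establish liveness: that every honest $P_j$ eventually sets $o_j$.

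Next, I would argue that valid coded pairs propagate to every honest party. When $P_i$ executes $\distribute(\fd_i,ak,z)$ it sends to each $P_j$ a pair $(s_j,w_j)$ satisfying $\verify(ak,z,w_j,\pair{j})=\text{true}$. By eventual delivery in the asynchronous model, every honest $P_j$ eventually receives such a pair and, per Step~4, forwards it to all parties. Lemma~\ref{lem:accumulator} prevents the adversary from forging any pair $(s_j',w_j')$ with $s_j'\neq s_j$ that verifies against $z$, so every valid pair any honest party ever forwards is a correct coded symbol of $m$. Consequently at least $n-t=b$ honest parties ultimately multicast valid pairs, guaranteeing that every honest party eventually receives the $n-t$ valid pairs required by Step~5's wait condition.

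Finally I would close the liveness argument and handle output. A happy honest $P_j$ has $z=z_j=\eval(ak,\encode(m_j,b))$, so by Lemma~\ref{lem:crypto:accu_value} its own input equals $m$, and it sets $o_j=m_j=m$. An unhappy honest $P_j$ applies $\reconstruct$ to the $n-t$ valid pairs together with at most $d=t$ erasures; since $n-b=t\geq 2c+d$ with $c=0$, Reed--Solomon decoding recovers $m$. At the end of Step~5 the party then invokes $\distribute$ with $\encode(o_j,b)=\encode(m,b)$, which re-triggers the same cascade and ensures that any honest party that had not yet received a valid pair in Step~4 (possible in asynchrony if the first invocation reached only some parties before the network scheduled further events) will eventually do so. The main obstacle is this last liveness subtlety: in contrast to the BA setting, some honest parties may never become ``happy'' on their own, so we must be careful that a single honest invocation of $\distribute$ suffices to drive every other honest party through Steps~4 and~5, which is exactly what the forwarding-and-reconstruction chain above provides.
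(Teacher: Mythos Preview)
Your proposal is essentially correct and follows the same structure as the paper's proof: use the agreement property of the reliable-broadcast oracle to pin down a common $z$, invoke Lemma~\ref{lem:crypto:accu_value} so that any non-$\bot$ output must equal $m$, then argue liveness via the \distribute--forward--\reconstruct pipeline together with the collision-freeness of the accumulator and the RS decoding guarantee.

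The one place you diverge from the paper is your final paragraph, where you suggest that the second invocation of \distribute at the end of Step~5 is needed ``in case the first invocation reached only some parties before the network scheduled further events.'' This concern is unnecessary and slightly misreads the asynchronous model: channels are reliable, so every message $P_i$ sends in its (single) invocation of \distribute is \emph{eventually} delivered to every honest $P_j$, regardless of scheduling. The paper's proof accordingly never appeals to the second \distribute call; one honest invocation already suffices to push every honest party through Steps~4 and~5. The second \distribute is relevant for the agreement lemma (Lemma~\ref{lem:asyncrb:agreement}), not for this one. Dropping that last subtlety would make your argument match the paper's almost verbatim.
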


\begin{proof}
    By the agreement condition of asynchronous reliable broadcast oracle used in step $1$, if any honest party obtains $z$, then any honest party also eventually obtains $z$.
    Then at step $2$, by Lemma \ref{lem:crypto:accu_value}, any honest party $P_j$ with $happy_j=1$ has the identical message $m$ corresponding to $z$, and sets $o_j=m$ at step $5$.
    For other honest parties,
    the honest party $P_i$ with $happy_i=1$ invokes $\distribute$ to compute witness $w_j$ for each indexed value $\pair{j}$, and sends the valid $(s_j, w_j)$ pair computed from message $m$ to party $P_j$ for every $P_j$.    
    By Lemma \ref{lem:accumulator}, the Byzantine parties cannot generate a different pair $(s_j', w_j')$ that can be verified.
    Therefore, in step $4$, every honest party $P_j$ eventually receives at least one valid $(s_j,w_j)$ pair, and forwards it to all other parties. Since there are at least $n-t$ honest parties, in step $5$, each honest party will eventually receive at least $n-t$ valid coded values. 
    In \reconstruct, using the accumulation value associated with the coded value, any party $P_j$ can detect the corrupted values and remove them. 
    By the property of RS codes, any honest party $P_j$ with $happy_j=0$ is able to recover the message $m$, 
    and any honest party $P_j$ with $happy_j=1$ already has the message $m$.
    Therefore all honest parties output $m$.
\end{proof}

\begin{lemma}\label{lem:asyncrb:validity}
    If the sender is honest and has input $m_s$, all honest parties eventually output the same message $m_s$.
\end{lemma}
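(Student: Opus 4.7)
The plan is to derive validity by chaining the guarantees of the honest sender's actions in Step~1 with the previous lemma (Lemma~\ref{lem:asyncrb}). The key observation is that when the sender is honest, every honest party will independently reach the state $happy_i = 1$ with $\fd_i = \encode(m_s, b)$ and will therefore invoke \distribute on the correct encoding. Once at least one such invocation happens, Lemma~\ref{lem:asyncrb} immediately delivers the conclusion.

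Concretely, I would proceed in three short steps. First, since the sender is honest, it sends $m_s$ directly to every party in Step~1, so by the reliability of the channels, every honest $P_i$ eventually receives $m_s$. Second, the honest sender invokes the $k$-bit asynchronous reliable broadcast oracle on $z_s = \eval(ak, \encode(m_s, b))$; by the validity property of the reliable broadcast oracle, every honest party eventually obtains the output $z = z_s$ in Step~2. Third, since \eval is deterministic and \encode is deterministic, each honest $P_i$ will compute $z_i = \eval(ak, \encode(m_s, b)) = z_s = z$ and therefore set $happy_i = 1$, which causes it to invoke $\distribute(\fd_i, ak, z)$ in Step~3 with $\fd_i = \encode(m_s, b)$. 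At this point Lemma~\ref{lem:asyncrb} applies directly: since an honest party has invoked \distribute with $\fd_i = \encode(m_s, b)$, every honest party $P_j$ eventually outputs $o_j = m_s$.

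Essentially none of the steps poses a real obstacle, since the structure of the protocol guarantees uniform agreement on the accumulation value from the honest sender. The only subtlety worth articulating carefully is that multiple honest parties may invoke \distribute concurrently in Step~3, but that is harmless: Lemma~\ref{lem:asyncrb} only requires the existence of at least one such honest invocation, and the collision-freeness of the accumulator (Lemma~\ref{lem:accumulator}) combined with Lemma~\ref{lem:crypto:accu_value} ensures that all honest parties who become happy encode exactly the same message $m_s$, so no contradictory outputs can arise.
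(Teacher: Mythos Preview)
Your proof is correct and follows essentially the same line as the paper: both arguments establish that every honest party receives $m_s$, obtains $z=z_s$ from the reliable broadcast oracle, and therefore sets $happy_i=1$. The only minor difference is in the final step: you conclude by invoking Lemma~\ref{lem:asyncrb} (some honest party has invoked \distribute on $\encode(m_s,b)$, so everyone outputs $m_s$), whereas the paper concludes more directly by noting that since every honest $P_i$ has $happy_i=1$, it sets $o_i=m_s$ in Step~5. Both routes are valid, and your appeal to Lemma~\ref{lem:asyncrb} is arguably cleaner since it also transparently handles the fact that Step~4 eventually completes.
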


\begin{proof}
    If the sender is honest, every honest party eventually receive $m_s$, and the asynchronous reliable broadcast eventually outputs the corresponding accumulation value $z_s$ according to the termination condition of asynchronous reliable broadcast. Then in step $2$, any honest party $P_i$ computes a matching accumulation value $z_i=z_s$ with $m_s$,  and sets $happy_i=1$.
    Then any honest party $P_i$ sets $o_i=m$ in step $5$ since $happy_i=1$.
    Hence, all honest parties output $m$ when the protocol terminates.
\end{proof}

\begin{lemma}\label{lem:asyncrb:agreement}
    If some honest party outputs a message $m$, then every honest party eventually outputs $m$.
\end{lemma}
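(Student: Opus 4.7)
The plan is to reduce the lemma to Lemma~\ref{lem:asyncrb} by showing that whenever an honest party $P_i$ outputs $o_i = m$, some honest party has invoked $\distribute$ with $\fd = \encode(m, b)$ such that $\eval(ak, \fd) = z$, where $z$ is the output of the reliable broadcast in Step~$1$. Once this hypothesis is established, Lemma~\ref{lem:asyncrb} immediately yields that every honest party eventually outputs $m$. The crucial hook is the modification in Step~$5$ of the Async RB protocol: whenever an honest party sets $o_i$, it also computes $\fd_i' = \encode(o_i, b)$ and invokes $\distribute(\fd_i', ak, z)$.

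I would then split into two cases. If $happy_i = 1$, then $o_i = m_i$ was received from the sender, the happy condition forces $z = \eval(ak, \encode(m_i, b))$, and $P_i$ has already invoked $\distribute(\fd_i, ak, z)$ with $\fd_i = \encode(m, b)$ back in Step~$3$; so Lemma~\ref{lem:asyncrb} applies directly. If $happy_i = 0$, then $P_i$ obtained $o_i$ via \reconstruct from a collection of accumulator-verified pairs and subsequently invokes $\distribute(\encode(o_i, b), ak, z)$ at the end of Step~$5$. The key subclaim here is that the reconstructed $o_i$ satisfies $\eval(ak, \encode(o_i, b)) = z$. By Lemma~\ref{lem:accumulator}, every pair $(s_j, w_j)$ with $\verify(ak, z, w_j, \pair{j}) = \text{true}$ lies, except with negligible probability, in the unique set $\fd^\star$ whose accumulation value is $z$; combined with the uniqueness of RS decoding from at least $b = n - t$ positions, this forces $\fd^\star = \encode(o_i, b)$, so the invariant holds. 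Lemma~\ref{lem:asyncrb} applies again, and every honest party eventually outputs $o_i = m$.

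The main obstacle is the harder case: arguing crisply that the decoded $o_i$ really does satisfy $\encode(o_i, b) = \fd^\star$, i.e., that accumulator collision-freeness not only pins down individual pairs but also forces the decoded message to be globally consistent with $z$. I would formalize this by observing that any $b$ accumulator-verified positions already determine a unique message via RS decoding and re-encoding, and that this re-encoding must coincide with $\fd^\star$ on those $b$ positions; collision-freeness together with the dimension of the RS code then propagates the agreement to all $n$ positions, yielding the required global consistency and completing the reduction to Lemma~\ref{lem:asyncrb}.
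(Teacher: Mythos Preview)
Your overall plan---case-split on $happy_i$ and in each branch invoke Lemma~\ref{lem:asyncrb} on the \distribute that $P_i$ performs (in Step~$3$ when $happy_i=1$, at the end of Step~$5$ when $happy_i=0$)---is exactly the paper's proof, which consists of precisely those two sentences and appeals to Lemma~\ref{lem:asyncrb} directly, without any additional subclaim.

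The extra ``key subclaim'' $\eval(ak,\encode(o_i,b))=z$ that you try to establish in the $happy_i=0$ branch is not part of the hypothesis of Lemma~\ref{lem:asyncrb} as stated, so you do not need it here. More importantly, your argument for it does not go through. Collision-freeness (Lemma~\ref{lem:accumulator}) guarantees only that every verified pair $\pair{j}$ lies in the set $\fd^\star$ that the (possibly Byzantine) sender accumulated to produce $z$; it places no constraint on the \emph{structure} of $\fd^\star$, and in particular nothing forces $\fd^\star$ to be an RS codeword. If the sender accumulates a non-codeword $\fd^\star$, then interpolating $o_i$ from $b=n-t$ positions of $\fd^\star$ and re-encoding will agree with $\fd^\star$ on those $b$ positions but generally disagree on the remaining $t$, so $\encode(o_i,b)\neq\fd^\star$ and the ``propagation to all $n$ positions via collision-freeness and the RS dimension'' does not follow. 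Simply drop this subclaim and appeal to Lemma~\ref{lem:asyncrb} as a black box, matching the paper.
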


\begin{proof}
    Suppose any honest party $P_i$ outputs $o_i=m$.
    If $P_i$ has $happy_i=1$ at step $5$, it invokes \distribute at step $3$. Then by Lemma \ref{lem:asyncrb}, all honest parties eventually output $m$.
    If $P_i$ has $happy_i=0$ at step $5$, then it reconstructs the message $m$ from \reconstruct, and invokes \distribute. Then by Lemma \ref{lem:asyncrb}, all honest parties eventually output $m$. 
\end{proof}

\begin{theorem}\label{thm:asyncrb}
    Protocol Asynchronous Crypto. $\frac{n}{3}$-RB satisfies Termination, Agreement and Validity.
    The protocol has communication complexity $O(nl+\fb(k)+k n^2)$.
\end{theorem}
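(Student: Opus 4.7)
The plan is to establish the three security properties as direct consequences of the preceding lemmas (Lemmas \ref{lem:asyncrb}, \ref{lem:asyncrb:validity}, \ref{lem:asyncrb:agreement}), and then to account for the communication in each step of the protocol in Figure~\ref{fig:asyncrb:n/3}, leaning on the bandwidth accounting already done for the synchronous protocol in Figure~\ref{fig:cryptoba:half} and the asynchronous BA protocol in Figure~\ref{fig:asyncba:n/3}.

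For the security properties, I would argue as follows. \emph{Validity} is precisely the content of Lemma \ref{lem:asyncrb:validity}. \emph{Agreement} is precisely the content of Lemma \ref{lem:asyncrb:agreement}. For \emph{Termination} there are two subcases: if the sender is honest, Lemma \ref{lem:asyncrb:validity} already guarantees that every honest party eventually outputs; if the sender is Byzantine but some honest party $P_i$ outputs a message $m$, then Lemma \ref{lem:asyncrb:agreement} guarantees that every other honest party also eventually outputs. The only subtle point to flag is that when an honest $P_i$ reaches step $5$ with $happy_i=0$ and successfully reconstructs, the protocol requires $P_i$ to invoke $\distribute$ on the re-encoded message $\fd_i' = \encode(o_i,b)$; this is exactly what allows the agreement proof in Lemma \ref{lem:asyncrb:agreement} to propagate the output to the remaining honest parties even when no initially-happy honest party ever invoked $\distribute$.

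For the communication complexity, I would bound the four contributions separately. Step $1$ costs $\fb(k)$ for the $k$-bit asynchronous reliable broadcast of the accumulation value. In step $3$, each honest party that becomes happy invokes $\distribute$, which sends one pair $(s_j,w_j)$ of size $O(l/b + k) = O(l/n + k)$ to each of the $n$ parties; summed over up to $n$ invokers this gives $O(nl + kn^2)$. Step $4$ has each honest party forward one verified pair of size $O(l/n + k)$ to all parties, contributing another $O(nl + kn^2)$. Step $5$ potentially triggers one more $\distribute$ invocation per honest party (the re-encoding after reconstruction), which is bounded by the same $O(nl + kn^2)$. Summing yields $O(nl + \fb(k) + kn^2)$, matching the claim.

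The only genuine obstacle I anticipate is verifying that the extra $\distribute$ call at the end of step $5$ does not blow up the communication budget and is in fact necessary for the agreement argument of Lemma \ref{lem:asyncrb:agreement} — once this is spelled out, the rest is routine bookkeeping on top of already-established lemmas.
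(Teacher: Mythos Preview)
Your proposal is correct and follows essentially the same approach as the paper: security properties are delegated to Lemmas \ref{lem:asyncrb:validity} and \ref{lem:asyncrb:agreement}, and communication is tallied step by step. One small accounting omission: in step~1 the sender also sends the full $l$-bit message $m_s$ directly to every party, contributing an additional $O(nl)$ that you did not list, but since $O(nl)$ already appears elsewhere in your sum the final bound is unaffected.
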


\begin{proof}
    Termination is proved by Lemma \ref{lem:asyncrb:validity} and \ref{lem:asyncrb:agreement}. Agreement is proved by Lemma \ref{lem:asyncrb:agreement}. Validity is proved by Lemma \ref{lem:asyncrb:validity}.
    
    Step $1$ has cost $O(nl+\fb(k))$, where $k$ is the size of the cryptographic accumulator.
    Step $3$ and $5$ in total have cost $O(nl+kn^2)$, since each honest party invokes at most one instance of \distribute, which leads to an all-to-all communication with each message of size $O(l/b+k)=O(l/n+k)$.
    For step $4$, it also has cost $O(nl+kn^2)$.
    Hence the total cost is $O(nl+\fb(k)+k n^2)$.
\end{proof}


\section{Conclusion}
We investigate and propose several extension protocols with improved communication complexity for solving Byzantine broadcast and agreement under various settings.
We propose simple yet efficient authenticated extension protocols with improved communication complexity, for Byzantine agreement under $t<n/2$, and for Byzantine broadcast under $t< (1-\varepsilon)n$ where $\varepsilon>0$ is a constant.
The above results can be extended to the asynchronous case to obtain authenticated extension protocols for Byzantine agreement and reliable broadcast.



\bibliography{sample-base}

\appendix

\appendix

\section{The Bilinear Accumulator}
\label{app:bilinear}

To satisfy our assumption that output lengths of cryptographic primitives are the same order, we  choose an accumulator implementation called the bilinear accumulator \cite{nguyen2005accumulators}. 

\textbf{Bilinear Pairing.}
Let $\mathbb{G}_1, \mathbb{G}_2$ be two cyclic multiplicative groups of prime order $p$. Let $g_1,g_2$ be the corresponding generator, and there exists an isomorphism $\phi:\mathbb{G}_2\rightarrow \mathbb{G}_1$ such that $\phi(g_2)=g_1$.
Let $\mathbb{G}_M$ also be a cyclic multiplicative group of prime order $p$, and $e:\mathbb{G}_1\times \mathbb{G}_2\rightarrow \mathbb{G}_M$ is a bilinear pairing if satisfies the following properties:
\begin{enumerate}
    \item Bilinearity: $e(P^a, Q^b)=e(P,Q)^{ab}$ for all $P\in \mathbb{G}_1$, $Q\in \mathbb{G}_2$ and $a,b\in \mathbb{Z}_p$;
    \item Non-degeneracy: $e(g_1,g_2)\neq 1$;
    \item Computability: There is an efficient algorithm to compute $e(P,Q)$ for all $P\in \mathbb{G}_1$ and $Q\in \mathbb{G}_2$.
\end{enumerate}

\textbf{Accumulator Construction.}
For accumulator construction, we can have $\mathbb{G}_1=\mathbb{G}_2=\mathbb{G}$ and $g_1=g_2=g$. 
The bilinear accumulator works for elements in $\mathbb{Z}_p^*$ and the accumulation value is an element in $\mathbb{G}$.
Therefore, we assume a function $f:\mathcal{U}\rightarrow \mathbb{Z}_p^*$ that maps any value in the input domain $\mathcal{U}$ to an value in $\mathbb{Z}_p^*$.
Let $\fd=\{d_1,...,d_n\}$ be a set of $n$ values in $\mathbb{Z}_p^*$ after applying the function $f$. 
Let $s$ denote a trapdoor that is hidden from all parties participating in the extension protocol.
Before the extension protocol starts, all parties obtain a set of Strong Diffie-Hellman ($q$-SDH) public parameters $\langle g^s, g^{s^2},..., g^{s^n} \rangle$, via the trusted setup. We assume a trusted dealer that generates the trapdoor $s$ and distributes the parameters $\langle g^s, g^{s^2},..., g^{s^n} \rangle$ to all parties.
Let $C_{\fd}(x)=(x+d_1)(x+d_2)\cdots(x+d_n)$ denote the characteristic polynomial of $\fd$ with coefficients $c_0,c_1,...,c_n$, so that
$C_{\fd}(x)=(x+d_1)(x+d_2)\cdots(x+d_n)=c_0+c_1x+...+c_nx^n$.
Let $q_i(x)=\frac{C_{\fd}(x)}{x+d_i}=\prod_{j\neq i}(x+d_j)=c_0^{(i)}+c_1^{(i)}x+...+c_{n-1}^{(i)}x^{n-1}$ denote the quotient polynomial.

\begin{itemize}
    \item $\gen(1^k,n)$ returns a uniformly random tuple $ak=(p,\mathbb{G},\mathbb{G}_M,e,g)$ of bilinear pairings parameters, where $p$ is of size $k$.
    
    \item $\eval(ak, \fd)$ computes and returns the accumulation value as $z=g^{C_{\fd}(s)}=g^{c_0+c_1s+...+c_ns^n}=g^{c_0}(g^s)^{c_1}\cdots (g^{s^n})^{c_n}$.
    
    \item $\createwit(ak, z, d_i)$ computes and returns the witness $w_i$ as
    $w_i=g^{\frac{C_{\fd}(x)}{s+d_i}}=g^{c_0^{(i)}+c_1^{(i)}s+...+c_{n-1}^{(i)} s^{n-1}}=g^{c_0^{(i)}}(g^s)^{c_1^{(i)}}\cdots (g^{s^{n-1}})^{c_{n-1}^{(i)}}$ if $d_i\in \fd$, and $w_i=\bot$ if $d_i\notin \fd$.
    
    \item $\verify(ak, z, w_i, d_i)$ tests whether
    $e(g^{d_i}\cdot g^{s}, w_i)=e(z, g)$ where $e$ is the bilinear pairing, and return the result.
\end{itemize}

Since $p$ is of size $k$, and the accumulation value $z$ or any witness $w_i$ is an element in the group $\mathbb{Z}_p$, they all have size $k$ bits.
It is believed that this gives $\Theta(k)$ bits of security \cite{kim2016extended}.

\section{An Alternative Cryptographically Secure Extension Protocol for Byzantine Broadcast}
\label{sec:cryptobb:half}

In this section, we present an alternative cryptographically secure extension protocol for synchronous and authenticated Byzantine broadcast under synchrony and asynchrony. It is very similar to that of synchronous Byzantine agreement, as presented in Figure \ref{fig:cryptobb:half}.
The main difference is that the sender uses a $k$-bit Byzantine broadcast oracle to broadcast its accumulation value, instead of every party inputting its accumulation value to a $k$-bit Byzantine agreement oracle.
As a result, the communication complexity becomes $O(nl+\fb(k)+\fa(1)+k n^2)$ instead of $O(nl+\fa(k)+k n^2)$.

Note that $\fb(k) \leq nl + \fa(k)$ given the simple BB-to-BA reduction (i.e., having the sender sending its value and invoking BA).
Therefore, the protocol in this section is asymptotically no worse than the one in the main body.

\begin{figure}[h!]
\begin{mybox}

    Input of the sender $P_s$: An $l$-bit message $m_s$
    
    Primitives: Byzantine broadcast oracle, 
    Byzantine agreement oracle, cryptographic accumulator with \eval, \createwit, \verify
    
    Protocol for party $P_i$: 
    \begin{numlist}
    \item 
    If $i=s$, perform the following.
    Compute $\fd_s=(\pair{1},...,\pair{n})=\encode(m_s, b)$.
    Compute the accumulation value $z_s=\texttt{Eval}(ak, \fd_s)$.
    Send $m_s$ to every party, and broadcast $z_s$ by invoking a $k$-bit Byzantine broadcast oracle.
    
    \item When receiving the message $m$ from the sender, and the Byzantine broadcast above outputs $z$, perform the following.
    Compute $\fd_i=(\pair{1},...,\pair{n})=\encode(m, b)$.
    Compute the accumulation value $z_i=\texttt{Eval}(ak, \fd_i)$.
    If $z_i=z$, set $happy_i=1$, otherwise set $happy_i=0$.
    Input $happy_i$ to an instance of single-bit Byzantine agreement oracle.
    \item Steps $3$ to $6$ are identical to that of Protocol Synchronous Crypto. $\frac{n}{2}$-BA.
    \end{numlist}

\end{mybox}
\caption{Protocol Synchronous Crypto. $\frac{n}{2}$-BB}
\label{fig:cryptobb:half}
\end{figure}

\begin{lemma}\label{lem:cryptobb:half:validity}
    If the sender is honest and has input $m_s$, all honest parties output the same message $m_s$.
\end{lemma}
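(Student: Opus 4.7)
The plan is to mirror the structure of Lemma~\ref{lem:cryptoba:half:validity} (the validity proof for the BA version), since the BB protocol differs only in Steps~1 and~2 (which set up $happy_i$ via a broadcast of $z$ from the sender instead of an $n$-party BA on accumulation values). I would first establish that every honest party agrees on the accumulation value $z$ output in Step~2 and additionally holds the corresponding message, then rely on the identical tail of the protocol to finish.

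First I would argue that in Step~1 the honest sender computes $\fd_s = \encode(m_s, b)$ and $z_s = \texttt{Eval}(ak, \fd_s)$, sends $m_s$ directly to every party, and invokes the $k$-bit Byzantine broadcast oracle on $z_s$. By the validity of the BB oracle, every honest $P_i$ receives $z = z_s$ as its oracle output; by the reliability of the channels, every honest $P_i$ also receives $m = m_s$ directly. Therefore in Step~2 each honest $P_i$ computes $\fd_i = \encode(m_s, b) = \fd_s$ and $z_i = \texttt{Eval}(ak, \fd_i) = z_s = z$, so $happy_i = 1$ for every honest $P_i$.

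Next I would observe that every honest party inputs $1$ into the single-bit BA oracle in Step~2, so by the validity of that BA oracle it outputs $1$, and no honest party aborts in Step~3. From here the argument is identical to that of Lemma~\ref{lem:cryptoba:half:validity}: since $happy_i = 1$ and the BA output is $1$, every honest $P_i$ sets $o_i$ to its stored message (which equals $m_s$) in Step~5 of the inherited tail, and outputs it in Step~6. Hence all honest parties output $m_s$.

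I do not expect any real obstacle here; the only subtlety worth flagging is interpreting ``$o_i = m_i$'' of the inherited BA tail in the BB context as setting $o_i$ equal to the message $m$ received from the sender in Step~2 (which for an honest sender is $m_s$). The collision-freeness of the accumulator (Lemma~\ref{lem:accumulator}, Lemma~\ref{lem:crypto:accu_value}) is not needed for validity; it is only the agreement direction that requires it. Thus the proof reduces to invoking the validity of the BB oracle, the validity of the single-bit BA oracle, and determinism of $\eval$ on identical encodings.
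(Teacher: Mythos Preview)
Your proposal is correct and follows essentially the same approach as the paper's own proof: invoke BB validity to get $z=z_s$ and the direct send to get $m=m_s$ at every honest party, conclude $happy_i=1$ for all honest $P_i$, then use validity of the single-bit BA to obtain output~$1$, so that each honest party sets $o_i=m=m_s$ in Step~5. Your remark that collision-freeness is unnecessary for validity and your handling of the $o_i=m_i$ notation in the inherited tail are both accurate and slightly more explicit than the paper's version.
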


\begin{proof}
    If the sender is honest, every honest party will receive $m_s$ and the Byzantine broadcast outputs the corresponding accumulation value $z_s$. Then in step $2$, any honest party $P_i$ computes a matching accumulation value $z_i=z_s$ with $m_s$, sets $happy_i=1$ and inputs $1$ to an $1$-bit Byzantine agreement oracle. By the validity of the Byzantine agreement, the agreement will output $1$. 
    Then any honest party $P_i$ sets $o_i=m$ in step $5$ since $happy_i=1$. 
    Hence, all honest parties output $m$ when the protocol terminates.
\end{proof}

\begin{lemma}\label{lem:cryptobb:half:agreement}
    All honest parties output the same message.
\end{lemma}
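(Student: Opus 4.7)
The plan is to mirror the proof of Lemma~\ref{lem:cryptoba:half:agreement}, replacing the role of the $k$-bit BA oracle (which ensured a common $z$ in the BA protocol) with the $k$-bit BB oracle invoked by the sender in Step~1. I will split into two cases based on the output of the $1$-bit BA in Step~2.

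First I would handle the trivial case: if the $1$-bit BA in Step~2 outputs $0$, then by its agreement property every honest party sets $o_i = \bot$ in Step~3 and aborts, so all honest parties output the same message $\bot$.

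The substantive case is when the $1$-bit BA in Step~2 outputs $1$. Here I would argue as follows. By the agreement property of the $k$-bit Byzantine broadcast oracle in Step~1, every honest party obtains the same value $z$ in Step~2. By the validity property of the $1$-bit BA, at least one honest party $P_i$ must have input $happy_i = 1$, so that honest party computed $z_i = z$ with $z_i = \texttt{Eval}(ak, \encode(m_i, b))$ for its locally computed $m_i$ (which equals the sender's message as broadcast point-to-point). By Lemma~\ref{lem:crypto:accu_value}, this $m_i$ is the unique message consistent with $z$ (up to negligible probability in $k$), so every honest party that later ends up with $happy_j = 1$ must hold exactly the same $m$, and will correctly set $o_j = m$ in Step~5.

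It remains to show that every honest party $P_j$ with $happy_j = 0$ also recovers and outputs $m$. The happy honest party $P_i$ invokes $\distribute(\fd_i, ak, z)$ in Step~3, sending each $P_j$ a pair $(s_j, w_j)$ with $\verify(ak, z, w_j, \pair{j}) = \text{true}$; by Lemma~\ref{lem:accumulator} the adversary cannot forge a distinct verifying pair, so in Step~4 every honest $P_j$ sees a valid pair and forwards it to all other parties. Hence in Step~5, every honest $P_j$ receives at least $b = n - t$ valid coded values; any Byzantine-supplied pairs failing $\verify$ are erased inside \reconstruct, and the RS code tolerates up to $t$ erasures, so \reconstruct outputs the unique $m$ matching $z$. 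Thus all honest parties output the same message $m$.

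I do not anticipate a real obstacle: the only structural difference from Lemma~\ref{lem:cryptoba:half:agreement} is that the common $z$ arises from the sender's BB invocation rather than from a BA on the parties' own accumulation values, so the validity argument ``some honest party input $1$ to the $1$-bit BA'' remains intact, and the rest of the reconstruction argument carries over verbatim.
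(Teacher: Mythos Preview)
Your proposal is correct and takes essentially the same approach as the paper: the paper's own proof simply states that it is identical to that of Lemma~\ref{lem:cryptoba:half:agreement}, and you have faithfully reproduced that argument while making explicit the one needed adaptation (that the common value $z$ now comes from the agreement property of the $k$-bit BB oracle rather than from the $k$-bit BA oracle). The parenthetical about $m_i$ equaling ``the sender's message as broadcast point-to-point'' is slightly loose (a Byzantine sender could send different messages to different parties), but this does not affect the argument, since all you need is that the happy honest party holds \emph{some} message whose accumulation value matches the agreed-upon $z$.
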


\begin{proof}
The proof is identical to that of Lemma \ref{lem:cryptoba:half:agreement}.
\end{proof}

\begin{theorem}\label{thm:cryptobb:half}
    Protocol Synchronous Crypto. $\frac{n}{2}$-BB satisfies Termination, Agreement and Validity.
    The protocol has communication complexity $O(nl+\fb(k)+\fa(1)+k n^2)$.
\end{theorem}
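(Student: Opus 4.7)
The plan is to prove Theorem \ref{thm:cryptobb:half} by leveraging the already-established lemmas and the similarity of this protocol to Protocol Synchronous Crypto. $\frac{n}{2}$-BA. First, I would observe that Termination is immediate: the BB oracle in Step 1 terminates, the 1-bit BA oracle in Step 2 terminates, and Steps 3--5 involve a bounded number of synchronous rounds of message exchange with no waiting conditions that could block forever.

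Next, for Validity, I would simply cite Lemma \ref{lem:cryptobb:half:validity}, which has already been established: an honest sender's message $m_s$ reaches every honest party in Step 1; the BB oracle's agreement property ensures every honest party outputs the same accumulation value $z_s$; hence every honest $P_i$ sets $happy_i = 1$ and inputs $1$ to the 1-bit BA in Step 2; by BA validity the output is $1$; and each honest party sets $o_i = m_s$ in Step 5. For Agreement, I would cite Lemma \ref{lem:cryptobb:half:agreement}, which explicitly states that its proof is identical to Lemma \ref{lem:cryptoba:half:agreement}. The argument carries over because from Step 2 onward, the two protocols are structurally identical: the 1-bit BA either outputs $0$ (and all honest parties abort with $\bot$) or outputs $1$, in which case some honest party must have been happy and thus holds the message $m$ matching the agreed accumulation value $z$; that party invokes \distribute, and by Lemma \ref{lem:accumulator} and Lemma \ref{lem:crypto:accu_value} together with the RS decoding guarantee, every honest party reconstructs the same $m$ in Step 5.

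For the communication complexity, I would itemize the costs per step. Step 1 contributes $O(nl)$ for the sender shipping $m_s$ to all parties, plus $\fb(k)$ for the $k$-bit BB oracle on $z_s$. Step 2 contributes $\fa(1)$ for the single-bit BA oracle. Steps 3--5 are inherited verbatim from the BA protocol, so their cost is $O(nl + kn^2)$ by the same reasoning as in the proof of Theorem \ref{thm:cryptoba:half}: each invocation of \distribute costs $O(n(l/b + k)) = O(l + kn)$ and there are $O(n)$ such invocations in the worst case, while the forwarding in Step 4 contributes another $O(nl + kn^2)$. Summing yields $O(nl + \fb(k) + \fa(1) + kn^2)$.

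I do not anticipate a real obstacle here, since every nontrivial piece is already pinned down by earlier results: the main work is bookkeeping to verify that the BB-plus-1-bit-BA combination in Steps 1--2 plays exactly the role that the $k$-bit BA played in the BA protocol, and that the cost of swapping them in is additive. The one subtlety worth double-checking is that Lemma \ref{lem:cryptobb:half:agreement}'s appeal to an ``identical'' proof is legitimate: I would verify that the only property of the $k$-bit BA used in Lemma \ref{lem:cryptoba:half:agreement} is the agreed-upon value $z$ at all honest parties together with the existence of some honest party whose $z_i$ equals $z$ whenever the 1-bit BA outputs $1$, and that both of these properties are indeed provided by the BB+1-bit-BA combination here.
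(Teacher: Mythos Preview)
Your proposal is correct and matches the paper's own proof essentially line for line: termination is immediate, validity and agreement are delegated to Lemmas~\ref{lem:cryptobb:half:validity} and~\ref{lem:cryptobb:half:agreement}, and the communication cost is tallied step by step to yield $O(nl+\fb(k)+\fa(1)+kn^2)$. Your additional sanity check that the ``identical proof'' claim in Lemma~\ref{lem:cryptobb:half:agreement} is legitimate is a nice touch but not required.
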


\begin{proof}
    Termination is clearly satisfied.
    By Lemma \ref{lem:cryptobb:half:agreement}, agreement is satisfied.
    By Lemma \ref{lem:cryptobb:half:validity}, validity is satisfied.
    
    Step $1$ has cost $O(nl+\fb(k))$, where $k$ is the size of the cryptographic accumulator.
    Step $2$ has cost $O(\fa(1))$. 
    Step $3$ has cost $O(nl+kn^2)$, since each honest party invokes an instance of \distribute, which leads to an all-to-all communication with each message of size $O(l/b+k)=O(l/n+k)$.
    For step $4$, it also has cost $O(nl+kn^2)$.
    Hence the total cost is $O(nl+\fb(k)+\fa(1)+k n^2)$.
\end{proof}

\section{Error-free Extension Protocols}\label{sec:errorfree}

In this section, we present improved error-free extension protocols for synchronous Byzantine agreement/broadcast, under synchrony (Section \ref{sec:errorfree:sync}) and asynchrony (Section \ref{app:errorfree:async}).
We consider computationally unbounded adversary in this section. 
Without loss of generality, we assume the number of Byzantine parties is $t=\lfloor \frac{n-1}{3}\rfloor$ in this section.

\begin{table}[tb] \centering
\resizebox{\columnwidth}{!}{

\begin{tabular}{|l|l|l|l|l|}
\hline
\textbf{Model}   
& \textbf{Problem}                      
& \begin{tabular}[c]{@{}l@{}}\textbf{Communication} \\ \textbf{Complexity}\end{tabular}            
& \begin{tabular}[c]{@{}l@{}}\textbf{Input range} $l$ \\ \textbf{to reach optimality}\end{tabular}     
& \textbf{Reference} \\ 
\hline
sync. & 
agreement/broadcast
& 
\begin{tabular}[c]{@{}l@{}} 
$O(nl+n^2\fb(1))$\\ 
$O(nl+n\fb(1)+n^3)$ 
\end{tabular} & 
\begin{tabular}[c]{@{}l@{}} 
$\Omega(n^3)$\\ 
$\Omega(n^2)$
\end{tabular} & 
\begin{tabular}[c]{@{}l@{}} 
\cite{ganesh2017optimal, ganesh2016broadcast}\\ 
\textbf{This paper} 
\end{tabular}
 \\
\hline 
async. & 
\begin{tabular}[c]{@{}l@{}}
reliable broadcast\\ reliable broadcast\\ agreement\\agreement
\end{tabular} 
& 
\begin{tabular}[c]{@{}l@{}}
$O(nl+n^2\log n\fb(1))$ \\ 
$O(nl+n\fb(1)+n^3\log n)$ \\ 
$O(nl+n^3\log n\fb(1)+n\fa(1))$ \\ 
$O(nl+n^2\fb(1)+n\fa(1)+n^4\log n)$ 
\end{tabular}
&
\begin{tabular}[c]{@{}l@{}} 
$\Omega(n^3\log n)$\\
$\Omega(n^2\log n)$\\
$\Omega(n^4\log n)$\\
$\Omega(n^3\log n)$
\end{tabular}   &
\begin{tabular}[c]{@{}l@{}} 
\cite{patra2011error,ganesh2017optimal}\\ 
\textbf{This paper}\\
\cite{patra2011error,ganesh2017optimal}\\ 
\textbf{This paper}
\end{tabular}  \\
\hline

\end{tabular}
}
\caption{Error-free Extension Protocols for Byzantine Agreement and Broadcast under $t<n/3$
}\label{table:summary:errorfree}

\end{table}

\subsection{Synchronous Error-free Extension Protocols under $t<\frac{n}{3}$ Faults}\label{sec:errorfree:sync}

For error-free multi-valued synchronous Byzantine agreement, the state-of-art extension protocol has communication complexity $O(nl+n^2\fb(1))=O(nl+n^4)$ \cite{patra2011error, ganesh2017optimal}. 
Here, we propose a protocol under the same setting with improved communication complexity $O(nl+n^3+n\fb(1))=O(nl+n^3)$.
For $l\geq O(n^2)$, our protocol is optimal in communication complexity, and for $l\leq O(n^2)$, our protocol has communication cost $O(n^3)$, which is better than the previous work \cite{patra2011error, ganesh2017optimal} by a factor of $n$.
For the synchronous case, Byzantine broadcast can be constructed from Byzantine agreement with the same asymptotic communication complexity, by first letting the sender send the message to all parties and then perform a Byzantine agreement to reach agreement \cite{lynch1996distributed}. 
Therefore, we only present the Byzantine agreement protocol.

\medskip\stitle{Building Block: $\texttt{STAR}$ protocol \cite{ben1993asynchronous}.}
The following building block is adopted from \cite{ganesh2017optimal}.
A sub-protocol (Figure \ref{fig:star}) called $\texttt{STAR}$ \cite{ben1993asynchronous} is used to find an $(n,t)$-star in a given undirected graph.
\begin{definition}[$(n,t)$-star \cite{ben1993asynchronous}]
    For a given undirected graph $G=(\fp,E)$, an $(n,t)$-star is a pair $(\fc, \fd)$ of sets with $\fc\subseteq \fd\subseteq\fp$ that satisfies 
    \begin{enumerate}
        \item $|\fc|\geq n-2t$, $|\fd|\geq n-t$
        \item There exists an edge $(P_i,P_j)\in E$ for $\forall P_i\in \fc, P_j\in \fd$
    \end{enumerate}
\end{definition}


\begin{figure}[tb]
\begin{mybox}
\begin{numlist}
    \item Let $G=(\fp, E)$ be the input graph. 
    Let $H=\overline{G}=(\fp, \overline{E})$ be the complementary graph of $G$.
    Find a maximum matching in $H$ using any deterministic algorithm such as \cite{blum1990new}. Let $M$ be the matching and $N$ be the set of matched nodes. Let $\overline{N}=\fp \backslash N$.
    \item Compute output as follows:
    \begin{numlist}
        \item Find the set $T=\{ P_i\in \overline{N} ~|~ \exists P_j,P_k \text{ s.t. } (P_j,P_k)\in M \text{ and } (P_i,P_j),(P_i,P_k)\in \overline{E} \}$.
        Let $\fc = \overline{N}\backslash T$.
        \item Find the set $B\subseteq N$ of matched nodes that have neighbors in $\fc$ in $H$. 
        That is, $B=\{P_j\in N~|~ \exists P_k\in \fc \text{ s.t. } (P_j,P_k)\in \overline{E}\}$.
        Let $\fd = \fp \backslash B$.
        \item If $|\fc|\geq 2n-t$ and $|\fd|\geq n-t$, output $(\fc, \fd)$. Otherwise, output $\texttt{noSTAR}$.
    \end{numlist}
\end{numlist}

\end{mybox}

\caption{Protocol $\texttt{STAR}$}

\label{fig:star}
\end{figure}

\stitle{Protocol Synchronous Error-free $\frac{n}{3}$-BA.}
Now we present an improved extension protocol for error-free multi-valued synchronous Byzantine agreement when $t<n/3$, as presented in Figure \ref{fig:syncba}.
The algorithm is heavily inspired by the error-free protocol from \cite{patra2011error, ganesh2017optimal}, and has a similar structure. 
Here we briefly describe each step of the protocol.
Initially, each party divides and encodes its message into $n$ blocks via RS code, and sends blocks to the corresponding party. Then each party compares its block and the block received from others, and constructs a vector to record whether the corresponding blocks are identical.
After all the parties exchange their vectors, each party $P_i$ constructs a graph $G_i$ from which a set $\fe_i$ of parties is derived. Then each party $P_i$ broadcasts whether it has successfully obtained $\fe_i$, and sends $\fe_i$ to all other parties. When there are enough parties successfully obtaining the set, each honest party can extract a correct piece of codeword $maj$ and send to others. From all the codewords received, each honest party is able to reconstruct the message and thus reach an agreement.

\begin{figure}[!t]
\begin{mybox}

    Input of every party $p_i$: An $l$-bit message $m_i$
    
    Primitives: Broadcast oracle for a single bit, $\texttt{STAR}$
    
    Protocol for party $P_i$: 
    \begin{numlist}
    \item Divide the $l$-bit message $m_i$ into $t+1$ blocks, $m_{i0},\cdots,m_{it}$, each has 
    $l/(t+1)$ bits. Compute $(s_{i1},\cdots, s_{in})=\texttt{ENC}(m_{i0},\cdots,m_{it})$. 
    Send $s_{ii}$ to every party. Send $s_{ij}$ to $P_j$ for $j=1,\cdots, n$.
    
    \item Construct a binary vector $v_i$ of length $n$.
    Assign $v_i[j] = 1$, if $s_{ij} = s_{jj}$ and $s_{ii} = s_{ji}$ where $s_{jj}$ and $s_{ji}$ are received from $P_j$. 
    Otherwise assign $v_i[j] = 0$. Send $v_i$ to every party.
    
    \item Construct an undirected graph $G_i$ with parties in $\fp$ as vertices, and add an edge $(P_x,P_y)$ if $v_x[y]=v_y[x]=1$. Invoke $\texttt{STAR}(G_i)$.
    \item
    If $(\fc_i, \fd_i)$ is returned by $\texttt{STAR}$, find $\ff_i$ of size at least $2t+1$ as the set of parties who have at least $t+1$ neighbours in $\fc_i$ in graph $G_i$, and find $\fe_i$ of size at least $2t+1$ as the set of parties who have at least $2t+ 1$ neighbours in $\ff_i$ in graph $G_i$.
    Any party $P_j$ is viewed as its neighbor for the purpose of finding $\fe_i$.
    Obtain the set $\fe_i$ as above if possible, otherwise let $\fe_i=\emptyset$. 
    
    \item Broadcast a single bit of $1$ using the single-bit Byzantine broadcast primitive if $\fe_i\neq \emptyset$. 
    Send $\fe_i$ as an $n$-bit vector to every party.
    Otherwise, broadcast a single bit of $0$ using the single-bit Byzantine broadcast primitive.
    
    \item After the above Byzantine broadcasts finish, perform the following.
    \begin{itemize}
        \item If the above Byzantine broadcast delivers $\geq 2t+1$ $1$'s:
        let $\mathbf{E}$ contain the corresponding set of $\fe$'s that are received by $P_i$.
        For each $\fe_x\in \mathbf{E}$, let $maj_x$ be the value $s_{ji}$ received from the majority of
        the parties in $\fe_x$. 
        That is, $maj_x$ satisfies that $|\{j\in \fe_x ~|~ s_{ji}=maj_x \}|\geq \lceil (\fe_x+1)/2 \rceil$.
        If such a majority does not exist, let $maj_x=\perp$.
        Find a subset $\mathbf{E'}\subseteq \mathbf{E}$, such that $|\mathbf{E'}|\geq t+1$ and for any $\fe_x, \fe_y\in \mathbf{E'}$, $maj_x=maj_y\neq \perp$. Denote the above value as $maj$.
        Send the value $maj$ to every party.
        \item If the above Byzantine broadcast delivers $< 2t+1$ $1$'s:
        agree on some predefined message $m'$ of length $l$ and abort.
    \end{itemize}
        
    \item 
    Let $(maj_1, \cdots , maj_n)$ be the vector where $maj_j$ is received from $P_j$ in the above step. 
    Apply $\texttt{DEC}$ on $(maj_1, \cdots , maj_n)$ with $c = t$ and $d = 0$. 
    Let $m_0, m_1, \cdots , m_t$ be the data returned by $\texttt{DEC}$. Output $m = m_0| \cdots |m_t$.
    \end{numlist}
\end{mybox}
\caption{Protocol Synchronous Error-free $\frac{n}{3}$-BA}
\label{fig:syncba}
\end{figure}

Comparing to the protocol in \cite{patra2011error, ganesh2017optimal}, the novelty of our protocol is that, instead of every party $P_i$ broadcasting a length $n$ vector $v_i$ via Byzantine broadcast in step $2$ which will result in communication complexity $O(n^2\fb(1))=O(n^4)$, we only let each party to send the vector to all other parties and thus reduce the cost to $O(n^3)$.
As a result, different parties may receive different vectors from the Byzantine parties, which leads to different constructions of set $\fc_i,\fd_i,\ff_i,\fe_i$ at different party $P_i$ instead of the identical sets $\fc,\fd,\ff,\fe$ at all parties as in Protocol $\frac{n}{3}$-BA from \cite{patra2011error, ganesh2017optimal}.
How to resolve such conflicting information and still obtain useful information to reconstruct an identical message at all honest parties is the main contribution of our protocol.
The observation is that, as will be shown later in the proofs, honest parties are the majority in any set $\fe_i$ and all have the same message, which can be used to extract enough pieces of identical codewords for reconstructing the message. The above procedure is done in step $6$ and $7$.
Recall that the function \texttt{DEC} is capable to tolerate up to $c$ errors and $d$ erasures in the codewords, if and only if $n-(t+1)\geq 2c+d$.


\begin{lemma}\label{lem:1}
    For any honest party $P_i$, if it obtains nonempty set $\fe_i$ in step $4$ of the protocol, then the honest parties in $\fe_i$ hold the same message of length $l$.
\end{lemma}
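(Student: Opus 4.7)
The plan is to identify a common message $m^*$ carried by every honest member of $\fc_i$ and then to show that every honest member of $\fe_i$ also encodes $m^*$, which yields the lemma. The tool I will use repeatedly is the following simple fact: whenever $P_x,P_y$ are both honest and $(P_x,P_y)$ is an edge in $G_i$, the vectors $v_x[y]$ and $v_y[x]$ are genuinely $1$, which (unpacking the definition of $v$) forces $P_x$'s and $P_y$'s codewords to share the same symbol at positions $x$ and $y$.

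\emph{Stage 1 (common message in $\fc_i$).} Since $|\fd_i|\ge n-t\ge 2t+1$, the set $H_D$ of honest parties in $\fd_i$ has size $\ge t+1$. By the star property every $P_x\in\fc_i$ is adjacent in $G_i$ to every $P_y\in\fd_i$, so for any two honest $P_x,P_{x'}\in\fc_i$ and any honest $P_y\in H_D$ the edges $(P_x,P_y)$ and $(P_{x'},P_y)$ together force both codewords to share $P_y$'s symbol at position $y$. Ranging $P_y$ over the $\ge t+1$ elements of $H_D$ gives $\ge t+1$ positions of agreement between the codewords of $P_x$ and $P_{x'}$, and Reed--Solomon uniqueness (dimension $t+1$) forces them to be identical. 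Thus the honest parties in $\fc_i$ all encode a common message $m^*$.

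\emph{Stage 2 (propagation through $\ff_i$ to $\fe_i$).} The crucial intermediate claim is: for every honest $P_k\in\ff_i$, $P_k$'s codeword at position $k$ equals $m^*$'s codeword at position $k$. Indeed, $P_k$ has $\ge t+1$ neighbors in $\fc_i$, while globally at most $t$ parties are Byzantine, so at least one honest $P_c\in H_C$ with $P_c\neq P_k$ is a neighbor of $P_k$ (this is automatic if $P_k\notin\fc_i$; if $P_k\in\fc_i$ then $|\fc_i|\ge t+2$, and hence $|H_C|\ge|\fc_i|-t\ge 2$, so an other honest member of $\fc_i$ exists and is adjacent to $P_k$ by the star). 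The edge $(P_k,P_c)$, via the component $s_{kk}=s_{ck}$ of $v_k[c]=1$, gives $P_k$'s codeword at $k$ equal to $P_c$'s codeword at $k$, which equals $m^*$'s codeword at $k$ by Stage 1. Now let $P_j$ be any honest party in $\fe_i$. It has $\ge 2t+1$ neighbors in $\ff_i$ (counting itself whenever $P_j\in\ff_i$), so after subtracting the at most $t$ Byzantine neighbors, at least $t+1$ honest neighbors remain. For each honest non-self neighbor $P_k\in\ff_i$, the edge $(P_j,P_k)$ equates $P_j$'s codeword at $k$ with $P_k$'s codeword at $k$, which by the intermediate claim equals $m^*$'s codeword at $k$; if $P_j\in\ff_i$ the self-count additionally contributes position $j$ via the claim applied to $P_j$ itself. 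Collecting these distinct indices yields $\ge t+1$ positions where $P_j$'s codeword agrees with $m^*$'s codeword, and a second appeal to Reed--Solomon uniqueness forces $P_j$'s message to be $m^*$. Hence every honest party in $\fe_i$ encodes $m^*$.

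The step I expect to be the crux is the intermediate claim in Stage 2. A naive attempt might try to show that every honest member of $\ff_i$ already has message $m^*$, but this fails in the adversarial worst case where $|\fc_i|=t+1$ and $|H_C|=1$: an honest $\ff_i$ member may then have only one honest neighbor in $\fc_i$, yielding only a single position of agreement with $m^*$, far from the $t+1$ needed. The right, weaker guarantee is precisely the one above, namely that $P_k$'s codeword matches $m^*$'s codeword at the single position indexed by $P_k$ itself; the $\ge 2t+1$ neighbor requirement built into the definition of $\fe_i$ is exactly the amplification that converts this ``one position per honest neighbor'' guarantee into the $\ge t+1$ agreement positions required to invoke Reed--Solomon uniqueness for $P_j$.
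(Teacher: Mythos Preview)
Your proof is correct and follows essentially the same three-stage structure as the paper: first show the honest members of $\fc_i$ share a common message $m^*$ via $\ge t+1$ agreeing positions through honest $\fd_i$-members, then show each honest $P_k\in\ff_i$ has $s_{kk}$ equal to the $k$-th symbol of $m^*$ via one honest $\fc_i$-neighbor, and finally use the $\ge 2t+1$ $\ff_i$-neighbors of each honest $\fe_i$-member to recover $\ge t+1$ matching positions and invoke RS uniqueness again. Your extra case analysis for $P_k\in\fc_i$ in the intermediate claim is unnecessary (the protocol counts self-neighbors only when forming $\fe_i$, not $\ff_i$, so $P_k$ automatically has an honest non-self neighbor in $\fc_i$), but it does no harm.
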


\begin{proof}
    We show the following properties are satisfied if $P_i$ is able to obtain its $\fe_i$ in step $3$.
    \begin{itemize}
        \item The honest parties in $\fc_i$ hold the same message of length $l$.
        
        By the definition of $(n,t)$-star, $|\fd_i|\geq n-t$ and any two parties $P_j\in \fc_i, P_k\in \fd_i$ are connected by an edge. This implies that $\fd_i$ contains at least $n-2t\geq t+1$ honest parties $\{P_{i_1},P_{i_2},...,P_{i_q}\}$, and every honest party $P_j\in \fc_i$ connects to all those honest parties. Then for any honest party $P_j\in \fc_i$, by definition we have $s_{ji_x}=s_{i_x i_x}$ for all $x=1,2,...,q$ where $q\geq t+1$. This means that the codewords of each honest party in $\fc$ have at least $t+1$ elements in common. Since the codewords used in the protocol are $(n,t+1)$ RS code, all honest parties in $\fc_i$ hold the same message of length $l$. Let the common message be $m$, and let $(s_1,...,s_n)=\texttt{ENC}(m_0,m_1,...,m_t)$ where $m=m_0|m_1|...|m_t$. 
        \item Every honest party $P_j\in \ff_i$ holds $s_j$.
        
        Recall every honest party $P_j\in \ff_i$ has at least $t+1$ neighbors in $\fc_i$, and therefore at least $1$ honest neighbor $P_k$ in $\fc_i$. Since $P_j,P_k$ are neighbors, $s_{jj}$ of $P_j$ equals $s_{kj}$ of $P_k$. Since $P_k$ holds message $m$, $s_{kj}=s_{j}$, which implies that $P_j$ holds $s_j$.
        
        \item The honest parties in $\fe_i$ hold the same message of length $l$.
        
        Recall every honest party $P_i\in \fe$ has at least $2t+1$ neighbors in $\ff$, and therefore at least $t+1$ honest neighbors $\{P_{i_1},P_{i_2},...,P_{i_q}\}$ in $\ff$ where $q\geq t+1$. 
        Since $P_i$ and $P_{i_x}$ are connected, $s_{i i_x}=s_{i_x i_x}$ for $1\leq x\leq q$.
        Recall that $s_{i_x i_x}=s_{i_x}$.
        Therefore the codewords of $P_i$ has at least $t+1$ elements identical to the elements of $(s_1,...,s_n)$, where $(s_1,...,s_n)=\texttt{ENC}(m_0,m_1,...,m_t)$ where $m=m_0|m_1|...|m_t$. 
        Since the codewords used in the protocol are $(n,t+1)$ RS code, all honest parties in $\fe_i$ hold the same message of length $l$. \qedhere
    \end{itemize}
\end{proof}

\begin{lemma}\label{lem:2}
    If all honest parties start with the same input $m$, then every honest party $P_i$ will obtain a set $\fe_i\neq \emptyset$.
\end{lemma}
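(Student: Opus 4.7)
The plan is to show that when every honest party has input $m$, the graph $G_i$ constructed by any honest party $P_i$ contains a large honest clique, which in turn makes \texttt{STAR} return a star structured enough to force the $\ff_i$ and $\fe_i$ constructions to succeed.

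First, I would establish the honest-clique structure of $G_i$. When every honest party has input $m$, each honest $P_j$ encodes to the same codeword $(s_1,\ldots,s_n)$ with $s_{jj}=s_j$ and $s_{jk}=s_k$. For any two honest $P_x,P_y$ we have $s_{xy}=s_{yy}=s_y$ and $s_{yx}=s_{xx}=s_x$, so $v_x[y]=v_y[x]=1$. Since vectors from honest parties are delivered intact between honest parties, the graph $G_i$ contains every pair inside $H$; that is, the set $H$ of honest parties (of size at least $n-t\geq 2t+1$) forms a clique in $G_i$, equivalently an independent set in $\overline{G_i}$.

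Second, I would analyse \texttt{STAR} run on $G_i$. Because $\overline{G_i}$ has no edges inside $H$, every matching edge uses at least one Byzantine vertex; hence any maximum matching $M$ has $|M|\leq t$, the matched set satisfies $|N|\leq 2t$, and $|\overline{N}|\geq n-2t$. A swap argument on the maximality of $M$ (borrowed from the original \texttt{STAR} analysis of Ben-Or-Canetti-Goldreich) rules out honest parties from $T$: for an honest $P_i\in\overline{N}$ to be in $T$ one needs a matched edge $(P_j,P_k)\in M$ with both $P_j,P_k$ Byzantine and $(P_i,P_j),(P_i,P_k)\in\overline{E}$, but then one could replace $(P_j,P_k)$ in $M$ by $(P_i,P_j)$ and extend via $P_k$, producing a strictly larger matching. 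Dually, an honest vertex's only $\overline{G_i}$-neighbours are Byzantine, and the honest parties in $\fc_i$ are honest, so no honest party enters $B$. Consequently $H\cap\overline{N}\subseteq\fc_i$ and $H\subseteq\fd_i$, yielding a valid star with $|\fc_i|\geq n-2t\geq t+1$ and $|\fd_i|\geq n-t$.

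Third, I would verify the $\ff_i$ and $\fe_i$ constructions by counting using the star property and the honest clique. By the star property every party in $\fd_i$ is adjacent in $G_i$ to every party of $\fc_i$, so every honest party of $\fd_i\setminus\fc_i$ has at least $|\fc_i|\geq t+1$ neighbours in $\fc_i$ and therefore lies in $\ff_i$; combining this with the honest members of $\fc_i$ (which one pushes into $\ff_i$ using the additional honest-clique neighbours that land in $\fc_i$), one obtains $H\subseteq\ff_i$, so $|\ff_i|\geq n-t\geq 2t+1$. For $\fe_i$, using the protocol's convention that each party counts itself as a neighbour, any honest $P_j\in H$ has $H$ in its $\ff_i$-neighbourhood, which already has cardinality $\geq 2t+1$; hence $H\subseteq\fe_i$, and in particular $\fe_i\neq\emptyset$ for every honest $P_i$.

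The subtlest step is the second one: one has to show that the specific star returned by \texttt{STAR}, not merely some abstract star, places every honest party in $\fd_i$ and enough honest parties in $\fc_i$. The swap argument relies essentially on the fact that honest parties form an independent set in $\overline{G_i}$; the extra bookkeeping is needed at the tight regime $n=3t+1$, where $|\fc_i|$ may equal $t+1$ exactly and the counting for $\ff_i$ has no slack, forcing one to thread together the honest clique with the star property rather than relying on either alone.
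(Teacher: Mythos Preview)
Your swap argument in the second step is the gap. Honest parties \emph{can} land in $T$: take $n=7$, $t=2$, honest set $H=\{P_1,\dots,P_5\}$, Byzantine $\{P_6,P_7\}$, and let the only $\overline{G_i}$-edges be the triangle $\{(P_6,P_7),(P_1,P_6),(P_1,P_7)\}$. A maximum matching is $M=\{(P_6,P_7)\}$, and then honest $P_1\in\overline{N}$ is in $T$. Your swap replaces $(P_6,P_7)$ by $(P_1,P_6)$, but you cannot ``extend via $P_7$'': $P_7$'s only $\overline{E}$-neighbours are $P_1$ and $P_6$, both now matched, so the matching does not grow. The claim $H\subseteq\fd_i$ is likewise false: with $\overline{E}=\{(P_1,P_6),(P_1,P_7)\}$ and $M=\{(P_1,P_6)\}$, Byzantine $P_7$ sits in $\fc_i$, and honest $P_1\in N$ has the $\overline{E}$-neighbour $P_7\in\fc_i$, so $P_1\in B$ and $P_1\notin\fd_i$. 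Your sentence ``the honest parties in $\fc_i$ are honest, so no honest party enters $B$'' does not exclude Byzantine parties from $\fc_i$, which is what you would need.

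The paper does not attempt either of these containments. Instead it proves the weaker but sufficient fact that at most $t$ honest parties are excluded from $\fc_i$, via a charging argument: an honest party in $N$ is matched to a (distinct) Byzantine partner, and an honest party in $T$ is witnessed by a matching edge whose two endpoints are both Byzantine; maximality of $M$ forces distinct $T$-members to be witnessed by distinct matching edges, so these charges are injective into the Byzantine set. Hence $|H\cap\fc_i|\ge (n-t)-t\ge t+1$. From there the honest clique alone (without any use of $\fd_i$) gives every honest party at least $t+1$ neighbours in $\fc_i$, so $H\subseteq\ff_i$, and then $H\subseteq\fe_i$. Your route through $\fd_i$ and the star adjacency is not needed and, as shown, cannot be made to work.
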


\begin{proof}
    When all honest parties have the same input $m$, they will generate the same codewords. This implies all honest parties will connect to each other in $G_i$, which forms a clique of size $\geq 2t+1$. 

    First we show that $\fc_i$ contains at least $t+1$ honest parties. Recall that in the protocol $\texttt{STAR}$, $\fc_i=(\fp\backslash N)\backslash T$, where $N$ is the set of matched parties in the complementary graph of $G$, and $T$ is the set of parties that connects to both endpoints of a matching.  For any honest party $P_j\in N$ such that $(P_j,P_k)\in M$, it is ensured that $P_k$ is Byzantine since $P_j,P_k$ have conflicting messages. 
    Similarly, for any honest party $P_j\in T$ such that $(P_x,P_y)\in M$, $(P_j,P_x)\in \overline{E}$ and $(P_j, P_y)\in \overline{E}$, both $P_x$ and $P_y$ are ensured to be Byzantine since they have conflicting messages with $P_j$.
    Therefore, for any honest party excluded from $\fc_i=(\fp\backslash N)\backslash T$, there exists at least one corresponding Byzantine party excluded from $\fc_i$ as well. Since there are at most $t$ Byzantine parties, at most $t$ honest parties are excluded from $\fc_i$. Hence $\fc_i$ contains at least $t+1$ honest parties.
    Since $\fc_i$ contains at least $t+1$ honest parties, $\ff_i, \fe_i$ will subsequently contain all honest parties and have size $\geq 2t+1$.
\end{proof}


\begin{lemma}\label{lem:3}
    All honest parties output the same message.
\end{lemma}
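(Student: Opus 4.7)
The plan is a case analysis on the output of the single-bit Byzantine broadcasts in step 5. By the agreement property of the single-bit BB oracle, every honest party observes the same count of $1$'s, so the two branches of step 6 fire uniformly. If fewer than $2t+1$ broadcasts return $1$, every honest party aborts and outputs the common predefined message $m'$, and agreement is immediate.

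In the substantive case, at least $2t+1$ of these bits are $1$. I would identify a canonical message $m^*$ and argue that every honest $P_i$ finishes step 7 with output $m^*$. By Lemma \ref{lem:1}, for each honest sender $P_x$ that broadcast $1$, the honest members of $\fe_x$ share a common input $m^*_x$. The first key step is to show $m^*_x=m^*_y$ across any two such honest senders: since $|\fe_x|,|\fe_y|\geq 2t+1$ each contains at least $t+1$ honest parties, and in the tight regime $n=3t+1$ the honest population is only $2t+1$, so $\fe_x\cap\fe_y$ must contain at least one honest party whose unique input forces $m^*_x=m^*_y=:m^*$. Let $(s^*_1,\ldots,s^*_n)=\texttt{ENC}$ of the $t+1$ blocks of $m^*$.

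Next, I would show every honest $P_i$ computes $maj=s^*_i$. Each honest sender $P_x\in\mathbf{E}$ delivers the genuine $\fe_x$ to $P_i$; within $\fe_x$ the honest members forwarded $s_{ji}=s^*_i$ to $P_i$ back in step 1 and form a strict majority (at most $t$ Byzantine out of $|\fe_x|\geq 2t+1$), so the $maj_x$ computed by $P_i$ equals $s^*_i$. Because at least $t+1$ senders in $\mathbf{E}$ are honest, there is an $\mathbf{E'}$ of size $\geq t+1$ with common value $s^*_i$; and no rival subset of size $\geq t+1$ with a different common value can exist, since the honest senders already contribute $s^*_i$ and only $\leq t$ Byzantine senders remain, too few to form a conflicting subset. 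Every honest $P_i$ therefore unambiguously sends $maj=s^*_i$, and in step 7 the vector $(maj_1,\ldots,maj_n)$ has correct honest entries and at most $t$ Byzantine errors. The $(n,t+1)$ Reed--Solomon decoder with $c=t$, $d=0$ (valid since $n-(t+1)\geq 2t$) recovers $m^*$ identically at every honest party.

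The main obstacle is the consistency argument that pins $m^*_x=m^*_y$ across honest senders; it rests on the intersection bound $|\fe_x\cap\fe_y|\geq 4t+2-n$ together with the upper bound on the honest population, both tight at $n=3t+1$. Once the canonical $m^*$ is isolated, the remaining steps are a straightforward combination of the honest-majority guarantee inside each $\fe_x$ with the standard error-correcting capacity of the Reed--Solomon code.
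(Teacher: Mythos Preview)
Your proposal is correct and follows essentially the same approach as the paper: a case split on the number of $1$'s delivered by the single-bit BB oracles, then in the $\geq 2t+1$ case use Lemma~\ref{lem:1} together with a pigeonhole/intersection argument (both you and the paper implicitly work at $n=3t+1$) to pin down a common message $m^*$, argue that the honest majority inside each honestly-sent $\fe_x$ forces $maj_x=s^*_i$ at every honest $P_i$, and finish with RS decoding against at most $t$ errors. You are in fact slightly more careful than the paper in one spot: you explicitly rule out a rival $\mathbf{E'}$ of size $\geq t+1$ with a different common value (since at most $t$ entries of $\mathbf{E}$ can be Byzantine-supplied), whereas the paper only exhibits one feasible $\mathbf{E'}$ and leaves the uniqueness of $maj$ implicit.
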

\begin{proof}
    After the reliable broadcast of a single bit, all honest parties will deliver an identical set of $\{0,1\}$. Therefore if one honest party delivers $<2t+1$ $1$'s and outputs the predefined message $m'$, all honest parties will output $m'$.

    Now consider the case where all honest parties deliver $\geq 2t+1$ $1$'s for the reliable broadcast of a single bit. This implies that at least $t+1$ honest parties $P_i$ send the set $\fe_i$ to all parties. Denote the above set of honest parties as $\mathcal{H}$. By Lemma \ref{lem:1}, we have all honest parties in $\fe_i$ have the same message $m$ of length $l$. 
    Also, for any two honest parties $P_i,P_j$ above and their set $\fe_i,\fe_j$, we know that $\fe_i\cap\fe_j$ contains at least one honest party, since $|\fe_i|\geq 2t+1$ and  $|\fe_j|\geq 2t+1$. Both facts above imply that all honest parties in $\fe_i\cup \fe_j$ have the same message $m$.

    Since at least $t+1$ honest parties send their $\fe$ set to all parties, the algorithm can find the feasible set $\mathbf{E'}$ that contains all honest parties that are in $\mathcal{H}$. 
    The conditions ``$|\mathbf{E'}|\geq t+1$, for any $\fe_x, \fe_y\in \mathbf{E'}$, $maj_x=maj_y\neq \perp$'' in step $6a$ can be satisfied: 
    Honest parties are the majority in any $\fe_x\in \mathbf{E'}$ since $|\fe_x|\geq 2t+1$, and they have the same message $m$. Thus for any $\fe_x\in \mathbf{E'}$, $maj_x$ is the same.
    Let $(s_1,...,s_n)=\texttt{ENC}(m_0,m_1,...,m_t)$ where $m=m_0|m_1|...|m_t$. 
    We know that $maj_x=s_x$ for all $\fe_x\in \mathbf{E'}$. 
    This implies that $maj_j=s_j$ for all honest party $P_j$ in step $7$.
    Hence at least $2t+1$ values received by any honest party in step $7$ are identical to the corresponding elements in $(s_1,...,s_n)$.
    Since the codewords used in the protocol are $(n,t+1)$ RS code which corrects at most $t$ failures, after step $7$, all honest parties will recover and output the same message $m$.
\end{proof}

\begin{lemma}\label{lem:4}
    If all honest parties start with the same input $m$, then all honest parties output $m$.
\end{lemma}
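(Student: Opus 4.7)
The plan is to chain together the previous lemmas with a direct computation showing that, in the all-honest-same-input case, every honest party drives the protocol into the ``happy path'' of step~6 and recovers the agreed codewords.

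First I would invoke Lemma~\ref{lem:2} to conclude that every honest party $P_i$ obtains $\fe_i \neq \emptyset$, and therefore broadcasts the bit $1$ via the single-bit Byzantine broadcast primitive in step~5. Since there are at least $2t+1$ honest parties, each honest party delivers at least $2t+1$ ones, so every honest party enters the first (non-abort) branch of step~6. This rules out the predefined-$m'$ fallback.

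Next, let $(s_1,\dots,s_n) = \texttt{ENC}(m_0,\dots,m_t)$ be the codeword common to all honest parties, where $m = m_0|\cdots|m_t$. I would observe that for any honest party $P_i$ and any $\fe_x \in \mathbf{E}$ received by $P_i$, the set $\fe_x$ has size $\geq 2t+1$ and therefore contains at least $t+1$ honest parties, forming a strict majority. Each such honest $P_j \in \fe_x$ sent $s_{ji} = s_i$ to $P_i$ in step~1, so the majority value satisfies $maj_x = s_i$. In particular $maj_x \neq \bot$ and is identical across all $\fe_x \in \mathbf{E}$. Hence $P_i$ can select some $\mathbf{E}' \subseteq \mathbf{E}$ of size $\geq t+1$ (e.g., the $\fe_x$'s coming from the $\geq 2t+1$ broadcasters, which must include at least $t+1$ honest ones by Lemma~\ref{lem:2}), and the common value is $maj = s_i$. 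Thus every honest party $P_i$ sends $s_i$ to all parties in step~6.

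Finally, in step~7 each honest party receives a vector $(maj_1,\dots,maj_n)$ in which $maj_j = s_j$ whenever $P_j$ is honest, giving at least $n-t \geq 2t+1$ correct codeword symbols and at most $t$ corrupted ones. Since the code is an $(n,t+1)$ RS code satisfying $n - (t+1) \geq 2t$, invoking $\texttt{DEC}$ with $c=t$, $d=0$ corrects all errors and returns $m_0,\dots,m_t$; concatenating yields the output $m$, as required. I do not expect any real obstacle here: the main step is verifying that the majority inside every $\fe_x$ is honest and agrees on the true codeword, which is an immediate consequence of $|\fe_x|\geq 2t+1$ together with the assumption that all honest inputs coincide.
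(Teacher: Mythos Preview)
Your proof follows the same route as the paper, which simply invokes Lemma~\ref{lem:2} to get $\fe_i\neq\emptyset$ for every honest party, observes that at least $2t+1$ ones are therefore delivered, and then defers to the argument of Lemma~\ref{lem:3}; you have unpacked that deferred argument explicitly for the validity case.

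There is, however, one unjustified step. You assert that \emph{every} $\fe_x \in \mathbf{E}$ received by $P_i$ has size $\geq 2t+1$ and hence an honest majority. This is not guaranteed: $\mathbf{E}$ may contain $\fe_x$'s sent (point-to-point, not via broadcast) by Byzantine parties, and such an $\fe_x$ can be an arbitrary set, e.g.\ a singleton containing a Byzantine party, making $maj_x$ arbitrary. Your conclusion that the common value is $maj=s_i$ is still correct, but the right reason is a pigeonhole: any valid $\mathbf{E}'\subseteq\mathbf{E}$ of size $\geq t+1$ must contain at least one $\fe_x$ sent by an honest broadcaster (there are at most $t$ Byzantine ones), and for that $\fe_x$ your argument does show $maj_x=s_i$; hence whatever $\mathbf{E}'$ the protocol selects, the common value is forced to be $s_i$. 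The paper's proof of Lemma~\ref{lem:3}, to which Lemma~\ref{lem:4} defers, handles this the same way---by exhibiting the honest broadcasters' $\fe$'s as the witnessing $\mathbf{E}'$---and likewise does not spell out why no other $\mathbf{E}'$ can give a different $maj$.
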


\begin{proof}
    When all honest parties start with the same input $m$, by Lemma \ref{lem:2}, every honest party $P_i$ will obtain its set $\fe_i\neq \emptyset$.  Then, all honest parties will broadcast a single bit of $1$, and deliver at least $2t+1$ $1$'s for the broadcast. By the same proof of Lemma \ref{lem:3}, all honest parties will output the same message $m$.
\end{proof}

\begin{theorem}
    Protocol Synchronous Error-free $\frac{n}{3}$-BA satisfies Termination, Agreement and Validity.
    The protocol has communication complexity $O(nl+n^3+n\fb(1))$.
\end{theorem}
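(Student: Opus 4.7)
The plan is to obtain the three correctness properties almost for free from the lemmas already proved, and then devote the bulk of the argument to carefully bookkeeping the per-step communication costs. Termination follows immediately because each step of the protocol either involves a fixed constant number of point-to-point exchanges or a single invocation of the single-bit BB oracle (and the $\texttt{STAR}$ subroutine is local), so after a bounded number of rounds every honest party reaches Step~7 and outputs. Agreement is exactly Lemma~\ref{lem:3}, and Validity is exactly Lemma~\ref{lem:4}. So the only genuine work left is the communication accounting.

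For the communication bound, I would walk through the seven steps in order. In Step~1, each party encodes its $l$-bit input into $n$ codeword symbols of length $l/(t+1)=O(l/n)$ and sends one symbol to each other party, contributing $O(l)$ bits per sender and $O(nl)$ total. In Step~2, each party sends the binary vector $v_i$ of length $n$ to every party, contributing $O(n^3)$ bits in aggregate; this is precisely where our protocol diverges from the prior art, which would have cost $O(n^2\fb(1))$ by broadcasting $v_i$ instead. Steps~3 and~4 are purely local (running $\texttt{STAR}$ and extracting $\ff_i,\fe_i$), so they contribute nothing. In Step~5, each party invokes the single-bit BB oracle once, giving $n\cdot\fb(1)$, and additionally transmits the $n$-bit vector $\fe_i$ to every party, giving another $O(n^3)$. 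In Step~6, each party sends one field element $maj$ of size $O(l/n)$ to every party, contributing $O(nl)$ in total, and Step~7 is local decoding and so costs nothing.

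Summing these contributions yields $O(nl)+O(n^3)+n\fb(1)+O(n^3)+O(nl)=O(nl+n^3+n\fb(1))$, which is the claimed bound.

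I do not foresee a real obstacle here: the correctness arguments have been offloaded to the lemmas, and the complexity bookkeeping is mechanical. The only subtle point to flag, in order to justify that the improvement over prior work is genuine, is that Step~2 uses point-to-point sends rather than BB; Lemmas~\ref{lem:1}--\ref{lem:3} already absorb the resulting possibility of inconsistent $v_i$'s seen by different honest parties, so the complexity analysis can simply report the $O(n^3)$ cost without worrying that it compromises security.
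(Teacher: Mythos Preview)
Your proposal is correct and follows essentially the same approach as the paper: Termination is immediate, Agreement and Validity are delegated to Lemmas~\ref{lem:3} and~\ref{lem:4}, and the communication complexity is obtained by summing the per-step costs (Step~1 $O(nl)$, Step~2 $O(n^3)$, Step~5 $O(n^3+n\fb(1))$, Step~6 $O(nl)$). One minor imprecision: in Step~1 each party actually sends \emph{two} symbols to every other party ($s_{ii}$ to everyone and $s_{ij}$ to $P_j$), not one, but this does not change the $O(nl)$ bound.
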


\begin{proof}
    Termination is clearly satisfied.
    Agreement is proved by Lemma \ref{lem:3}, and Validity is proved by Lemma \ref{lem:4}.

    Step $1$ has communication cost $O(n^2l/(t+1))=O(nl)$. Step $2$ has communication cost $O(n^3)$. Step $5$ has communication cost $O(n^3+n\fb(1))$. 
    Step $6$ has communication cost $O(nl)$.
\end{proof}


\subsection{Asynchronous Error-free Extension Protocols under $t<\frac{n}{3}$ Faults}
\label{app:errorfree:async}

We can extend Protocol Synchronous Error-free $\frac{n}{3}$-BA from the previous Section to an asynchronous reliable broadcast, as presented in Figure \ref{fig:asyncrb:error_free}.
For brevity, we only present the difference.
Since the system is asynchronous and the channel is reliable, each party can only expect to receive the messages from honest parties eventually. Thus, instead of receiving all vectors and then constructing the sets as in the step $3,4$ of the synchronous protocol, each party can only try to construct the sets every time a new message is received as in step $3, 4$.
Similar to the synchronous protocol, the parties can obtain enough pieces of correct codewords to reconstruct the message as in step $6$ and $7$.

\begin{figure}[tb]

\begin{mybox}

    Input of the sender $P_s$: An $l$-bit message $m_s$
    
    Primitive: Asynchronous reliable broadcast oracle for a single bit, $\texttt{STAR}$
    
    Protocol for party $P_i$: 
    \begin{numlist}
    \item If $i=s$, send $m_s$ to every party. Wait until receiving the message $m_i$ from the sender.
    \item Same as step $1$ of Protocol Synchronous Error-free $\frac{n}{3}$-BA. 
    
    \item When receiving $s_{jj}$ and $s_{ji}$ from $P_j$, send $\texttt{OK}(P_i,P_j)$ to every party if $s_{ij} = s_{jj}$.
    Construct an undirected graph $G_i$ with parties in $\fp$ as vertices. Add an edge $(P_x,P_y)$ every time when $\texttt{OK}(P_x,P_y)$ is received from $P_x$ and $\texttt{OK}(P_y,P_x)$ is received from $P_y$. 
    If the edge $(P_x,P_y)$ is new,
    invoke $\texttt{STAR}(G_i)$.
    
    \item Same as step $4$ of Protocol Synchronous Error-free $\frac{n}{3}$-BA.
    \item 
    If the set $\fe_i$ is obtained for the first time, broadcast a single bit of $1$ using the single-bit Byzantine broadcast primitive and send $\fe_i$ to every party. Stop updating $G_i$.
    
    \item When the above reliable broadcast delivers $\geq 2t+1$ $1$'s, perform the step $6(a)$ of Protocol Synchronous Error-free $\frac{n}{3}$-BA.
        
    \item 
    On receiving $2t+1+r$ values $maj_j$'s where $maj_j$ is sent by $P_j$, apply $\texttt{DEC}$ with $c=r$ and $d=t-r$. If  $\texttt{DEC}$ returns `failure', wait for more values.
    If $\texttt{DEC}$ returns the data $m_0, m_1, \cdots , m_t$, output $m = m_0| \cdots |m_t$.
    \end{numlist}
        
\end{mybox}

    \caption{Protocol Asynchronous Error-free $\frac{n}{3}$-RB}
    \label{fig:asyncrb:error_free}
\end{figure}

\begin{lemma}\label{lem:5}
    For any honest party $P_i$, if it obtains nonempty set $\fe_i$ in step $4$ of the protocol, then the honest parties in $\fe_i$ hold the same message of length $l$.
\end{lemma}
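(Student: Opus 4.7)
The plan is to essentially repeat the proof of Lemma~\ref{lem:1} from the synchronous setting, after first verifying that an edge in $G_i$ carries the same semantic meaning in the asynchronous construction. In the synchronous protocol an edge $(P_x,P_y)$ was added when $v_x[y]=v_y[x]=1$, i.e., both $P_x$ and $P_y$ claimed $s_{xy}=s_{yy}$ and $s_{yx}=s_{xx}$. In the asynchronous protocol an edge is added when $P_i$ receives both $\texttt{OK}(P_x,P_y)$ and $\texttt{OK}(P_y,P_x)$, which encode exactly the same two equalities. In particular, when $P_x$ and $P_y$ are both honest, an edge $(P_x,P_y) \in G_i$ truly implies $s_{xy}=s_{yy}$ and $s_{yx}=s_{xx}$, because honest parties only emit faithful $\texttt{OK}$ messages. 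This observation is all we need to recycle the synchronous argument.

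With this in hand, the argument proceeds in three steps, mirroring Lemma~\ref{lem:1}. First, I would argue that the honest parties inside $\fc_i$ all hold the same $l$-bit message. Since $(\fc_i,\fd_i)$ is an $(n,t)$-star, $|\fd_i|\geq n-t$, so $\fd_i$ contains at least $n-2t\geq t+1$ honest parties, each of whom is joined by an edge to every party in $\fc_i$. For any honest $P_j\in\fc_i$, the $t+1$ edges into this honest subset of $\fd_i$ force $s_{j i_x}=s_{i_x i_x}$ for $t+1$ distinct $i_x$, so the codewords of any two honest parties in $\fc_i$ agree in at least $t+1$ coordinates. Because the code is an $(n,t+1)$ Reed--Solomon code, these $t+1$ coordinates determine the entire codeword, and hence a common message $m$ with codeword $(s_1,\ldots,s_n)$.

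Second, for any honest $P_j\in\ff_i$, $P_j$ has at least $t+1$ neighbors in $\fc_i$ and thus at least one honest neighbor $P_k\in\fc_i$. The edge $(P_j,P_k)$ gives $s_{jj}=s_{kj}$, and since $P_k$ holds the message $m$ established above, $s_{kj}=s_j$. Hence $P_j$ holds the correct codeword element $s_j$. Third, for any honest $P_j\in\fe_i$, $P_j$ has at least $2t+1$ neighbors in $\ff_i$, hence at least $t+1$ honest neighbors $P_{i_1},\ldots,P_{i_q}\in\ff_i$ with $q\geq t+1$. Each such edge yields $s_{j\,i_x}=s_{i_x i_x}=s_{i_x}$, so the codeword held by $P_j$ agrees with $(s_1,\ldots,s_n)$ in at least $t+1$ coordinates, and the $(n,t+1)$ Reed--Solomon property again forces $P_j$'s message to be $m$.

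The main subtlety, and really the only place where asynchrony could bite, is that $G_i$ is being built \emph{incrementally} and $\texttt{STAR}$ is re-invoked whenever a new edge arrives; a Byzantine $P_x$ may send spurious $\texttt{OK}(P_x,P_y)$ messages that cause edges incident to itself to appear, distorting $\fc_i,\ff_i,\fe_i$. However, the argument above never relies on edges incident to Byzantine parties reflecting anything truthful: all the counting ($n-2t\geq t+1$ honest parties in $\fd_i$, $t+1$ honest neighbors in $\fc_i$, $t+1$ honest neighbors in $\ff_i$) uses only that honest--honest edges in $G_i$ are genuine, which follows from the $\texttt{OK}$-message semantics. So the proof goes through at whatever moment $P_i$ successfully extracts a nonempty $\fe_i$, regardless of how many times $\texttt{STAR}$ was invoked along the way.
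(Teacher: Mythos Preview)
Your proposal is correct and takes essentially the same approach as the paper, whose proof simply reads ``Same proof of Lemma~\ref{lem:1} applies.'' Your explicit verification that the $\texttt{OK}$-message semantics yield the same honest--honest edge guarantee as the synchronous $v_x[y]=v_y[x]=1$ condition, and your remark that the argument never relies on edges incident to Byzantine parties, are useful elaborations that justify why Lemma~\ref{lem:1} carries over verbatim, but the underlying three-step argument ($\fc_i \to \ff_i \to \fe_i$) is identical.
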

\begin{proof}
    Same proof of Lemma \ref{lem:1} applies.
\end{proof}

\begin{lemma}\label{lem:6}
    If the sender is honest,  then every honest party $P_i$ will eventually obtain a set $\fe_i\neq \emptyset$.
\end{lemma}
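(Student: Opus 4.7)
The plan is to mirror the argument used in Lemma~\ref{lem:2} for the synchronous case, with the added wrinkle that in asynchrony we must argue only \emph{eventual} delivery of the messages that the construction relies on. I would proceed as follows.

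First I would use the reliability of the channels and the honesty of the sender to conclude that every honest party $P_j$ eventually receives the same $m_s$ in step~$1$, and therefore in step~$2$ computes the same codewords $(s_{j1},\ldots,s_{jn}) = \texttt{ENC}(m_{s,0},\ldots,m_{s,t})$. In particular, for any two honest parties $P_j$ and $P_k$, we have $s_{jk} = s_k = s_{kk}$ and $s_{kj} = s_j = s_{jj}$. Hence $P_j$ eventually sends $\texttt{OK}(P_j,P_k)$ and $P_k$ eventually sends $\texttt{OK}(P_k,P_j)$, both of which any honest $P_i$ eventually receives. So for every honest $P_i$ and every pair of honest parties $P_j,P_k$, the edge $(P_j,P_k)$ is eventually added to $G_i$, triggering a fresh invocation of $\texttt{STAR}(G_i)$ in step~$3$.

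Next I would observe that once all edges among honest parties are present in $G_i$, the subgraph of $G_i$ induced by the honest parties contains a clique of size at least $n-t \geq 2t+1$. I would then reuse the graph-theoretic argument from Lemma~\ref{lem:2}: in the complementary graph $\overline{G_i}$, a maximum matching can exclude at most $t$ honest parties from $\fp\setminus N$, and similarly $T$ can contain at most $t$ honest parties, because each honest party excluded from $\fc_i=(\fp\setminus N)\setminus T$ can be charged to at least one distinct Byzantine party. Therefore $\fc_i$ contains at least $n-2t \geq t+1$ honest parties and, since $|\fc_i|\geq n-2t$ and $|\fd_i| \geq n-t$ are satisfied, $\texttt{STAR}(G_i)$ returns a valid $(\fc_i,\fd_i)$ rather than $\texttt{noSTAR}$.

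Finally I would carry this through step~$4$: because $\fc_i$ contains $\geq t+1$ honest parties, every honest party has at least $t+1$ neighbours in $\fc_i$ (they are all mutually connected), so $\ff_i$ contains all honest parties and $|\ff_i|\geq 2t+1$; likewise every honest party has $\geq 2t+1$ honest neighbours in $\ff_i$, so $\fe_i$ contains all honest parties and $|\fe_i|\geq 2t+1$. Hence $P_i$ eventually obtains a nonempty $\fe_i$.

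The main obstacle, and the only real subtlety compared to the synchronous proof, is ensuring that $\texttt{STAR}(G_i)$ is actually re-run once the honest clique has fully materialised; this is guaranteed by the clause in step~$3$ that triggers $\texttt{STAR}$ whenever a new edge is added, together with the fact that finitely many new edges suffice to produce the clique, so after finitely many updates $G_i$ contains the relevant structure and one of the (finitely many) $\texttt{STAR}$ invocations returns the required $(\fc_i,\fd_i)$.
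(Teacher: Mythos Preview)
Your proposal is correct and follows essentially the same approach as the paper: you argue that an honest sender causes all honest parties to eventually share the same codewords, so the honest vertices form a $(\geq 2t+1)$-clique in $G_i$, then reuse the charging argument from Lemma~\ref{lem:2} to conclude $\fc_i$ contains at least $t+1$ honest parties, whence $\ff_i$ and $\fe_i$ each contain all honest parties. Your explicit remark about $\texttt{STAR}$ being re-invoked on each new edge is a useful clarification that the paper leaves implicit.
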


\begin{proof}
The proof of this Lemma is similar to that of Lemma \ref{lem:2}.
If the sender is honest, all honest parties eventually receive the same message $m$ in step $1$, and they will generate the same codewords. This implies all honest parties eventually will be connected to each other in $G_i$, which forms a clique of size $\geq 2t+1$. 
By the same argument from the proof of Lemma \ref{lem:2}, $\fc_i$ contains at least $t+1$ honest parties. 
Since $\fc_i$ contains at least $t+1$ honest parties and honest parties will eventually form a clique of size $\geq 2t+1$ in $G_i$, $\ff_i, \fe_i$ will subsequently contain all honest parties and have size $\geq 2t+1$.
\end{proof}

\begin{lemma}\label{lem:7}
    If some honest party outputs a message $m'$, then every honest party eventually outputs $m'$. 
\end{lemma}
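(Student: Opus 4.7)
The plan is to trace the implications of an honest party $P_i$ having output $m'$, and show that every honest $P_j$ must eventually reach the same output. The argument propagates from the single-bit async reliable broadcast in step 5, through the $maj$ computation in step 6, to the final $\texttt{DEC}$ in step 7, mirroring the structure used in the synchronous proof (Lemma~\ref{lem:3}) but adapted to the fact that different honest parties may observe messages in different orders.

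First, since $P_i$ outputs $m'$, it must have completed step 7, which in particular requires that it delivered $\geq 2t+1$ ones from the single-bit async reliable broadcasts in step 6. By the agreement and termination of each such async reliable broadcast, every honest $P_j$ eventually delivers the same set of values, so every honest $P_j$ likewise sees $\geq 2t+1$ ones and proceeds to step 6. Let $H$ denote the honest parties that broadcast $1$; since at most $t$ Byzantine parties can contribute, $|H| \geq t+1$. Each $P_x \in H$ obtained $\fe_x \neq \emptyset$ in step 4 and sent $\fe_x$ to every party in step 5, so every honest $P_j$ eventually receives $\fe_x$ for every $P_x \in H$.

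Next, I would show that a single message $m^\ast$ is pinned down by the $\fe_x$'s from $H$. By Lemma~\ref{lem:5}, all honest parties in any individual $\fe_x$ share a common message. For any two $P_x,P_y \in H$, since $|\fe_x|,|\fe_y|\geq 2t+1$ and $n \leq 3t+1$, the intersection $\fe_x \cap \fe_y$ contains at least one honest party, which forces the two common messages to coincide. Hence all honest parties in $\bigcup_{P_x \in H}\fe_x$ hold the same $m^\ast$. Applying this to $P_i$'s own computation identifies $m^\ast$ with $m'$: the $maj$-value $P_i$ sent in step 6 was the $i$-th codeword symbol of $m^\ast$, so the only message $P_i$ could have decoded in step 7 is $m^\ast$. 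Now, for any honest $P_j$ and any $P_x \in H$, honest parties form a strict majority of $\fe_x$ and all hold $m'$, so $maj_x$ as computed by $P_j$ equals $s_j = \texttt{ENC}(m')_j$; thus $P_j$ finds a subset $\mathbf{E}'$ of size $\geq t+1$ on which the $maj_x$'s agree with $s_j$, and $P_j$ dispatches the correct $maj = s_j$ in step 6.

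Consequently, every honest $P_j$ eventually sends the correct codeword symbol, so in step 7 each honest party's incoming vector contains at most $t$ erroneous symbols (from Byzantine parties) among the eventually $\geq 2t+1$ honest ones. The main obstacle is arguing that the incremental decoding in step 7 does not prematurely commit to a wrong message while waiting for more values: Byzantine parties may be fast and supply corrupted $maj$'s first. The resolution is that $\texttt{DEC}$ with $c = r$, $d = t-r$ is a bounded-distance RS decoder at radius $r$, so whenever the actual number of errors exceeds $r$ the decoder returns \textit{failure} rather than an incorrect codeword, and the protocol waits for more values and increments $r$. Once $r$ equals the true number of Byzantine errors (at most $t$, at which point $n-(t+1) \geq 2t \geq 2c+d$ still holds), $\texttt{DEC}$ returns the unique codeword corresponding to $m'$, and $P_j$ outputs $m'$, completing the argument.
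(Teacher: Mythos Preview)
Your proposal is correct and follows essentially the same route as the paper's proof: propagate the $\geq 2t+1$ ones via reliable-broadcast agreement, use the intersection argument on the $\fe_x$'s (borrowed from Lemma~\ref{lem:3}) to pin down a single message $m^\ast$, show every honest $P_j$ sends $maj=s_j$, and then argue about incremental bounded-distance RS decoding in step~7. The paper simply cites Lemma~\ref{lem:3} for the middle portion and states the decoder claim more tersely; your version unpacks both parts, but the logic is the same.

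One small presentational point: your sentence ``the $maj$-value $P_i$ sent in step 6 was the $i$-th codeword symbol of $m^\ast$, so the only message $P_i$ could have decoded in step 7 is $m^\ast$'' conflates what $P_i$ \emph{sent} with what $P_i$ \emph{received and decoded}. The correct chain (which you do supply later) is: all honest parties send correct symbols $\Rightarrow$ $P_i$'s received word has at most $t$ errors $\Rightarrow$ the bounded-distance decoder at radius $r\le t$ can only return $m^\ast$'s codeword or failure $\Rightarrow$ since $P_i$ output $m'$, $m'=m^\ast$. The paper places this identification at the very end (``Since one honest party outputs the message $m'$, we have $m=m'$''), which avoids the apparent circularity. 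Reordering that sentence would make your argument cleaner.
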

\begin{proof}
    If some honest party outputs a message $m'$, then the reliable broadcast in step $6$ delivers $\geq 2t+1$ $1$'s. By the definition of reliable broadcast, eventually all honest parties will deliver an identical set of $\{0,1\}$. Therefore eventually, every honest party will deliver $\geq 2t+1$ $1$'s.

    By the same argument from the proof of Lemma \ref{lem:3},  at least $t+1$ honest parties $P_i$ send the set $\fe_i$ to all parties, and for any honest parties $P_i,P_j$ above, all honest parties in $\fe_i\cup \fe_j$ have the same message $m$.
    Then the conditions for the set $\mathbf{E'}$ in step $6$ is satisfied, which leads to identical $maj_x$. 
    Let $(s_1,...,s_n)=\texttt{ENC}(m_0,m_1,...,m_t)$ where $m=m_0|m_1|...|m_t$. We know that $maj_x=s_x$ for all $\fe_x\in \mathbf{E'}$. 
    This implies that $maj_j=s_j$ for all honest party $P_j$ in step $7$.
    On receiving $2t+1+r$ values where $0\leq r\leq t$, party $P_i$ try to decode the message with $c=r$ and $d=t-r\geq 0$ which satisfies that $2t+1+r-(t+1)\geq 2c+d$. If there are more than $r$ corrupted values, $\texttt{DEC}$ will return `failure', and wait for more values. Eventually every honest party $P_i$ will receive enough values to successfully decode $m$. Since one honest party outputs the message $m'$, we have $m=m'$.
\end{proof}

\begin{lemma}\label{lem:8}
    If the sender is honest, all honest parties eventually output the message $m$.
\end{lemma}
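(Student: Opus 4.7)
The plan is to combine Lemma~\ref{lem:6} with the reasoning of Lemma~\ref{lem:7}, and then identify the decoded message with $m$ using the honesty of the sender. Since the sender is honest, by the reliability of the channel every honest party eventually receives $m_s = m$ in Step~$1$, so every honest party eventually executes Step~$2$ on the same input and thus constructs the same Reed--Solomon codewords. By Lemma~\ref{lem:6}, every honest party $P_i$ eventually obtains a nonempty set $\fe_i$ in Step~$4$, and therefore eventually invokes the single-bit asynchronous reliable broadcast with input $1$ in Step~$5$.

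Next I would lift this to delivery: since there are at least $2t+1$ honest parties all broadcasting $1$, the termination and agreement properties of the single-bit reliable broadcast oracle guarantee that every honest party eventually delivers at least $2t+1$ $1$'s. So every honest party enters Step~$6$ via the first case of step~$6(a)$ of Protocol Synchronous Error-free $\frac{n}{3}$-BA. At this point, the argument of Lemma~\ref{lem:7} applies essentially verbatim: there exist at least $t+1$ honest parties that sent their $\fe$-sets, and by Lemma~\ref{lem:5} the honest parties inside any such $\fe_x$ agree on a common message, which must be $m$ itself because honest parties encoded $m$. This yields $\mathbf{E'}$ of size at least $t+1$ such that $maj_x = s_x$ for every $\fe_x \in \mathbf{E'}$, where $(s_1,\dots,s_n) = \texttt{ENC}(m_0,\dots,m_t)$ and $m = m_0|\cdots|m_t$.

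Finally, in Step~$7$, each honest party waits to receive $2t+1+r$ values $maj_j$ and applies $\texttt{DEC}$ with $c=r$ and $d=t-r$. Since the set of honest senders has size at least $2t+1$ and their $maj_j$ values equal $s_j$, eventually enough correct codeword symbols arrive to satisfy the RS decoding condition $2t+1+r-(t+1) \geq 2c+d$, so $\texttt{DEC}$ succeeds and returns $m_0,\dots,m_t$, whence every honest party outputs $m$.

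The main obstacle is the subtle ordering of ``eventually'' statements in the asynchronous model: I must be careful that (i) the graph $G_i$ is updated sufficiently many times before the set $\fe_i$ is frozen in Step~$5$, and (ii) $\texttt{DEC}$ is invoked with values of $r$ that match the actual number of arrivals rather than the worst-case $t$ corruptions. Both points are handled by the ``stop updating $G_i$'' clause together with the fact that Lemma~\ref{lem:6} only requires the eventual appearance of the honest clique, which does occur under reliable channels.
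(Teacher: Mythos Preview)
Your proposal is correct and follows essentially the same approach as the paper: invoke Lemma~\ref{lem:6} to get every honest party to eventually obtain a nonempty $\fe_i$ and broadcast $1$, conclude that $\geq 2t+1$ $1$'s are eventually delivered, and then reuse the argument of Lemma~\ref{lem:7}. Your version is in fact more explicit than the paper's in justifying why the common reconstructed message equals the sender's input $m$ (via the fact that all honest parties encoded $m$), a point the paper's proof leaves implicit.
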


\begin{proof}
    If the sender is honest, by Lemma \ref{lem:6}, every honest party $P_i$ will eventually obtain its set $\fe_i\neq \emptyset$.  Then, all honest parties will broadcast a single bit of $1$, and eventually deliver at least $2t+1$ $1$'s for the broadcast. Then by the same proof of Lemma \ref{lem:7}, all honest parties will output the same message $m$.
\end{proof}

\begin{theorem}
    Protocol Asynchronous Error-free $\frac{n}{3}$-RB satisfies Termination, Agreement and Validity. 
    The protocol has communication complexity $O(nl+n^3\log n+n\fb(1))$.
\end{theorem}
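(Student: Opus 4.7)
The plan is to leverage the correctness lemmas already established and focus the main effort on the communication accounting. For correctness, Validity is immediate from Lemma~\ref{lem:8}: if the sender is honest, every honest party eventually outputs the sender's $m$. Agreement is exactly Lemma~\ref{lem:7}. For Termination, the honest-sender case is subsumed by Lemma~\ref{lem:8}, and the dishonest-sender case is exactly the ``if some honest party outputs, every honest party eventually outputs'' half of Lemma~\ref{lem:7}. So the correctness part of the proof amounts to one sentence per property.

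The communication analysis proceeds step by step. In Step~1, the sender sends the $l$-bit message to each of the $n$ parties, contributing $O(nl)$. In Step~2, every honest party sends a coded symbol of size $l/(t+1)=O(l/n)$ to every other party, again $O(nl)$ in aggregate. Step~3 is the delicate part: each honest $P_i$ sends at most one $\texttt{OK}(P_i,P_j)$ message per peer $P_j$ (triggered the first time the matching pair $s_{jj}, s_{ji}$ arrives), and broadcasts it to all $n$ parties. Each message carries two party identifiers, i.e.\ $O(\log n)$ bits. This yields $O(n^2 \log n)$ bits per honest party and $O(n^3 \log n)$ total. The invocations of \texttt{STAR} are local and incur no communication. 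In Step~5, the $n$ single-bit reliable broadcasts cost $O(n\fb(1))$, and all-to-all dissemination of the $n$-bit vectors $\fe_i$ costs $O(n^3)$. In Step~6, each party sends a single coded symbol of size $O(l/n)$ to all parties, for $O(nl)$ total. Step~7 is local. Summing and absorbing the $O(n^3)$ of Step~5 into $O(n^3 \log n)$ gives the claimed $O(nl + n^3 \log n + n\fb(1))$.

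The main obstacle, modest as it is, is justifying the per-party cap on OK messages in Step~3. The protocol text only says ``send $\texttt{OK}(P_i,P_j)$ \ldots\ if $s_{ij}=s_{jj}$'' upon receiving $s_{jj}, s_{ji}$, so one must argue that this event fires at most once per ordered pair $(i,j)$ (each honest party sends its Step~2 messages once, and duplicates from a Byzantine $P_j$ can be silently dropped after the first). Once this bound is fixed, the $O(n^3 \log n)$ term follows cleanly and dominates every other subcubic contribution except the linear-in-$l$ terms, yielding the theorem.
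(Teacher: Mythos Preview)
Your proposal is correct and follows essentially the same approach as the paper: correctness is deferred to Lemmas~\ref{lem:7} and~\ref{lem:8} (Termination from both, Agreement from Lemma~\ref{lem:7}, Validity from Lemma~\ref{lem:8}), and the communication complexity is tallied step by step with identical per-step bounds. Your justification of the $O(n^3\log n)$ term in Step~3 via the at-most-once $\texttt{OK}(P_i,P_j)$ messages carrying $O(\log n)$-bit identifiers is more explicit than the paper's, but the argument is the same.
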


\begin{proof}
    Termination is proven by Lemma \ref{lem:7} and \ref{lem:8}.
    Agreement is proved by Lemma \ref{lem:7}, and Validity is proved by Lemma \ref{lem:8}.

    Step $1$ and $2$ has communication cost $O(n l+n^2l/(t+1))=O(nl)$. Step $3$ has communication cost $O(n^3\log n)$. Step $5$ has communication cost $O(n^3+n\fb(1))$. 
    Step $6$ has communication cost $O(nl)$.
\end{proof}

\stitle{Error-free extension protocol for asynchronous Byzantine agreements under $t<n/3$.}
\label{app:errorfree:asyn_ba}
From the improved asynchronous error-free reliable broadcast protocol, we can obtain a better asynchronous error-free Byzantine agreement (ABA) protocol, directly via the same construction from \cite{patra2011error}. 
The new protocol has communication complexity $O(nl+n^4\log n+n^2\fb(1)+n\fa(1))$.

\section{A Note on Prior Results}\label{sec:previous_bug}

\begin{figure}[t!]

\begin{mybox}

    Input of every party $P_i$: An $l$-bit message $m_i$
    
    Primitive: 
    Byzantine broadcast oracle for a single bit, cryptographic collision-resistant hash function $\hash$
    
    \textbf{Checking Phase.} Every party $P_i$ does the following:
    \begin{numlist}
        \item Compute a hash of the message $m_i$ as $h_i=\hash(m_i)$ and broadcast $h_i$.
        \item Check if at least $n-t$ broadcasted hashes are equal. If no $n-t$ broadcasted hashes are equal, output $o_i=\bot$ and terminate. Otherwise, let $h$ denote the common hash value broadcasted by at least $n-t$ parties. Then form $\fp_{sm}$ as teh set of parties broadcasting $h$.
    \end{numlist}
    
    \textbf{Agreement Phase.} Every party $P_i$ does the following:
    \begin{numlist}
        \item If $P_i\in \fp_{sm}$, set output message $o_i=m_i$.
        \item Form an injective function from $\fp\setminus \fp_{sm}$ to $\fp_{sm}$, by say, mapping the party with the smallest index in $\fp\setminus \fp_{sm}$ to the party with the smallest index in $\fp_{sm}$, i.e., $\phi:\fp\setminus \fp_{sm} \rightarrow \fp_{sm}$.
        \item If $P_i\in \fp_{sm}$ and $P_i=\phi(P_j)$, then send $o_j$ to $P_j$.
        \item If $P_i\in \fp\setminus \fp_{sm}$ and received a value say, $o_j'$ from $P_j\in \fp_{sm}$ in the previous round such that $P_j=\phi(P_i)$, then check if $\hash(o_j')=h$. If the test passes, set $happy_i=1$ and assign output message $o_i=o_j'$, else set $happy_j=0$. Send $happy_i$ to all parties in $\fp$.
        \item If $P_i\in \fp\setminus \fp_{sm}$ and $happy_i=0$, then construct a set $\confset^i$ consisting of the parties $P_j, \phi(P_j)$ such that $happy_j$ received from $P_j$ in the previous step is $0$ and $P_j$ belongs to $\fp\setminus \fp_{sm}$. Set $\hmsmset^i=\fp\setminus\confset^i$, $d_i=\lceil (|\hmsmset^i|+1)/2\rceil$ and send $d_i$ to all the parties belonging to $\hmsmset^i$ and nothing to all the others.
        \item If $d_j$ is received from $P_j\in \fp\setminus \fp_{sm}$,
        \begin{numlist}
            \item Transform the message $o_i$ into a polynomial over $GF(2^c)$, for $c=\lceil l+1/d_j\rceil$ denoted by $f_i$ with degree $d_j-1$.
            \item Compute the $c$-bit piece $y_i=f_i(i)$, $H_i=(\hash(f_i(1)), \cdots, \hash(f_i(n)))$ and sends $(y_i, H_i)$ to $P_j$.
        \end{numlist}
        \item If $P_i\in \fp\setminus \fp_{sm}$ and $happy_i=0$, check each piece $y_i$ received from each $P_j\in \hmsmset^j$ against the $j$th entry of every hash value vector $H_k$ received from $P_k\in \hmsmset^i$. If at least $d_i$ of the hash values match a piece $y_i$, then accept $y_i$, otherwise reject it. Interpolate the polynomial $f$ from the $d_i$ accepted pieces $y_i$, and compute the message $m$ corresponding to the polynomial $f$. Set $o_i=m$.
        \item Output $o_i$ and terminate.
    \end{numlist}
\end{mybox}

    \caption{Protocol Synchronous Crypto. $\frac{n}{2}$-BA from \cite{ganesh2017optimal}}
    \label{fig:bug}
\end{figure}

The $(\frac{n}{2})$-BA protocol in~\cite{ganesh2017optimal} has a small flaw that the adversary can exploit to increase the communication complexity to $\Omega(n^2l)$.
In this section, we will describe the issue and provide a simple fix. 

For completeness, we provide the original protocol $(\frac{n}{2})$-BA in Figure $3$ of \cite{ganesh2017optimal}.
From the protocol, the step $6$ asks any party to send a $O(\frac{l+1}{d_j}+nk)$-bit message to some party $P_j\in \fp\setminus \fp_{sm}$, if $d_j$ is received from $P_j$.
Since $d_j=\lceil (|\fp_{\text{hmsm}}^j|+1)/2\rceil$, $|\fp_{\text{hmsm}}^j|\geq n-2t$ and $n\geq 2t+1$, it is possible that $d_j= 1$. Therefore $t$ Byzantine parties can send to all other parties the message $d_j=1$ in step $5$, which will trigger all other parties to send back messages each of length $O(\frac{l+1}{1}+nk)=O(l+nk)$ bits. Therefore, step $6$ will have communication complexity $O(tn(l+nk))=O(n^2l+n^3k)$ instead of $O(nl+n^3k)$ as claimed in \cite{ganesh2017optimal}.

Here we provide a simple fix to resolve the issue above.
Basically, we cannot allow Byzantine parties to deceive all other parties by requiring a large block of $O(l)$-bits.
Then the key step of our fix is to change the ``Send $\texttt{happy}_i$ to all parties in $\fp$'' in step $4$ to 
``Broadcast $\texttt{happy}_i$ using the single-bit broadcast oracle''.  After the broadcast, in step $5$, any honest party $P_i$ can construct identical sets $\fp_{\text{conflict}}^i=\fp_{\text{conflict}}$ and $\fp_{\text{hmsm}}^i=\fp_{\text{hmsm}}$, and compute an identical value $d_i=d=\lceil (|\fp_{\text{hmsm}}|+1)/2\rceil$.
Then in step $6$, any honest party $P_i$ in $\fp_{\text{hmsm}}$ will perform the same encoding, and send $(y_i, H_i)$ to all parties in $\fp\setminus \fp_{\text{sm}}$. Rest of the steps remain the same.

To see the communication complexity is correct after the fix, notice that only $|\fp_{\text{hmsm}}|$ honest parties will send a $O(\frac{l+1}{d}+nk)$-bit message to each party in $\fp\setminus \fp_{\text{sm}}$. Since $d=\lceil (|\fp_{\text{hmsm}}|+1)/2\rceil$, we have the communication complexity of step $6$ equals $O(|\fp\setminus \fp_{\text{sm}}|\cdot |\fp_{\text{hmsm}}|(\frac{l+1}{d}+nk))=O(nl+n^3k)$.
Also, the broadcast in step $4$ incurres $O(n\fb(1))$ cost, and rest of the protocol has the same cost, thus in total
$O(nl+n^3k+nk\fb(1))$.

\end{document}